\documentclass[12pt]{article}
%%%%%%%%%%%%%%%%%%%%%%%%%%%%%%%%%%%%%%%%%%%%%%%%%%%%%%%%%%%%%%%%%%%%%%%%%%%%%%%%%%%%%%%%%%%%%%%%%%%%%%%%%%%%%%%%%%%%%%%%%%%%%%%%%%%%%%%%%%%%%%%%%%%%%%%%%%%%%%%%%%%%%%%%%%%%%%%%%%%%%%%%%%%%%%%%%%%%%%%%%%%%%%%%%%%%%%%%%%%%%%%%%%%%%%%%%%%%%%%%%%%%%%%%%%%%
\usepackage{amsmath}
\usepackage{sw20elba}
\usepackage{amsfonts}
\usepackage{amscd}
\usepackage[T1]{fontenc}
\usepackage[utf8]{inputenc}

\setcounter{MaxMatrixCols}{10}
%TCIDATA{OutputFilter=LATEX.DLL}
%TCIDATA{Version=5.50.0.2890}
%TCIDATA{<META NAME="SaveForMode" CONTENT="1">}
%TCIDATA{BibliographyScheme=Manual}
%TCIDATA{Created=Tuesday, January 08, 2019 20:24:14}
%TCIDATA{LastRevised=Wednesday, February 20, 2019 18:09:03}
%TCIDATA{<META NAME="GraphicsSave" CONTENT="32">}
%TCIDATA{<META NAME="DocumentShell" CONTENT="Articles\SW\Elbert Walker's Article">}
%TCIDATA{Language=American English}
%TCIDATA{CSTFile=LaTeX article (bright).cst}

\newtheorem{theorem}{Theorem}

\newtheorem{definition}[theorem]{Definition}

\newtheorem{lemma}[theorem]{Lemma}

\newtheorem{proposition}[theorem]{Proposition}

\newenvironment{proof}[1][Proof]{\noindent\textbf{#1.} }{\ \rule{0.5em}{0.5em}}
\begin{document}

\title{DE DONDER FORM FOR SECOND ORDER GRAVITY}
\author{J\k{e}drzej \'{S}niatycki\thanks{%
Department of Mathematics and Statistics, University of Calgary and
Department of Mathematics and Statistics, University of Victoria. email:
sniatyck@ucalgary.ca} \ and O\u{g}ul Esen\thanks{%
Department of Mathematics, Gebze Technical University, 41400 Gebze, Kocaeli,
Turkey, email: oesen@gtu.edu.tr}}
\date{}
\maketitle

\begin{abstract}
We show that the De Donder form for second order gravity, defined in terms
of Ostrogradski's version of the Legendre transformation applied to all
independent variables, is globally defined by its local coordinate
descriptions. It is a natural differential operator applied to the
diffeomorphism invariant Lagrangian of the theory.
\end{abstract}

\section{Introduction}

In 1929, De Donder formulated an approach to study first order variational
problems for several independent variables in terms of a differential form
obtained by the Legendre transformation in each independent variable \cite%
{dedonder29} and \cite{dedonder35}. The De Donder form is a field theory
analogue of the Poincar\'{e}-Cartan form, which was introduced for a single
independent variable. It is a basis of the multisymplectic formulation of
field theory, which is called also a polysymplectic theory or De Donder-Weyl
theory. The first application of the De Donder form to general relativity in
the Palatini formulation \cite{palatini} was given in \cite{js1970}. For
further developments see \cite{bsf}, \cite{gotay}, \cite{kanatchikov}, \cite%
{szczyrba} and references quoted there.\smallskip

In 1936, Lepage \cite{lepage} constructed a family of forms, each of which
can be used in the same way as the De Donder form to reduce the original
variational problem to a system of equations in exterior differential forms.
In 1977, Aldaya and Azc\'{a}rraga \cite{AA} studied generalizations of the
Lepage construction to higher order Lagrangians, for which they used the
term Poincar\'{e}-Cartan forms. Here, we use the term De Donder form for the
Poincar\'{e}-Cartan form of Aldaya and Azc\'{a}rraga which is obtained from
the Lagrangian by Ostrogradski's generalization of the Legendre
transformation \cite{ostrogradski} in all independent variables.\smallskip\ 

The usual expression for a De Donder form is given in terms of coordinates
on an appropriate jet bundle induced by a coordinate patch on the space of
variables. For a generic Lagrangian, if the number of independent and the
number of dependent variables are greater than 1, this expression depends on
the choice of coordinates. Therefore, it does not define a global form. This
leads to a search for additional geometric structures, which would ensure
global existence of such forms, see \cite{campos}, \cite{de leon - rodrigues}
and references cited there. \smallskip

The aim of this paper is to show that, for second order general relativity
with a diffeomorphism invariant Lagrangian $L$, the coordinate expression
for De Donder form is independent of the choice of coordinates. This implies
that the De Donder construction for second order gravity yields a unique
form $\Theta $, which is given by a natural differential operator applied to
the invariant Lagrangian $L$. Therefore, we can use $\Theta $ to obtain an
invariant multisymplectic formulation of second order gravity for any choice
of invariant Lagrangian. \smallskip

The paper is organized as follows. In Section 2, we present a brief review
of some fundamentals of jet bundles. We exhibit the results obtained in this
work in Section 3, where we also discuss multisymplectic formulation of the
second order gravity. Since our proofs are mainly computational and require
a lot of attention to details, they are presented in Section 4.

\section{Geometric background}

\subsection{Jets}

Let $M$ be a $4$-dimensional manifold representing the space-time of general
relativity, and $N\subset \otimes _{\mathrm{sym}}^{2}T^{\ast }M$ be the
bundle of Lorentzian frames on $M$. We denote the canonical projection by $%
\pi :N\rightarrow M.$ A Lorentzian metric on $M$ is a section $\sigma
:M\rightarrow N$ of $\pi :N\rightarrow M$. If $(x^{\mu })=(x^{1},...,x^{4})$
are local coordinates on $M$ with domain $U$, we denote by $(x^{\mu },y_{\mu
\nu })$ the induced coordinates on $\pi ^{-1}(U)\subset N$. In these
coordinates, the section $\sigma $ restricted to $U$ is given by 
\begin{equation}
\sigma _{\mid U}:U\rightarrow \pi ^{-1}(U):x\mapsto (y_{\mu \nu })=g_{\mu
\nu }(x^{\lambda }),  \label{2.1}
\end{equation}%
where $g_{\mu \nu }(x^{\lambda })=g_{\mu \nu }(x^{1},...,x^{4})$ are smooth
functions of the coordinates $(x^{\lambda })$.\footnote{%
Here, we work in the smooth category. In applicatioon to concrete cases, one
has to choose a suitable function space.} \smallskip

The first\ derivatives of sections form the first jet bundle $J^{1}(M,N)$
with the source projection $\pi ^{1}:J^{1}(M,N)\rightarrow M,$ the target
projection $\pi _{0}^{1}:J^{1}(M,N)\rightarrow N$ and the induced
coordinates $(x^{\mu },y_{\mu \nu },z_{\mu \nu \lambda })$ such that 
\begin{eqnarray}
\pi ^{1} &:&J^{1}(M,N)\rightarrow M:(x^{\mu },y_{\mu \nu },z_{\mu \nu
\lambda })\mapsto (x^{\mu })\text{,}  \label{2.2} \\
\pi _{0}^{1} &:&J^{1}(M,N)\rightarrow N:(x^{\mu },y_{\mu \nu },z_{\mu \nu
\lambda })\mapsto (x^{\mu },y_{\mu \nu }).  \notag
\end{eqnarray}%
The first jet extension of the section $\sigma _{\mid U}:U\rightarrow \pi
^{-1}(U)\subseteq N$ is 
\begin{equation}
j^{1}\sigma _{\mid U}:U\rightarrow (\pi ^{1})^{-1}(U)\subseteq
J^{1}(M,N):x\mapsto (x^{\mu },y_{\mu \nu },z_{\mu \nu \lambda })=(x^{\mu
},g_{\mu \nu }(x),g_{\mu \nu ,\lambda }(x)),  \label{2.3}
\end{equation}%
where 
\begin{equation}
g_{\mu \nu ,\lambda }=\frac{\partial }{\partial x^{\lambda }}g_{\mu \nu
}=\partial _{\lambda }g_{\mu \nu }.  \label{2.4}
\end{equation}%
Similarly, for $k=2,3,...$, \ we have the $k$-jet bundle $J^{k}(M,N)$ with
local coordinates $(x^{\mu },y_{\mu \nu },z_{\mu \nu \lambda
_{1}},...,z_{\mu \nu \lambda _{1}...\lambda _{k}})$, source map $\pi
^{k}:J^{k}(M,N)\rightarrow M$, target map $\pi
_{0}^{k}:J^{k}(M,N)\rightarrow N$ and forgetful maps $\pi
_{l}^{k}:J^{k}(M,N)\rightarrow J^{l}(M,N)$ defined for $k>l\geq 0$. The $k$%
-jet extension of a section $\sigma _{\mid U}:U\rightarrow \pi
^{-1}(U)\subseteq N$ is 
\begin{eqnarray}
j^{k}\sigma _{\mid U} &:&U\rightarrow (\pi ^{k})^{-1}(U)\subseteq J^{k}(M,N)
\label{2.5} \\
&:&x\mapsto (x^{\mu },y_{\mu \nu },z_{\mu \nu \lambda _{1}},...,z_{\mu \nu
\lambda _{1}...\lambda _{k}})=(x^{\mu },g_{\mu \nu }(x),g_{\mu \nu ,\lambda
_{1}}(x),...,g_{\mu \nu ,\lambda _{1}...\lambda _{k}}(x)).\smallskip  \notag
\end{eqnarray}

Local contact forms are 
\begin{equation*}
dy_{\mu \nu }-z_{\mu \nu \lambda }dx^{\lambda },~~~dz_{\mu \nu \lambda
}-z_{\mu \nu \lambda \rho }dx^{\rho },~~~...,~~~dz_{\mu \nu \lambda
_{1}...\lambda _{k-1}}-z_{\mu \nu \lambda _{1}...\lambda _{k}}dx^{\lambda
_{k}},
\end{equation*}%
where summation over repeated indices is assumed. A section $\rho
:M\rightarrow J^{k}(M,N)$ of the source projection $\pi ^{k}$ is said to be
holonomic if it is the $k$-jet extension of $\sigma =\pi _{0}^{k}\circ \rho
:M\rightarrow N$. The importance of local contact forms stems from the fact
that a section $\rho :M\rightarrow J^{k}(M,N)$ is holonomic if and only if
the pull-back of every local contact form by $\rho $ vanishes.

Let $Y$ be a vector field on $N$. For every $k\geq 0$, the local 1-parameter
group $\mathrm{e}^{Yt}$ of local diffeomorphisms of $N$ preserving the
projection map $\pi :N\rightarrow M$ gives rise to a local 1-parameter group 
$\mathrm{e}^{Y^{k}t}$ of local diffeomorphisms of $J^{k}(M,N)$, which
preserve the ideal generated by contact forms, and intertwine forgetful
maps. In other words, the following diagram%
\begin{equation*}
\begin{CD} J^k(M,N)@>e^{Y^k t}>>J^k(M,N) \\ @VV{\pi^k_l}V @VV{\pi^k_l}V \\
J^l(M,N) @>e^{Y^l t}>>J^l(M,N) \end{CD}
\end{equation*}%
commutes for $k>l$. The vector field $X^{k}$ on $J^{k}(M,N)$ is called the
prolongation of $X$ to $J^{k}(M,N)$. For more details on jet bundles see 
\cite{olver}.\smallskip

\subsection{Variational problem}

Let $\Lambda $ be the Lagrange form of the second order gravity. This means
that $\Lambda $ is a semi-basic $4$-form on $J^{2}(M,N)$. In local
coordinates, 
\begin{equation}
\Lambda =L(x^{\mu },y_{\mu \nu },z_{\mu \nu \lambda },z_{\mu \nu \lambda
\rho })dx^{1}\wedge ...\wedge dx^{4},  \label{2.6}
\end{equation}%
where $L(x^{\mu },y_{\mu \nu },z_{\mu \nu \lambda },z_{\mu \nu \lambda \rho
})$ is a scalar density with respect to the transformations of $J^{2}(M,N)$
induced by coordinate transformations in $M$. For the sake of simplicity, we
set 
\begin{equation}
L(x^{\mu },y_{\mu \nu },z_{\mu \nu \lambda },z_{\mu \nu \lambda \rho
})=L(x,y,z)\text{ \ and \ }dx^{1}\wedge ...\wedge dx^{4}=d_{4}x,  \label{2.7}
\end{equation}%
so that equation (\ref{2.6}) reads $\Lambda =L(x,y,z)d_{4}x$.\smallskip

Let $\sigma :M\rightarrow N$ be a section of $\pi :N\rightarrow M$. If $%
U\subseteq M$ has compact closure $\overline{U}$, the action $A_{U}$ on $%
\Lambda $ on $\sigma $ is the integral 
\begin{equation}
A_{U}[\sigma ]=\int_{U}(j^{2}\sigma )^{\ast }\Lambda .  \label{2.8}
\end{equation}%
The section $\sigma $ is a critical point of the action $A_{U}$ if, for
every vector field $Y$ on $N$ tangent to the fibres of the source map $\pi
^{2}:J^{2}(M,N)\rightarrow N$ that vanishes on the boundary of $\pi ^{-1}(U)$
up to second order, 
\begin{equation}
\int_{U}(j^{2}\sigma )^{\ast }\pounds _{Y^{2}}\Lambda =0,  \label{2.9}
\end{equation}%
where $Y^{2}$ is the prolongation of $Y$ to $J^{2}(M,N)$ and $\pounds %
_{Y^{2}}\Lambda $ is the Lie derivative of $\Lambda $ with respect to $Y^{2}$%
. The condition that $Y^{2}$ is the prolongation of $Y$ is equivalent to the
classical condition that variations and derivatives commute.

\section{De Donder form}

Following references \cite{js1970b} and \cite{ss}, we present here the
geometric description of the De Donder construction adapted to the second
order gravity.

\begin{definition}
\label{Definition 3.1}De Donder form corresponding to a Lagrangian $\Lambda
=Ld_{4}x$ on $J^{2}(M,N)$ is a form $\Theta $ on $J^{3}(M,N)$ such that, in
local coordinates $(x^{\mu })$ on $M$, 
\begin{eqnarray}
\Theta &=&\pi _{2}^{3\ast }Ld_{4}x+p^{\mu \nu \alpha \beta }(dz_{\mu \nu
\alpha }-z_{\mu \nu \alpha \gamma }dx^{\gamma })\wedge (\partial _{\beta }%
%TCIMACRO{%
%\TeXButton{lefthook}{{\mbox{$ \rule {5pt} {.5pt}\rule {.5pt} {6pt} \, $}}}}%
%BeginExpansion
{\mbox{$ \rule {5pt} {.5pt}\rule {.5pt} {6pt} \, $}}%
%EndExpansion
d_{4}x)+  \label{3.1} \\
&&+p^{\mu \nu \alpha }(dy_{\mu \nu }-z_{\mu \nu \beta }dx^{\beta })\wedge
(\partial _{\alpha }%
%TCIMACRO{%
%\TeXButton{lefthook}{{\mbox{$ \rule {5pt} {.5pt}\rule {.5pt} {6pt} \, $}}}}%
%BeginExpansion
{\mbox{$ \rule {5pt} {.5pt}\rule {.5pt} {6pt} \, $}}%
%EndExpansion
d_{4}x),  \notag
\end{eqnarray}%
where $p^{\mu \nu \alpha \beta }$ and $p^{\mu \nu \alpha }$ are functions on 
$J^{3}(M,N)$ such that, for every local section $\sigma :M\rightarrow
N:x\mapsto (x,g_{\mu \nu }(x))$ of $\pi ,$ 
\begin{eqnarray}
(j^{3}\sigma )^{\ast }p^{\mu \nu \alpha \beta } &=&(j^{3}\sigma )^{\ast }%
\frac{{\small \partial L}}{{\small \partial z}_{\mu \nu \alpha \beta }},
\label{3.2} \\
(j^{3}\sigma )^{\ast }p^{\mu \nu \alpha } &=&(j^{3}\sigma )^{\ast }\frac{%
{\small \partial L}}{{\small \partial z}_{\mu \nu \alpha }}-\frac{{\small %
\partial }}{{\small \partial x}^{\beta }}\left( (j^{3}\sigma )^{\ast }\frac{%
{\small \partial L}}{{\small \partial z}_{\mu \nu \alpha \beta }}\right) . 
\notag
\end{eqnarray}%
\smallskip
\end{definition}

Equations (\ref{3.1}) and (\ref{3.2}) define a $4$-form $\Theta $ on the
domain of coordinates on $J^{3}(M,N)$ defined by local coordinates on $M.$
For a generic Lagrangian, these local forms do not define a global form on $%
J^{3}(M,N).$The main result of our paper is the following theorem\smallskip

\begin{theorem}
\label{Theorem 3.1}For second order gravity with an invariant Lagrangian
form $Ld_{4}x$ on $J^{2}(M,N)$, equations (\ref{3.1}) and (\ref{3.2}) define
a global De Donder form $\Theta $ on $J^{3}(M,N)$, given by a natural
differential operator applied to $Ld_{4}x$.
\end{theorem}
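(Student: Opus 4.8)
The plan is to show that although the coefficient functions $p^{\mu\nu\alpha\beta}$ and $p^{\mu\nu\alpha}$ appearing in (\ref{3.1})--(\ref{3.2}) are, for a generic Lagrangian, sensitive to the choice of coordinates, the diffeomorphism invariance of $L$ forces exactly the cancellations needed for $\Theta$ to be well defined. First I would make precise the sense in which $L$ is invariant: for every local diffeomorphism $\varphi$ of $M$, the induced prolonged diffeomorphism $\varphi^{2}$ of $J^{2}(M,N)$ satisfies $(\varphi^{2})^{\ast}(L\,d_4x)=L\,d_4x$. Equivalently, at the infinitesimal level, for every vector field $X$ on $M$ with prolongation $X^{2}$ to $J^{2}(M,N)$ one has $\pounds_{X^{2}}(L\,d_4x)=0$. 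I would then translate this into a system of identities on the partial derivatives $\partial L/\partial y_{\mu\nu}$, $\partial L/\partial z_{\mu\nu\alpha}$, $\partial L/\partial z_{\mu\nu\alpha\beta}$ (the ``Bianchi-type'' or Noether identities associated with general covariance), using the explicit formulas for the prolongation coefficients of $X^{2}$ in terms of the components of $X$ and their derivatives up to order two.

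Next I would compute how the two ingredients of $\Theta$ transform under a change of coordinates $x^{\mu}\mapsto \tilde x^{\mu}$. The contact forms $dy_{\mu\nu}-z_{\mu\nu\lambda}dx^{\lambda}$ and $dz_{\mu\nu\alpha}-z_{\mu\nu\alpha\gamma}dx^{\gamma}$ do not transform tensorially on $J^{3}(M,N)$ in general: the second-order contact form picks up an inhomogeneous term proportional to the lower-order contact form, with coefficient built from second derivatives of the coordinate change (this is the standard failure of $z_{\mu\nu\alpha\beta}$ to be a tensor). Likewise the coefficients $p^{\mu\nu\alpha\beta}$ transform as a tensor density (since they equal $\partial L/\partial z_{\mu\nu\alpha\beta}$ along holonomic sections, and $z_{\mu\nu\alpha\beta}$ transforms tensorially modulo contact-form corrections), whereas $p^{\mu\nu\alpha}$ — being defined by the Ostrogradski combination $\partial L/\partial z_{\mu\nu\alpha}-\partial_\beta(\partial L/\partial z_{\mu\nu\alpha\beta})$ — acquires an anomalous inhomogeneous piece under the coordinate change. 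The key computation is to expand $\widetilde\Theta-\Theta$ and check that the anomalous term in $\tilde p^{\mu\nu\alpha}$ contracted against the first-order contact form cancels, wedge-by-wedge, against the term coming from $\tilde p^{\mu\nu\alpha\beta}$ contracted against the inhomogeneous part of the second-order contact form. The algebraic input that makes this cancellation work is precisely one of the Noether identities derived in the first step — morally, the statement that the ``momentum'' $p^{\mu\nu\alpha\beta}$ is symmetric in an appropriate index pair and that its divergence is controlled by general covariance.

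Having shown coordinate independence, I would then argue the second assertion — that $\Theta$ is obtained from $L\,d_4x$ by a natural differential operator. This is essentially a repackaging: the map $L\,d_4x\mapsto\Theta$ is built out of the partial derivatives $\partial L/\partial z_{\mu\nu\alpha}$, $\partial L/\partial z_{\mu\nu\alpha\beta}$ and the total derivative $\partial_\beta$, all of which are natural with respect to coordinate changes on $M$; naturality in the sense of Nijenhuis/Terng then follows from the fact that the construction commutes with the prolongation of diffeomorphisms, which is exactly what the coordinate-independence computation established. I would phrase this by exhibiting $\Theta$ as $\mathcal{D}(L\,d_4x)$ for an operator $\mathcal{D}$ defined fibrewise by the right-hand sides of (\ref{3.1})--(\ref{3.2}) and invoking the invariance just proved to conclude $(\varphi^{3})^{\ast}\mathcal{D}(L\,d_4x)=\mathcal{D}((\varphi^{2})^{\ast}(L\,d_4x))$ for all local diffeomorphisms $\varphi$.

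The main obstacle I anticipate is purely computational but genuinely delicate: correctly bookkeeping the inhomogeneous (non-tensorial) terms in the transformation laws of the second-order contact form and of $p^{\mu\nu\alpha}$ on $J^{3}(M,N)$, keeping track of all index symmetrizations (the pairs $\mu\nu$, and the partial symmetry in the jet indices $\alpha,\beta$), and then matching these against the correct Noether identity. A subtlety worth isolating early is that the identities forced by invariance of $L$ must be used in the form valid on all of $J^{3}(M,N)$ (not merely along holonomic sections), since $\Theta$ lives on $J^{3}(M,N)$; verifying that the relevant identity propagates off the holonomic locus — or, alternatively, showing that $\Theta$ is characterized by its pullback along holonomic sections plus its semi-basic-modulo-contact structure — is the step most likely to need care. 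The gravitational specificity enters only through the hypothesis that $L$ is a scalar density under $\mathrm{Diff}(M)$ acting on metrics; no use is made of any particular Lagrangian, so the argument should apply verbatim to any second-order diffeomorphism-invariant Lagrangian.
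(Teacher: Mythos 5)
Your proposal is correct and follows essentially the same route as the paper's proof (Lemmas \ref{Lemma-4.2} and \ref{Lemma-4.3}): first compute the transformation of the contact forms to read off the inhomogeneous transformation laws that $p^{\mu \nu \alpha }$ and $p^{\mu \nu \alpha \beta }$ must satisfy for the boundary form to be globally defined, then verify that the Ostrogradski momenta of an invariant Lagrangian obey exactly those laws, the anomalous piece of $p^{\mu \nu \alpha }$ cancelling against $p^{\mu \nu \alpha \beta }$ paired with the non-tensorial part of the second-order contact form. The only differences are presentational: the paper works with the finite scalar-density law $L'\det (x_{,\lambda }^{\lambda ^{\prime }})=L$ and the chain rule directly rather than with infinitesimal Noether identities, and your worry about propagating identities off the holonomic locus dissolves because every point of $J^{3}(M,N)$ is the $3$-jet of some section, so functions and identities specified by their pullbacks along all holonomic sections are determined on all of $J^{3}(M,N)$.
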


\noindent \textbf{Proof} of Theorem \ref{Theorem 3.1} is given in Section
4.\smallskip

\subsection{Field equations}

Since $\Theta $ differs from $\pi _{2}^{3\ast }Ld_{4}x$ by terms
proportional to contact forms, for every section $\sigma $ of $\pi
:N\rightarrow M$, 
\begin{equation}
A_{U}[\sigma ]=\int_{U}(j^{2}\sigma )^{\ast }\Lambda =\int_{U}(j^{2}\sigma
)^{\ast }Ld_{4}x=\int_{U}(j^{3}\sigma )^{\ast }(\pi _{2}^{3\ast
}Ld_{4}x)=\int_{U}(j^{3}\sigma )^{\ast }\Theta .  \label{3.3}
\end{equation}%
Thus, replacing in the variational principle the Lagrangian form $\Lambda
=Ld_{4}x$ by the corresponding De Donder form does not change the value of
the action $A_{U}[\sigma ]$.\smallskip

\begin{proposition}
\label{Proposition 3.1} For every vector field $X$ on $J^{3}(M,N)$ tangent
to fibres of the target map $\pi _{0}^{3}:J^{3}(M,N)\rightarrow N$ and every
section $\sigma $ of $\pi :N\rightarrow M$, 
\begin{equation}
(j^{3}\sigma )^{\ast }(X%
%TCIMACRO{%
%\TeXButton{lefthook}{{\mbox{$ \rule {5pt} {.5pt}\rule {.5pt} {6pt} \, $}}}}%
%BeginExpansion
{\mbox{$ \rule {5pt} {.5pt}\rule {.5pt} {6pt} \, $}}%
%EndExpansion
d\Theta )=0.  \label{3.4}
\end{equation}
\end{proposition}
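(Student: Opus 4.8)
The plan is to verify identity (\ref{3.4}) by a direct local computation, using the fact that the form $\Theta$ is globally defined (Theorem \ref{Theorem 3.1}) so that it suffices to check (\ref{3.4}) in any one coordinate chart on $M$. First I would note that a vector field $X$ on $J^{3}(M,N)$ tangent to the fibres of $\pi_{0}^{3}$ has, in adapted coordinates, no $\partial/\partial x^{\mu}$ and no $\partial/\partial y_{\mu\nu}$ components; it is of the form $X = \xi^{\mu\nu\lambda}\partial/\partial z_{\mu\nu\lambda} + \xi^{\mu\nu\lambda\rho}\partial/\partial z_{\mu\nu\lambda\rho}$ with arbitrary coefficient functions. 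So it is enough to prove (\ref{3.4}) for the two families of basic vertical fields $\partial/\partial z_{\mu\nu\lambda}$ and $\partial/\partial z_{\mu\nu\lambda\rho}$, and then invoke linearity over functions together with the fact that pulling back by $j^{3}\sigma$ turns a function coefficient into its composition with $j^{3}\sigma$.

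Next I would compute $d\Theta$ from the coordinate expression (\ref{3.1}). Writing $\Theta = \pi_{2}^{3\ast}L\,d_{4}x + p^{\mu\nu\alpha\beta}\,\theta_{\mu\nu\alpha}\wedge(\partial_{\beta}\lrcorner\, d_{4}x) + p^{\mu\nu\alpha}\,\vartheta_{\mu\nu}\wedge(\partial_{\alpha}\lrcorner\, d_{4}x)$, where $\theta_{\mu\nu\alpha} = dz_{\mu\nu\alpha} - z_{\mu\nu\alpha\gamma}dx^{\gamma}$ and $\vartheta_{\mu\nu} = dy_{\mu\nu} - z_{\mu\nu\beta}dx^{\beta}$ are the contact forms, I would take the exterior derivative term by term. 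The key structural facts are: $d\theta_{\mu\nu\alpha} = -dz_{\mu\nu\alpha\gamma}\wedge dx^{\gamma}$ and $d\vartheta_{\mu\nu} = -dz_{\mu\nu\beta}\wedge dx^{\beta} = -\theta_{\mu\nu\beta}\wedge dx^{\beta}$ (the latter because the $z_{\mu\nu\beta\rho}dx^{\rho}\wedge dx^{\beta}$ piece that would appear is absorbed into the contact form); also $dx^{\gamma}\wedge(\partial_{\beta}\lrcorner\, d_{4}x) = \delta^{\gamma}_{\beta}\,d_{4}x$. Contracting $d\Theta$ with $\partial/\partial z_{\mu\nu\lambda\rho}$ produces terms involving $\partial L/\partial z_{\mu\nu\lambda\rho}$ from $dL$ and $\partial p^{\cdots}/\partial z_{\mu\nu\lambda\rho}$ from the other pieces; contracting with $\partial/\partial z_{\mu\nu\lambda}$ produces terms with $\partial L/\partial z_{\mu\nu\lambda}$, $p^{\mu\nu\lambda}$, and derivatives of $p^{\mu\nu\alpha\beta}$.

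The decisive step is then to pull everything back by $j^{3}\sigma$ and use the defining relations (\ref{3.2}): on a holonomic section the contact forms $\theta$, $\vartheta$ pull back to zero, so any surviving term in $(j^{3}\sigma)^{\ast}(X\lrcorner\, d\Theta)$ must come from the wedge of $X\lrcorner(\text{a contact form})$ with another contact form, or from the "bulk" terms where no contact form survives. For the highest vertical direction $\partial/\partial z_{\mu\nu\lambda\rho}$, the bulk terms cancel precisely because $(j^{3}\sigma)^{\ast}p^{\mu\nu\alpha\beta} = (j^{3}\sigma)^{\ast}\partial L/\partial z_{\mu\nu\alpha\beta}$; for $\partial/\partial z_{\mu\nu\lambda}$, cancellation uses the second line of (\ref{3.2}), namely that $(j^{3}\sigma)^{\ast}p^{\mu\nu\alpha}$ equals $(j^{3}\sigma)^{\ast}\partial L/\partial z_{\mu\nu\alpha}$ minus the total $x^{\beta}$-derivative of $(j^{3}\sigma)^{\ast}p^{\mu\nu\alpha\beta}$ — and the total derivative is exactly what is generated when $d$ hits the $p^{\mu\nu\alpha\beta}\,\theta_{\mu\nu\alpha}\wedge(\partial_{\beta}\lrcorner\,d_{4}x)$ term and one then restricts to the section. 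This is the Euler--Lagrange/Ostrogradski bookkeeping in disguise, and it is the part that requires the most care with index gymnastics and with distinguishing partial derivatives on $J^{3}$ from total derivatives along the section.

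The main obstacle I anticipate is controlling the terms quadratic in contact forms and the terms of the shape $\theta \wedge dx$ that arise from $d\vartheta$: these do not obviously vanish on their own under $(j^{3}\sigma)^{\ast}$ when one of the two contact-form slots is consumed by the contraction with $X$. One must check that such terms either cancel in pairs or combine into further contact-form multiples before the pullback. Once $d\Theta$ is organized as $d\Theta = (\text{Euler--Lagrange expression})\,d_{4}x + (\text{terms each carrying at least one contact factor after contraction})$, contracting with a $\pi_{0}^{3}$-vertical $X$ kills the Euler--Lagrange term (it has no vertical legs) and the remaining terms pull back to zero along the holonomic section, which gives (\ref{3.4}). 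I would close by remarking that, since both sides of (\ref{3.4}) are independent of the coordinate choice by Theorem \ref{Theorem 3.1}, the local verification suffices.
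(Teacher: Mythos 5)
Your proposal is correct and follows essentially the same route as the paper: compute $d\Theta$ in coordinates, contract with a $\pi_{0}^{3}$-vertical field, pull back along the holonomic section so that contact factors die, and observe that the surviving bulk terms cancel by the two defining relations (\ref{3.2}), with the total $x^{\beta}$-derivative of $(j^{3}\sigma)^{\ast}p^{\mu\nu\alpha\beta}$ arising exactly from the $dp^{\mu\nu\alpha\beta}$ term. The only (harmless) omission is the $\partial/\partial z_{\mu\nu\alpha\beta\gamma}$ component of $X$, which the paper carries along but which contributes nothing since it only meets $d\Theta$ through $dp$ factors wedged with contact forms.
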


\begin{proof}
Equation (\ref{3.1}) yields 
\begin{eqnarray}
&&d\Theta =\pi _{2}^{3\ast }dL\wedge d_{4}x-p^{\mu \nu \alpha \beta }dz_{\mu
\nu \alpha \gamma }\wedge dx^{\gamma }\wedge (\partial _{\beta }%
%TCIMACRO{%
%\TeXButton{lefthook}{{\mbox{$ \rule {5pt} {.5pt}\rule {.5pt} {6pt} \, $}}}}%
%BeginExpansion
{\mbox{$ \rule {5pt} {.5pt}\rule {.5pt} {6pt} \, $}}%
%EndExpansion
d_{4}x)  \label{3.5} \\
&&-p^{\mu \nu \alpha }dz_{\mu \nu \beta }\wedge dx^{\beta }\wedge (\partial
_{\alpha }%
%TCIMACRO{%
%\TeXButton{lefthook}{{\mbox{$ \rule {5pt} {.5pt}\rule {.5pt} {6pt} \, $}}}}%
%BeginExpansion
{\mbox{$ \rule {5pt} {.5pt}\rule {.5pt} {6pt} \, $}}%
%EndExpansion
d_{4}x)+dp^{\mu \nu \alpha \beta }\wedge (dz_{\mu \nu \alpha }-z_{\mu \nu
\alpha \gamma }dx^{\gamma })\wedge (\partial _{\beta }%
%TCIMACRO{%
%\TeXButton{lefthook}{{\mbox{$ \rule {5pt} {.5pt}\rule {.5pt} {6pt} \, $}}}}%
%BeginExpansion
{\mbox{$ \rule {5pt} {.5pt}\rule {.5pt} {6pt} \, $}}%
%EndExpansion
d_{4}x)  \notag \\
&&+dp^{\mu \nu \alpha }\wedge (dy_{\mu \nu }-z_{\mu \nu \beta }dx^{\beta
})\wedge (\partial _{\alpha }%
%TCIMACRO{%
%\TeXButton{lefthook}{{\mbox{$ \rule {5pt} {.5pt}\rule {.5pt} {6pt} \, $}}}}%
%BeginExpansion
{\mbox{$ \rule {5pt} {.5pt}\rule {.5pt} {6pt} \, $}}%
%EndExpansion
d_{4}x).  \notag
\end{eqnarray}%
Consider a vector field in form 
\begin{equation*}
X=X_{\mu \nu \alpha \beta \gamma }\frac{{\small \partial }}{{\small \partial
z}_{\mu \nu \alpha \beta \gamma }}+X_{\mu \nu \alpha \beta }\frac{{\small %
\partial }}{{\small \partial z}_{\mu \nu \alpha \beta }}+X_{\mu \nu \alpha }%
\frac{{\small \partial }}{{\small \partial z}_{\mu \nu \alpha }},
\end{equation*}%
and compute the the following 
\begin{eqnarray*}
(j^{3}\sigma )^{\ast }(X%
%TCIMACRO{%
%\TeXButton{lefthook}{{\mbox{$ \rule {5pt} {.5pt}\rule {.5pt} {6pt} \, $}}}}%
%BeginExpansion
{\mbox{$ \rule {5pt} {.5pt}\rule {.5pt} {6pt} \, $}}%
%EndExpansion
d\Theta ) &=&\left( (j^{3}\sigma )^{\ast }X_{\mu \nu \alpha \beta }\right)
(j^{2}\sigma )^{\ast }\frac{{\small \partial L}}{{\small \partial z}_{\mu
\nu \alpha \beta }}d_{4}x+\left( (j^{3}\sigma )^{\ast }X_{\mu \nu \alpha
}\right) (j^{2}\sigma )^{\ast }\frac{{\small \partial L}}{{\small \partial z}%
_{\mu \nu \alpha }}d_{4}x \\
&&-\left( (j^{3}\sigma )^{\ast }p^{\mu \nu \alpha \beta }\right)
(j^{3}\sigma )^{\ast }X_{\mu \nu \alpha \beta }d_{4}x-\left( (j^{3}\sigma
)^{\ast }p^{\mu \nu \alpha }\right) (j^{3}\sigma )^{\ast }X_{\mu \nu \alpha
}d_{4}x \\
&&+\left( (j^{3}\sigma )^{\ast }dp^{\mu \nu \alpha \beta }\right)
(j^{3}\sigma )^{\ast }X_{\mu \nu \alpha }(\partial _{\beta }%
%TCIMACRO{%
%\TeXButton{lefthook}{{\mbox{$ \rule {5pt} {.5pt}\rule {.5pt} {6pt} \, $}}}}%
%BeginExpansion
{\mbox{$ \rule {5pt} {.5pt}\rule {.5pt} {6pt} \, $}}%
%EndExpansion
d_{4}x).
\end{eqnarray*}%
Here, we used the fact that 
\begin{equation*}
(j^{3}\sigma )^{\ast }\left[ \pi _{2}^{3\ast }dL\right] =(j^{2}\sigma
)^{\ast }dL.
\end{equation*}%
Observe that one has 
\begin{equation*}
\left( (j^{3}\sigma )^{\ast }dp^{\mu \nu \alpha \beta }\right) (j^{3}\sigma
)^{\ast }X_{\mu \nu \alpha }(\partial _{\beta }%
%TCIMACRO{%
%\TeXButton{lefthook}{{\mbox{$ \rule {5pt} {.5pt}\rule {.5pt} {6pt} \, $}}}}%
%BeginExpansion
{\mbox{$ \rule {5pt} {.5pt}\rule {.5pt} {6pt} \, $}}%
%EndExpansion
d_{4}x)=\partial _{\beta }\left( (j^{3}\sigma )^{\ast }p^{\mu \nu \alpha
\beta }\right) (j^{3}\sigma )^{\ast }X_{\mu \nu \alpha }d_{4}x
\end{equation*}%
so that 
\begin{eqnarray*}
(j^{3}\sigma )^{\ast }(X%
%TCIMACRO{%
%\TeXButton{lefthook}{{\mbox{$ \rule {5pt} {.5pt}\rule {.5pt} {6pt} \, $}}}}%
%BeginExpansion
{\mbox{$ \rule {5pt} {.5pt}\rule {.5pt} {6pt} \, $}}%
%EndExpansion
d\Theta ) &=&\left( (j^{3}\sigma )^{\ast }X_{\mu \nu \alpha \beta }\right)
(j^{2}\sigma )^{\ast }\frac{{\small \partial L}}{{\small \partial z}_{\mu
\nu \alpha \beta }}d_{4}x+\left( (j^{3}\sigma )^{\ast }X_{\mu \nu \alpha
}\right) (j^{2}\sigma )^{\ast }\frac{{\small \partial L}}{{\small \partial z}%
_{\mu \nu \alpha }}d_{4}x \\
&&-\left( (j^{3}\sigma )^{\ast }p^{\mu \nu \alpha \beta }\right)
(j^{3}\sigma )^{\ast }X_{\mu \nu \alpha \beta }d_{4}x-\left( (j^{3}\sigma
)^{\ast }p^{\mu \nu \alpha }\right) (j^{3}\sigma )^{\ast }X_{\mu \nu \alpha
}d_{4}x \\
&&+[\partial _{\beta }\left( (j^{3}\sigma )^{\ast }p^{\mu \nu \alpha \beta
}\right) ](j^{3}\sigma )^{\ast }X_{\mu \nu \alpha }d_{4}x \\
&=&\left( (j^{3}\sigma )^{\ast }X_{\mu \nu \alpha \beta }\right) \left[
(j^{2}\sigma )^{\ast }\frac{{\small \partial L}}{{\small \partial z}_{\mu
\nu \alpha \beta }}-\left( (j^{3}\sigma )^{\ast }p^{\mu \nu \alpha \beta
}\right) \right] d_{4}x \\
&&+\left( (j^{3}\sigma )^{\ast }X_{\mu \nu \alpha }\right) \left[
(j^{2}\sigma )^{\ast }\frac{{\small \partial L}}{{\small \partial z}_{\mu
\nu \alpha }}-(j^{3}\sigma )^{\ast }p^{\mu \nu \alpha }+\partial _{\beta
}\left( (j^{3}\sigma )^{\ast }p^{\mu \nu \alpha \beta }\right) \right] d_{4}x
\\
&=&0
\end{eqnarray*}%
by equation (\ref{3.2}).\smallskip\ 
\end{proof}

\begin{lemma}
For each vector field $Y$ on $N$, which projects to a vector field on $M$,
and every section $\sigma $ of $\pi :M\rightarrow N$, 
\begin{equation}
\left( j^{3}\sigma \right) ^{\ast }\left( \pounds _{Y^{3}}\left[ \Theta -\pi
_{2}^{3\ast }Ld_{4}x\right] \right) =0,  \label{3.6}
\end{equation}%
where $Y^{3}$ is the prolongation of $Y$ to $J^{3}(M,N).$\smallskip\ 
\end{lemma}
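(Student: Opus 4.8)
The plan is to reduce the Lie derivative statement to the contraction statement already proved in Proposition \ref{Proposition 3.1}, together with the closedness of the relevant pulled-back forms under exterior differentiation. The key observation is that the difference $\Theta - \pi_2^{3\ast} L d_4x$ is, by equation (\ref{3.1}), a sum of terms each of which is proportional to a contact form. Hence, for a holonomic section $j^3\sigma$, the pull-back $(j^3\sigma)^\ast(\Theta - \pi_2^{3\ast} L d_4x)$ vanishes identically, since the pull-back of every local contact form by a jet extension vanishes. The content of the lemma is that this vanishing persists after differentiating along the flow of the prolonged vector field $Y^3$.

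First I would apply Cartan's magic formula, writing $\pounds_{Y^3}(\Theta - \pi_2^{3\ast}Ld_4x) = Y^3 \lrcorner\, d(\Theta - \pi_2^{3\ast}Ld_4x) + d\bigl(Y^3 \lrcorner\,(\Theta - \pi_2^{3\ast}Ld_4x)\bigr)$. Pulling back by $j^3\sigma$ and using that pull-back commutes with $d$, the second term becomes $d\bigl((j^3\sigma)^\ast(Y^3 \lrcorner\,(\Theta - \pi_2^{3\ast}Ld_4x))\bigr)$. Since $\Theta - \pi_2^{3\ast}Ld_4x$ is a sum of contact-form multiples and $Y^3$ is the prolongation of a vector field $Y$ on $N$ projecting to $M$, the contraction $Y^3 \lrcorner\,(\Theta - \pi_2^{3\ast}Ld_4x)$ is again a sum of terms each proportional either to a contact form or to the one-forms $\partial_\alpha \lrcorner\, d_4x$ contracted against the $Y$-components; the crucial point is that the purely contact part pulls back to zero, and I would check that the remaining part also pulls back to zero because $Y^3$ respects the contact ideal — equivalently, because $Y$ is projectable, the components of $Y^3$ in the $\partial/\partial z$-directions are the prolonged (total-derivative) expressions, and contracting $\partial_\alpha \lrcorner\, d_4x$ with a contact-form wedge still leaves a contact factor. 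So the second term vanishes after pull-back.

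For the first term, I would split $d(\Theta - \pi_2^{3\ast}Ld_4x) = d\Theta - \pi_2^{3\ast}(dL \wedge d_4x)$ and pull back the contraction by $j^3\sigma$. Writing $Y^3 = Y^3_{\text{base}} + Y^3_{\text{vert}}$ where $Y^3_{\text{vert}}$ is tangent to the fibres of $\pi_0^3: J^3(M,N)\to N$ and $Y^3_{\text{base}}$ is $\pi_0^3$-related to $Y$ on $N$, the vertical part contributes $(j^3\sigma)^\ast(Y^3_{\text{vert}} \lrcorner\, d\Theta)$, which vanishes by Proposition \ref{Proposition 3.1} (noting $Y^3_{\text{vert}} \lrcorner\, \pi_2^{3\ast}(dL\wedge d_4x) = 0$ since that form is the pull-back of a form on $J^2$ while $Y^3_{\text{vert}}$ is $\pi_2^3$-vertical in the fibre of $\pi_0^3$... more carefully, the $z_{\mu\nu\alpha\beta\gamma}$, $z_{\mu\nu\alpha\beta}$, $z_{\mu\nu\alpha}$ components survive and these are exactly the directions treated in Proposition \ref{Proposition 3.1}). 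The base part $Y^3_{\text{base}} \lrcorner\, d\Theta$ requires using equation (\ref{3.5}) for $d\Theta$ and the explicit form of the prolongation $Y^3$ of a projectable $Y$; here one uses that $Y^3_{\text{base}}$ is built from the $\partial/\partial x^\mu$ and $\partial/\partial y_{\mu\nu}$ components of $Y$ together with their prolongations, and that contracting these against the contact-form factors in (\ref{3.5}) produces, upon pull-back by $j^3\sigma$, total-derivative combinations that reassemble into $(j^3\sigma)^\ast \pi_2^{3\ast}(dL\wedge d_4x)$ contracted with the same base vector — precisely cancelling the subtracted term.

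The main obstacle I anticipate is the bookkeeping in the last step: verifying that the pull-back of $Y^3_{\text{base}} \lrcorner\, d\Theta$ exactly matches the pull-back of $Y^3_{\text{base}} \lrcorner\, \pi_2^{3\ast}(dL\wedge d_4x)$. This is essentially the statement that $d\Theta$ and $\pi_2^{3\ast}(dL\wedge d_4x)$ differ, after pull-back by holonomic sections, only by contact terms that are annihilated — which is a restatement of the fact that $\Theta$ is a Lepage-type equivalent of the Lagrangian. I would handle it by computing $(j^3\sigma)^\ast$ of each line of (\ref{3.5}) separately, using the defining relations (\ref{3.2}) and the identity $(j^3\sigma)^\ast(dp^{\mu\nu\alpha\beta}) = \partial_\gamma\bigl((j^3\sigma)^\ast p^{\mu\nu\alpha\beta}\bigr)dx^\gamma$ already exploited in the proof of Proposition \ref{Proposition 3.1}, and then observing the telescoping cancellation. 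The projectability hypothesis on $Y$ is what makes $Y^3$ preserve the contact ideal, which is the structural reason the argument closes.
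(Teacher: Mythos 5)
The paper itself does not prove this lemma (it defers to Lemma 3 of \cite{ss}), but your proposed route contains a genuine error: the two pieces produced by Cartan's formula do \emph{not} vanish separately after pull-back by $j^{3}\sigma$; only their sum does. Write $\Xi=\Theta-\pi_{2}^{3\ast}Ld_{4}x$ and $\iota_{Y^{3}}$ for contraction with $Y^{3}$. Contracting $Y^{3}$ into a term such as $p^{\mu\nu\alpha}\,\theta_{\mu\nu}\wedge\omega_{\alpha}$, with $\theta_{\mu\nu}=dy_{\mu\nu}-z_{\mu\nu\beta}dx^{\beta}$ and $\omega_{\alpha}$ the contraction of $\partial_{\alpha}$ into $d_{4}x$, leaves, besides terms still carrying the factor $\theta_{\mu\nu}$, the term $p^{\mu\nu\alpha}(\iota_{Y^{3}}\theta_{\mu\nu})\,\omega_{\alpha}$; here $\iota_{Y^{3}}\theta_{\mu\nu}=Y_{\mu\nu}-z_{\mu\nu\lambda}Y^{\lambda}$ is a \emph{function}, not an element of the contact ideal, and it is generically nonzero along $j^{3}\sigma$ (take $Y$ purely $\pi$-vertical, so that $\iota_{Y^{3}}\theta_{\mu\nu}=Y_{\mu\nu}$). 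Hence $(j^{3}\sigma)^{\ast}(\iota_{Y^{3}}\Xi)\neq 0$, and your ``second term'' $d\bigl((j^{3}\sigma)^{\ast}(\iota_{Y^{3}}\Xi)\bigr)$ does not vanish. The mechanics analogue makes this concrete: for $\Xi=p\,(dq-\dot q\,dt)$ and $Y=\partial/\partial q$ (so $Y^{1}=\partial/\partial q$), a curve $\gamma$ gives $(j^{1}\gamma)^{\ast}d(\iota_{Y^{1}}\Xi)=\dot P\,dt$ while $(j^{1}\gamma)^{\ast}(\iota_{Y^{1}}d\Xi)=-\dot P\,dt$, with $P=(j^{1}\gamma)^{\ast}p$: each Cartan piece is nonzero and the cancellation occurs only \emph{between} the two. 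Consequently your argument for the first piece (that the pull-back of $\iota_{Y^{3}}d\Theta$ reassembles so as to cancel $\iota_{Y^{3}}\pi_{2}^{3\ast}(dL\wedge d_{4}x)$ exactly) cannot be correct either, since if both pieces vanished the computation above would contradict the lemma.

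The statement has a much shorter proof that bypasses Cartan's formula and uses only the fact, already recorded in Section 2.1, that the prolonged flow $\mathrm{e}^{Y^{3}t}$ of a projectable vector field $Y$ preserves the ideal generated by the contact forms. Infinitesimally, $\pounds_{Y^{3}}\theta$ is again a combination of contact forms for every contact form $\theta$; since every summand of $\Xi$ carries a contact factor, the Leibniz rule shows that every summand of $\pounds_{Y^{3}}\Xi$ still carries a contact factor, and the pull-back of anything in the contact ideal by the holonomic section $j^{3}\sigma$ vanishes, which is equation (\ref{3.6}). (Equivalently: $\mathrm{e}^{Y^{3}t}\circ j^{3}\sigma$ coincides, up to the reparametrization of $M$ by the flow of the projection of $Y$, with the third jet extension of the transported section, so $(j^{3}\sigma)^{\ast}(\mathrm{e}^{Y^{3}t})^{\ast}\Xi=0$ for all $t$, and one differentiates at $t=0$.) This is in substance the argument of Lemma 3 of \cite{ss}. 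If you insist on the Cartan decomposition, you must track the two pieces together and exhibit their mutual cancellation; you cannot kill them one at a time.
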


\begin{proof}
See Lemma 3 in reference \cite{ss}.\smallskip\ 
\end{proof}

Taking these results into account and using Stokes' Theorem, we can rewrite
equation (\ref{2.9}) in the form 
\begin{eqnarray}
\int_{U}(j^{2}\sigma )^{\ast }\pounds _{Y^{2}}\Lambda
&=&\int_{U}(j^{2}\sigma )^{\ast }[\pounds _{Y^{2}}\left( Ld_{4}x)\right)
]=\int_{U}(j^{2}\sigma )^{\ast }[\pounds _{Y^{3}}\left( \pi _{2}^{3\ast
}Ld_{4}x)\right) ]  \notag \\
&=&\int_{U}(j^{3}\sigma )^{\ast }[\pounds _{Y^{3}}\Theta
]=\int_{U}(j^{3}\sigma )^{\ast }\left[ Y^{3}%
%TCIMACRO{%
%\TeXButton{lefthook}{{\mbox{$ \rule {5pt} {.5pt}\rule {.5pt} {6pt} \, $}}}}%
%BeginExpansion
{\mbox{$ \rule {5pt} {.5pt}\rule {.5pt} {6pt} \, $}}%
%EndExpansion
d\Theta +d\left( Y^{3}%
%TCIMACRO{%
%\TeXButton{lefthook}{{\mbox{$ \rule {5pt} {.5pt}\rule {.5pt} {6pt} \, $}}}}%
%BeginExpansion
{\mbox{$ \rule {5pt} {.5pt}\rule {.5pt} {6pt} \, $}}%
%EndExpansion
d\Theta \right) \right]  \notag \\
&=&\int_{U}(j^{3}\sigma )^{\ast }\left[ Y^{3}%
%TCIMACRO{%
%\TeXButton{lefthook}{{\mbox{$ \rule {5pt} {.5pt}\rule {.5pt} {6pt} \, $}}}}%
%BeginExpansion
{\mbox{$ \rule {5pt} {.5pt}\rule {.5pt} {6pt} \, $}}%
%EndExpansion
d\Theta \right] +\int_{\partial U}(j^{3}\sigma )^{\ast }\left[ \left( Y^{3}%
%TCIMACRO{%
%\TeXButton{lefthook}{{\mbox{$ \rule {5pt} {.5pt}\rule {.5pt} {6pt} \, $}}}}%
%BeginExpansion
{\mbox{$ \rule {5pt} {.5pt}\rule {.5pt} {6pt} \, $}}%
%EndExpansion
d\Theta \right) \right]  \notag \\
&=&\int_{U}(j^{3}\sigma )^{\ast }Y^{3}%
%TCIMACRO{%
%\TeXButton{lefthook}{{\mbox{$ \rule {5pt} {.5pt}\rule {.5pt} {6pt} \, $}}}}%
%BeginExpansion
{\mbox{$ \rule {5pt} {.5pt}\rule {.5pt} {6pt} \, $}}%
%EndExpansion
d\Theta =0,  \label{3.7}
\end{eqnarray}%
where $\partial U$ is the boundary of $U$, and the integral over the
boundary vanishes because we assume that $Y$ vanishes on $\partial U$ to
second order. Proposition \ref{Proposition 3.1} implies that in equation (%
\ref{3.7}) we may replace the prolongation $Y^{3}$ of a vector field $Y$ on $%
N$ tangent to fibres of $\pi :M\rightarrow N$ by arbitrary vector field $X$
on $J^{3}(M,N)$ tangent to fibres of the source map $\pi
^{3}:J^{3}(M,N)\rightarrow M$ and vanishes on $(\pi ^{3})^{-1}(\partial U).$
Therefore, the variational principle (\ref{2.9}) is equivalent to 
\begin{equation}
\int_{U}(j^{3}\sigma )^{\ast }\left[ X%
%TCIMACRO{%
%\TeXButton{lefthook}{{\mbox{$ \rule {5pt} {.5pt}\rule {.5pt} {6pt} \, $}}}}%
%BeginExpansion
{\mbox{$ \rule {5pt} {.5pt}\rule {.5pt} {6pt} \, $}}%
%EndExpansion
d\Theta \right] =0,  \label{3.8}
\end{equation}%
where $X$ is an arbitrary vector field on $J^{3}(M,N)$ tangent to fibres of
the source map $\pi ^{3}:J^{3}(M,N)\rightarrow M$. The Fundamental Theorem
in the Calculus of Variations ensures that the variational principle (\ref%
{3.8}) is equivalent to 
\begin{equation}
(j^{3}\sigma _{\mid U})^{\ast }\left[ X%
%TCIMACRO{%
%\TeXButton{lefthook}{{\mbox{$ \rule {5pt} {.5pt}\rule {.5pt} {6pt} \, $}}}}%
%BeginExpansion
{\mbox{$ \rule {5pt} {.5pt}\rule {.5pt} {6pt} \, $}}%
%EndExpansion
d\Theta \right] =0  \label{3.9}
\end{equation}%
for every vector field $X$ on $J^{3}(M,N)$ tangent to fibres of the source
map $\pi ^{3}:J^{3}(M,N)\rightarrow M.$ Equation (\ref{3.9}) is the De
Donder equation for the second order gravity with invariant Lagrangian $L$.
\smallskip

We can show directly that equation (\ref{3.9}) is a system of equations in
differential forms equivalent to the Euler-Lagrange equations corresponding
to $L$. Let 
\begin{equation*}
X=X_{\mu \nu \alpha \beta \gamma }\frac{\partial }{\partial z_{\mu \nu
\alpha \beta \gamma }}+X_{\mu \nu \alpha \beta }\frac{\partial }{\partial
z_{\mu \nu \alpha \beta }}+X_{\mu \nu \alpha }\frac{\partial }{\partial
z_{\mu \nu \alpha }}+X_{\mu \nu }\frac{\partial }{\partial y_{\mu \nu }}
\end{equation*}%
be a vector field tangent to fibres of the source map, and let $\sigma
:(x^{\lambda })\mapsto g_{\mu \nu }(x^{\lambda })$ be a section of $\pi
:N\rightarrow M$. Introducing the notation 
\begin{equation}
P^{\mu \nu \alpha \beta }=(j^{3}\sigma )^{\ast }p^{\mu \nu \alpha \beta 
\text{ }}\text{ and \ }P^{\mu \nu \alpha }=(j^{3}\sigma )^{\ast }p^{\mu \nu
\alpha \text{ }}.  \label{3.5x}
\end{equation}%
we can write the left hand side of equation (\ref{3.4}) in the from 
\begin{eqnarray*}
&&(j^{3}\sigma )^{\ast }\left[ X%
%TCIMACRO{%
%\TeXButton{lefthook}{{\mbox{$ \rule {5pt} {.5pt}\rule {.5pt} {6pt} \, $}}}}%
%BeginExpansion
{\mbox{$ \rule {5pt} {.5pt}\rule {.5pt} {6pt} \, $}}%
%EndExpansion
d\Theta \right] \\
&=&\left[ X_{\mu \nu \alpha \beta }\left( (j^{2}\sigma )^{\ast }\frac{%
\partial L}{\partial z_{\mu \nu \alpha \beta }}-P^{\mu \nu \alpha \beta
}\right) +X_{\mu \nu \alpha }\left( (j^{2}\sigma )^{\ast }\frac{\partial L}{%
\partial z_{\mu \nu \alpha }}-P^{\mu \nu \alpha }\right) \right]
d_{4}x \\
&&+\left[X_{\mu \nu}(j^{2}\sigma )^{\ast }\frac{\partial L}{\partial y_{\mu \nu }}\right]
d_{4}x -\left[ X_{\mu \nu \alpha }P_{,\beta }^{\mu \nu \alpha \beta }+X_{\mu \nu
}P_{,\alpha }^{\mu \nu \alpha }\right] d_{4}x.
\end{eqnarray*}%
Since components of $X$ are arbitrary, equation (\ref{3.4}) reads%
\begin{eqnarray}
(j^{2}\sigma )^{\ast }\frac{\partial L}{\partial z_{\mu \nu \alpha \beta }}%
-P^{\mu \nu \alpha \beta } &=&0,  \label{3.5e} \\
(j^{2}\sigma )^{\ast }\frac{\partial L}{\partial z_{\mu \nu \alpha }}-P^{\mu
\nu \alpha }-P_{,\beta }^{\mu \nu \alpha \beta } &=&0,  \label{3.5f} \\
(j^{2}\sigma )^{\ast }\frac{\partial L}{\partial y_{\mu \nu }}-P_{,\alpha
}^{\mu \nu \alpha } &=&0.  \label{3.5g}
\end{eqnarray}%
Equation (\ref{3.5e}) is the definition of $P^{\mu \nu \alpha \beta }$,
equation (\ref{3.5f}) is the definition of $P^{\mu \nu \alpha }$, while
equation (\ref{3.5g}) is equivalent to the Euler-Lagrange equations 
\begin{equation}
(j^{2}\sigma )^{\ast }\frac{\partial L}{\partial y_{\mu \nu }}-\frac{%
\partial }{\partial x^{\alpha }}\left( (j^{2}\sigma )^{\ast }\frac{\partial L%
}{\partial z_{\mu \nu \alpha }}\right) +\frac{\partial ^{2}}{\partial
x^{\alpha }\partial x^{\beta }}\left( (j^{2}\sigma )^{\ast }\frac{\partial L%
}{\partial z_{\mu \nu \alpha \beta }}\right) =0.  \label{3.5h}
\end{equation}%
\smallskip

\subsection{Example: Hilbert's Lagrangian}

Hilbert's Lagrangian of general relativity, expressed in terms of local
coordinates, is \ 
\begin{equation}
L_{\mathrm{Hilbert}}=R[g]\sqrt{-\det g},  \label{3.9a}
\end{equation}%
where $R[g]$ is the scalar curvature of the Lorentzian metric $g$. Since $L_{%
\mathrm{Hilbert}}$ depends linearly on second derivatives of the metric, the
corresponding Euler-Lagrange equations are of second order. The Arnowitt,
Deser and Misner Hamiltonian formalism for general relativity, \cite{adm},
see also \cite{mtw}, is based on the Palatini formalism, \cite{palatini}, in
which metric and connection are independent dynamical variables. The De
Donder form for the Palatini formulation of general relativity was given in 
\cite{js1970}. \smallskip

\begin{proposition}
\label{Proposition 3.3}The De Donder form for the second order Hilbert
Lagrangian $L_{\mathrm{Hilbert}}$, expressed in local coordinates, is%
\begin{eqnarray}
\Theta _{\mathrm{Hilbert}} &=&\Gamma _{\lambda \mu \nu }\Gamma _{\alpha
\beta \gamma }\left( g^{\alpha \gamma }g^{\mu \nu }g^{\beta \lambda
}-g^{\beta \lambda }g^{\alpha \mu }g^{\gamma \nu }\right) \sqrt{-\det g}%
\mathrm{d}_{4}x  \label{3.9b} \\
&&+\frac{1}{2}\left( -g^{\alpha \mu }\Gamma ^{\beta }-g^{\beta \mu }\Gamma
^{\alpha }+\Gamma ^{\beta \alpha \mu }+\Gamma ^{\alpha \beta \mu }\right) 
\sqrt{-\det g}\mathrm{d}g_{\alpha \beta }\wedge \left( \frac{{\small %
\partial }}{{\small \partial x}^{\mu }}%
%TCIMACRO{%
%\TeXButton{lefthook}{{\mbox{$ \rule {5pt} {.5pt}\rule {.5pt} {6pt} \, $}}}}%
%BeginExpansion
{\mbox{$ \rule {5pt} {.5pt}\rule {.5pt} {6pt} \, $}}%
%EndExpansion
\mathrm{d}_{4}x\right)  \notag \\
&&+\frac{1}{2}\left( g^{\alpha \mu }g^{\beta \nu }+g^{\alpha \nu }g^{\beta
\mu }-2g^{\alpha \beta }g^{\mu \nu }\right) \sqrt{-\det g}\mathrm{d}%
g_{\alpha \beta ,\nu }\wedge \left( \frac{{\small \partial }}{{\small %
\partial x}^{\mu }}%
%TCIMACRO{%
%\TeXButton{lefthook}{{\mbox{$ \rule {5pt} {.5pt}\rule {.5pt} {6pt} \, $}}}}%
%BeginExpansion
{\mbox{$ \rule {5pt} {.5pt}\rule {.5pt} {6pt} \, $}}%
%EndExpansion
\mathrm{d}_{4}x\right) ,  \notag
\end{eqnarray}%
where $\Gamma ^{\alpha }=g^{\mu \nu }\Gamma _{\mu \nu }^{\alpha }.$
\end{proposition}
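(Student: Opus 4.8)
The plan is to substitute the explicit coordinate expression of $L_{\mathrm{Hilbert}}=R[g]\sqrt{-\det g}$ into Definition~\ref{Definition 3.1}, evaluate the momenta $p^{\mu\nu\alpha\beta}$, $p^{\mu\nu\alpha}$ prescribed by (\ref{3.2}) — taking them to be the natural functions $p^{\mu\nu\alpha\beta}=\partial L/\partial z_{\mu\nu\alpha\beta}$ and $p^{\mu\nu\alpha}=\partial L/\partial z_{\mu\nu\alpha}-D_{\beta}p^{\mu\nu\alpha\beta}$, where $D_{\beta}$ is total differentiation, so that (\ref{3.2}) and Theorem~\ref{Theorem 3.1} apply — and then simplify. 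The organising device is the classical decomposition of the Einstein--Hilbert density into a first-order ``$\Gamma\Gamma$'' part plus a total divergence,
\begin{equation*}
R\sqrt{-\det g}=\sqrt{-\det g}\,G+\partial_{\sigma}V^{\sigma},\quad G=g^{\mu\nu}\big(\Gamma^{\lambda}_{\mu\kappa}\Gamma^{\kappa}_{\nu\lambda}-\Gamma^{\lambda}_{\mu\nu}\Gamma^{\kappa}_{\lambda\kappa}\big),\quad V^{\sigma}=\sqrt{-\det g}\,\big(g^{\mu\nu}\Gamma^{\sigma}_{\mu\nu}-g^{\sigma\nu}\Gamma^{\lambda}_{\nu\lambda}\big).
\end{equation*}
In the induced coordinates on $J^{2}(M,N)$ the density $\sqrt{-\det g}\,G$ is homogeneous quadratic in the first-derivative coordinates $z_{\mu\nu\lambda}$, while $V^{\sigma}$ is homogeneous linear in them, both with coefficients depending on $y_{\mu\nu}$ alone; hence $L=R\sqrt{-\det g}$ is affine in the second-derivative coordinates $z_{\mu\nu\lambda\rho}$ and $\partial L/\partial z_{\mu\nu\alpha\beta}$ is a function of $y_{\mu\nu}$ only. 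This is what forces the whole construction down to low jet order, in keeping with the fact that $L_{\mathrm{Hilbert}}$ is a first-order Lagrangian modulo a divergence.

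First I would compute the top momentum. Only the term $V^{\sigma;\kappa\tau\gamma}(y)\,z_{\kappa\tau\gamma\sigma}$ coming from $\partial_{\sigma}V^{\sigma}$ contains $z_{\mu\nu\lambda\rho}$, so differentiating and symmetrising over the index pairs $\mu\nu$ and $\alpha\beta$ yields
\begin{equation*}
p^{\mu\nu\alpha\beta}=\partial L/\partial z_{\mu\nu\alpha\beta}=\tfrac12\big(g^{\mu\alpha}g^{\nu\beta}+g^{\mu\beta}g^{\nu\alpha}-2g^{\mu\nu}g^{\alpha\beta}\big)\sqrt{-\det g},
\end{equation*}
which, after relabelling indices, is exactly the coefficient appearing in the third line of (\ref{3.9b}). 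Since $p^{\mu\nu\alpha\beta}$ depends on $y_{\mu\nu}$ only, $D_{\beta}p^{\mu\nu\alpha\beta}$ is a function on $J^{1}(M,N)$, hence so is $p^{\mu\nu\alpha}$, consistent with the shape of (\ref{3.9b}).

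Next I would compute $p^{\mu\nu\alpha}=\partial L/\partial z_{\mu\nu\alpha}-D_{\beta}p^{\mu\nu\alpha\beta}$. For this I would differentiate the quadratic density $\sqrt{-\det g}\,G$ and the first-derivative part of $\partial_{\sigma}V^{\sigma}$ with respect to $z_{\mu\nu\alpha}$, and compute $D_{\beta}p^{\mu\nu\alpha\beta}$ using $D_{\beta}\sqrt{-\det g}=\sqrt{-\det g}\,\Gamma^{\lambda}_{\beta\lambda}$ and $D_{\beta}g^{\mu\nu}=-g^{\mu\kappa}g^{\nu\lambda}z_{\kappa\lambda\beta}$, re-expressing everything through Christoffel symbols via $z_{\mu\nu\lambda}=\Gamma_{\mu\lambda\nu}+\Gamma_{\nu\lambda\mu}$. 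After the cancellations the terms should reorganise into
\begin{equation*}
p^{\mu\nu\alpha}=\tfrac12\big(-g^{\mu\alpha}\Gamma^{\nu}-g^{\nu\alpha}\Gamma^{\mu}+\Gamma^{\nu\mu\alpha}+\Gamma^{\mu\nu\alpha}\big)\sqrt{-\det g},
\end{equation*}
with $\Gamma^{\alpha}=g^{\mu\nu}\Gamma^{\alpha}_{\mu\nu}$ and $\Gamma^{\mu\nu\alpha}$ the Christoffel symbol of the first kind with all indices raised; this is the coefficient in the second line of (\ref{3.9b}). To obtain the first line I would expand the contact factors in (\ref{3.1}), so that the coefficient of $\mathrm{d}_{4}x$ in $\Theta$ is $L-p^{\mu\nu\alpha\beta}z_{\mu\nu\alpha\beta}-p^{\mu\nu\alpha}z_{\mu\nu\alpha}$; using the homogeneity of $\sqrt{-\det g}\,G$ and of $V^{\sigma}$ (Euler's relation) together with the symmetry of the coefficient of $z_{\kappa\tau\gamma}$ in $V^{\sigma}$ under exchange of $\sigma$ and $\gamma$, this collapses to $-\sqrt{-\det g}\,G$, which equals $\Gamma_{\lambda\mu\nu}\Gamma_{\alpha\beta\gamma}\big(g^{\alpha\gamma}g^{\mu\nu}g^{\beta\lambda}-g^{\beta\lambda}g^{\alpha\mu}g^{\gamma\nu}\big)\sqrt{-\det g}$. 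Substituting $z_{\mu\nu\alpha}=g_{\mu\nu,\alpha}$ and $y_{\mu\nu}=g_{\mu\nu}$ in the resulting form gives (\ref{3.9b}).

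\textbf{Main obstacle.} Nothing here is conceptually deep, but the evaluation of $p^{\mu\nu\alpha}$ and of the horizontal coefficient is delicate: one must symmetrise the jet derivatives correctly, keep distinct the several $\Gamma\Gamma$-contractions that look superficially alike, and verify that every contribution carrying a second-order (and a priori third-order) jet coordinate cancels, so that $p^{\mu\nu\alpha}$ and the $\mathrm{d}_{4}x$-coefficient are genuinely functions on $J^{1}(M,N)$. That cancellation is guaranteed abstractly because $L_{\mathrm{Hilbert}}$ differs from a first-order Lagrangian by a divergence, and the calculation is precisely the explicit verification of it.
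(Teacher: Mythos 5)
Your plan is correct and its intermediate outputs (the two momenta and the $\mathrm{d}_{4}x$-coefficient) agree with what the paper obtains, but you organise the computation around a different decomposition of the Lagrangian. The paper splits $R=R_{1}+R_{2}$ by order of derivatives, where $R_{2}=\frac{1}{2}g_{\mu \nu ,\alpha \beta }\left( g^{\mu \alpha }g^{\nu \beta }+g^{\mu \beta }g^{\nu \alpha }-2g^{\mu \nu }g^{\alpha \beta }\right)$ is the part linear in second derivatives and $R_{1}$ is the remaining $\Gamma\Gamma$ part; it then reads off $p^{\mu\nu\alpha\beta}$ directly from $R_{2}$, computes $P^{\mu\nu\alpha\beta}_{,\beta}$ and $\partial R_{1}/\partial g_{\mu\nu,\alpha}$ by explicit index gymnastics (its Lemma on the Ostrogradski momenta), and obtains the horizontal coefficient by observing that the $R_{2}\sqrt{-\det g}\,\mathrm{d}_{4}x$ term cancels against $-P^{\alpha\beta\mu\nu}g_{\alpha\beta,\nu\mu}\mathrm{d}_{4}x$ and then simplifying the surviving $\Gamma\Gamma$ contractions term by term. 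You instead use the classical split $R\sqrt{-\det g}=\sqrt{-\det g}\,G+\partial_{\sigma}V^{\sigma}$ into the first-order $\Gamma\Gamma$ Lagrangian plus a total divergence; note that $\sqrt{-\det g}\,G\neq R_{1}\sqrt{-\det g}$, since $\partial_{\sigma}V^{\sigma}$ also contributes first-order terms, so the two routes genuinely differ at the intermediate stage even though the momenta are decomposition-independent. What your route buys is a cleaner derivation of the first line of (\ref{3.9b}): the identity (coefficient of $\mathrm{d}_{4}x$) $=L-p^{\mu\nu\alpha\beta}z_{\mu\nu\alpha\beta}-p^{\mu\nu\alpha}z_{\mu\nu\alpha}=-\sqrt{-\det g}\,G$ follows from Euler's relation for the quadratic and linear homogeneities together with the $\sigma\leftrightarrow\gamma$ symmetry of the (pair-symmetrised) coefficient of $z_{\kappa\tau\gamma}$ in $V^{\sigma}$ — a check that does hold here — whereas the paper reaches the same expression by a longer explicit cancellation; and one confirms that $-\sqrt{-\det g}\,G$ coincides with $\Gamma _{\lambda \mu \nu }\Gamma _{\alpha \beta \gamma }\left( g^{\alpha \gamma }g^{\mu \nu }g^{\beta \lambda }-g^{\beta \lambda }g^{\alpha \mu }g^{\gamma \nu }\right) \sqrt{-\det g}$ upon lowering indices. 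The one place where your write-up is only a promissory note is the evaluation of $p^{\mu\nu\alpha}=\partial L/\partial z_{\mu\nu\alpha}-D_{\beta}p^{\mu\nu\alpha\beta}$ (``the terms should reorganise into\ldots''): this is exactly the delicate index computation that occupies the bulk of the paper's Lemma on the momenta (including the use of $\left(\sqrt{-\det g}\right)_{,\beta}=\sqrt{-\det g}\,\Delta_{\beta}$ and the cancellation of all $\Delta$-terms), so a complete proof along your lines would still have to carry that out explicitly.
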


\noindent \textbf{Proof} of Proposition \ref{Proposition 3.3} is given in
Section 4.

\subsection{Example: Matter and Gravitation}

In the study of second order gravity, we cannot ignore the interaction of
gravity with matter. If $L$ is the Lagrangian for gravity alone and $L_{%
\mathrm{matter}}$ is the Lagrangian for the matter such that the total
Lagrangian $L_{\mathrm{total}}=L+L_{\mathrm{matter}}$ is invariant under the
group of diffeomorphisms of $M$, we conjecture that the statement of Theorem %
\ref{Theorem 3.1} also holds for the De Donder form $\Theta _{\mathrm{total}%
} $ associated to the total Lagrangian $L_{\mathrm{total}}$. Here, we
illustrate it with the case of second order gravity interacting with a
scalar field $\phi $ given by the Lagrangian form 
\begin{equation}
L_{\mathrm{matter}}[g,\phi ]d_{4}x=\left[ \frac{{\small 1}}{{\small 2}}%
g^{\mu \nu }\frac{\partial \phi }{\partial x^{\mu }}\frac{\partial \phi }{%
\partial x^{\nu }}+V(\phi )\right] \sqrt{-\det g}\mathrm{d}_{4}x.
\label{3.10}
\end{equation}%
where $V(\phi )$ is a known function. We may consider matter field to be a
section $\phi $ of the trivial bundle 
\begin{equation}
\kappa :\mathbb{R}\times M\rightarrow M:(t,x)\mapsto x.  \label{3.11}
\end{equation}%
We understand, the Lagrangian $L_{\mathrm{matter}}$, exhibited in (\ref{3.10}%
), as of first order in $\phi $ and depends parametrically the Lorentzian
metric $g$. Introducing local coordinates $(x^{\mu },t,z_{\mu })$ on the
first jet bundle $J^{1}(M,\mathbb{R}\times M)$, we write the contact form as 
$dt-z_{\mu }dx^{\mu }$, and $g$-dependent function $L_{\mathrm{matter}}$ as%
\begin{equation}
L_{\mathrm{matter}}^{g}(x,t,z_{\mu })=\left[ \frac{{\small 1}}{{\small 2}}%
g^{\mu \nu }z_{\mu }z_{\nu }+V(z)\right] \sqrt{-\det g}.  \label{3.11a}
\end{equation}%
In the coordinate representation, the De Donder form for the present case is
defined to be 
\begin{equation}
\Theta _{\mathrm{matter}}^{g}=L_{\mathrm{matter}}^{g}\mathrm{d}_{4}x+q^{\mu
}(\mathrm{d}t-z_{\nu }\mathrm{d}x^{\nu })\wedge (\partial _{\mu }%
%TCIMACRO{%
%\TeXButton{lefthook}{{\mbox{$ \rule {5pt} {.5pt}\rule {.5pt} {6pt} \, $}}}}%
%BeginExpansion
{\mbox{$ \rule {5pt} {.5pt}\rule {.5pt} {6pt} \, $}}%
%EndExpansion
\mathrm{d}_{4}x),  \label{3.12}
\end{equation}%
where $q^{\mu }$ is a function on $J^{1}(M,K)$ such that, for every local
section $\phi $ of the trivial fibration $\kappa $,%
\begin{equation}
(j^{1}\phi )^{\ast }q^{\mu }=(j^{1}\phi )^{\ast }\frac{{\small \partial }}{%
{\small \partial z}_{\mu }}L_{\mathrm{matter}}^{g}=g^{\mu \nu }z_{\nu }\sqrt{%
-\det g}.  \label{3.13}
\end{equation}%
The total space is the fibre product $J^{3}(M,N)\times _{M}J^{1}(M,K)$ of
the fibrations $\pi ^{3}:J^{3}(M,N)\rightarrow M$ and $\kappa
^{1}:J^{1}(M,K)\rightarrow M$. The total De Donder form $\Theta _{\mathrm{%
total}}$ is the pull back of the sum of the De Donder form $\Theta $ for the
gravity and the De Donder form $\Theta _{\mathrm{matter}}^{g}$ of the matter
to the this Whitney product.\ In coordinates, the total De Donder form is
given by%
\begin{eqnarray}
\Theta _{\mathrm{total}} &=&L\mathrm{d}_{4}x+p^{\mu \nu \alpha \beta
}(dz_{\mu \nu \alpha }-z_{\mu \nu \alpha \gamma }dx^{\gamma })\wedge
(\partial _{\beta }%
%TCIMACRO{%
%\TeXButton{lefthook}{{\mbox{$ \rule {5pt} {.5pt}\rule {.5pt} {6pt} \, $}}}}%
%BeginExpansion
{\mbox{$ \rule {5pt} {.5pt}\rule {.5pt} {6pt} \, $}}%
%EndExpansion
d_{4}x)+  \label{3.14} \\
&&+p^{\mu \nu \alpha }(dy_{\mu \nu }-z_{\mu \nu \beta }dx^{\beta })\wedge
(\partial _{\alpha }%
%TCIMACRO{%
%\TeXButton{lefthook}{{\mbox{$ \rule {5pt} {.5pt}\rule {.5pt} {6pt} \, $}}}}%
%BeginExpansion
{\mbox{$ \rule {5pt} {.5pt}\rule {.5pt} {6pt} \, $}}%
%EndExpansion
d_{4}x)+  \notag \\
&&+L_{\mathrm{matter}}^{y}d_{4}x+q^{\mu }(dt-z_{\nu }dx^{\nu })\wedge
(\partial _{\mu }%
%TCIMACRO{%
%\TeXButton{lefthook}{{\mbox{$ \rule {5pt} {.5pt}\rule {.5pt} {6pt} \, $}}}}%
%BeginExpansion
{\mbox{$ \rule {5pt} {.5pt}\rule {.5pt} {6pt} \, $}}%
%EndExpansion
d_{4}x),  \notag
\end{eqnarray}%
where, for every local section 
\begin{equation*}
\sigma \times \phi :M\rightarrow N\times _{M}(\mathbb{R}\times M):x\mapsto
(x,g_{\mu \nu }(x),\phi (x))
\end{equation*}%
of the fibration $\pi \times _{M}\kappa $ from the product manifold $N\times
_{M}(\mathbb{R}\times M)$ to the base manifold $M$, the coefficient
functions are 
\begin{eqnarray}
(j^{3}\sigma )^{\ast }p^{\mu \nu \alpha \beta } &=&(j^{2}\sigma )^{\ast }%
\frac{{\small \partial L}}{{\small \partial z}_{\mu \nu \alpha \beta }},
\label{3.15} \\
(j^{3}\sigma )^{\ast }p^{\mu \nu \alpha } &=&(j^{2}\sigma )^{\ast }\frac{%
{\small \partial L}}{{\small \partial z}_{\mu \nu \alpha }}-\frac{{\small %
\partial }}{{\small \partial x}^{\beta }}\left( (j^{2}\sigma )^{\ast }\frac{%
{\small \partial L}}{{\small \partial z}_{\mu \nu \alpha \beta }}\right) ,
\label{3.16} \\
(j^{1}\phi )^{\ast }q^{\mu } &=&g^{\mu \nu }\phi _{,\nu }\sqrt{-\det g}.
\label{3.17}
\end{eqnarray}%
In equation (\ref{3.14}) we did not put any pull-back signs to make the
coordinate picture more transparent. Moreover, we replaced the subscript $g$
over $L_{\mathrm{matter}}^{g}$ in equation (3.13) by $y$ in order to
indicate dependence of $L_{\mathrm{matter}}$ on the variable $y_{\mu \nu
}=g_{\mu \nu }(x)$.

\begin{proposition}
\label{Proposition 3.2} For second order gravity with invariant Lagrangian $%
L $ on $J^{2}(M,N)$, the expressions in \ equations (\ref{3.14}) and (\ref%
{3.15}-\ref{3.17}) are independent of the choice of coordinates $(x^{\mu })$
on $M$. Hence, they define a global $4$-form $\Theta _{\mathrm{total}}$ on
the fibre product $J^{3}(M,N)\times _{M}J^{1}(M,K).$
\end{proposition}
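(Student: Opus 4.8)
The plan is to reduce Proposition~\ref{Proposition 3.2} to Theorem~\ref{Theorem 3.1} together with the elementary fact that a \emph{first order} Poincar\'{e}--Cartan form is coordinate independent. The first step is to observe that $L_{\mathrm{matter}}$ depends only on $y_{\mu \nu }=g_{\mu \nu }$ and on the first jet $(t,z_{\mu })$ of $\phi $, so that $\partial L_{\mathrm{matter}}/\partial z_{\mu \nu \alpha }$ and $\partial L_{\mathrm{matter}}/\partial z_{\mu \nu \alpha \beta }$ vanish identically. Hence the metric momenta (\ref{3.15})--(\ref{3.16}) are built from $L$ alone, and comparing (\ref{3.1}), (\ref{3.12}) and (\ref{3.14}) exhibits
\begin{equation*}
\Theta _{\mathrm{total}}=\mathrm{pr}_{1}^{\ast }\Theta +\mathrm{pr}^{\ast }\Theta _{\mathrm{matter}}^{g},
\end{equation*}
where $\mathrm{pr}_{1}$ and $\mathrm{pr}$ are the projections of the fibre product onto $J^{3}(M,N)$ and onto $N\times _{M}J^{1}(M,K)$ respectively. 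Theorem~\ref{Theorem 3.1} already gives that $\Theta $, hence $\mathrm{pr}_{1}^{\ast }\Theta $, is globally defined, so everything reduces to showing that the coordinate expression (\ref{3.12})--(\ref{3.13}) for $\Theta _{\mathrm{matter}}^{g}$ is independent of the coordinates $(x^{\mu })$ on $M$.

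For this I would first note that, because $L_{\mathrm{matter}}^{g}$ is quadratic in $z_{\mu }$, equation (\ref{3.13}) already pins down $q^{\mu }=g^{\mu \nu }z_{\nu }\sqrt{-\det g}$ as an honest function on $N\times _{M}J^{1}(M,K)$, with none of the ``derivative of a pull-back'' subtlety that makes (\ref{3.2}) delicate. Then I would track how each factor in (\ref{3.12}) transforms under an orientation preserving change $x^{\mu }\mapsto \widetilde{x}^{\mu }$ with Jacobian $J$: the piece $L_{\mathrm{matter}}^{g}\,\mathrm{d}_{4}x$ is a scalar times the metric volume form, hence invariant; the contact form $\mathrm{d}t-z_{\nu }\mathrm{d}x^{\nu }$ is invariant; the contraction $\partial _{\mu }\,\rfloor\,\mathrm{d}_{4}x$ acquires a factor $J$ together with $\partial x^{\rho }/\partial \widetilde{x}^{\mu }$; and $q^{\mu }$, from the transformation laws of $g^{\mu \nu }$, $z_{\nu }$ and $\sqrt{-\det g}$, acquires a factor $J^{-1}$ together with $\partial \widetilde{x}^{\mu }/\partial x^{\rho }$. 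Multiplying these out, the two powers of $J$ cancel and the two frame matrices contract to a Kronecker delta, so the momentum term is invariant and $\Theta _{\mathrm{matter}}^{g}$ descends to a global $4$-form on $N\times _{M}J^{1}(M,K)$. Pulling both summands back to $J^{3}(M,N)\times _{M}J^{1}(M,K)$ yields the global form $\Theta _{\mathrm{total}}$, which, exactly as in Theorem~\ref{Theorem 3.1}, is the value of a natural differential operator applied to $L_{\mathrm{total}}\,\mathrm{d}_{4}x$.

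I do not expect a genuine obstacle here. The hard point of Theorem~\ref{Theorem 3.1} was that the second order momenta $p^{\mu \nu \alpha }$ are built from $L$ by a non-tensorial combination and become natural only after invoking diffeomorphism invariance of $L$; by contrast the matter contribution is first order and its momentum $q^{\mu }$ is manifestly tensorial. The only thing demanding a little attention is that $q^{\mu }$ depends simultaneously on the gravitational variable $y_{\mu \nu }$, through $g^{\mu \nu }$, and on the matter jet variable $z_{\nu }$; but $g^{\mu \nu }$ transforms as a contravariant symmetric $2$-tensor and $z_{\nu }$ as a covariant vector, while the extra $\sqrt{-\det g}$ furnishes exactly the density weight matching that of $\partial _{\mu }\,\rfloor\,\mathrm{d}_{4}x$, so the parametric metric dependence causes no difficulty. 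Equivalently, one can phrase the whole argument as the statement that forming the Whitney sum of the gravitational De Donder construction with the Ostrogradski--Legendre transform of the $g$-dependent first order Lagrangian $L_{\mathrm{matter}}^{g}$ is itself a natural operation, so that coordinate independence of $\Theta _{\mathrm{total}}$ is inherited from diffeomorphism invariance of $L_{\mathrm{total}}$.
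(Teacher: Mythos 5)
Your proposal is correct and follows essentially the same route as the paper: reduce the gravitational part to Theorem \ref{Theorem 3.1}, then verify directly that the matter piece is invariant because $q^{\mu}$ is a vector density of weight one, the contact form $\mathrm{d}t-z_{\nu}\mathrm{d}x^{\nu}$ is invariant, and the Jacobian factors in $q^{\mu}$ and $\partial_{\mu}\,{\mbox{$ \rule {5pt} {.5pt}\rule {.5pt} {6pt} \, $}}\,\mathrm{d}_{4}x$ cancel. Your explicit remark that $L_{\mathrm{matter}}$ contributes nothing to the momenta $p^{\mu\nu\alpha}$, $p^{\mu\nu\alpha\beta}$ (being independent of $z_{\mu\nu\alpha}$ and $z_{\mu\nu\alpha\beta}$) is only implicit in the paper, but the substance of the argument is the same.
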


\noindent \textbf{Proof }of Proposition \ref{Proposition 3.2} is given in
Section 4.\smallskip

As in preceeding section, the Euler-Lagrange equations for the total
Lagrangian $L_{\mathrm{total}}$ are equivalent to the De Donder equations
for the total De Donder form $\Theta _{\mathrm{total}}$. \smallskip

\section{Proofs}

\subsection{Proof of Theorem 2}

Recall that De Donder form corresponding to a Lagrangian form $\Lambda
=Ld_{4}x$ on $J^{2}(M,N)$ is a form $\Theta $ on $J^{3}(M,N),$ defined by 
\begin{eqnarray}
\Theta &=&\pi _{2}^{3\ast }L\mathrm{d}_{4}x+p^{\mu \nu \alpha \beta }(%
\mathrm{d}z_{\mu \nu \alpha }-z_{\mu \nu \alpha \gamma }\mathrm{d}x^{\gamma
})\wedge (\partial _{\beta }{%
\mbox{$ \rule {5pt} {.5pt}\rule
{.5pt} {6pt} \, $}}d_{4}x)  \label{4.1} \\
&&+p^{\mu \nu \alpha }(\mathrm{d}y_{\mu \nu }-z_{\mu \nu \beta }\mathrm{d}%
x^{\beta })\wedge (\partial _{\alpha }{%
\mbox{$ \rule {5pt} {.5pt}\rule
{.5pt} {6pt} \, $}}d_{4}x),  \notag
\end{eqnarray}%
in the domain of the coordinate chart on $J^{3}(M,N)$ induced by a
coordinate chart $(x^{\mu })$ on $M,$ where $\pi _{2}^{3\ast }L\mathrm{d}%
_{4}x$ is the pull-back of the Lagrangian form by the forgetful map $\pi
_{2}^{3}:J^{3}(M,N)\rightarrow J^{2}(M,N)$, while $p^{\mu \nu \alpha \beta }$
and $p^{\mu \nu \alpha }$ are Ostrogradski's momenta. In other words, $%
p^{\mu \nu \alpha \beta }$ and $p^{\mu \nu \alpha }$ are functions on $%
J^{3}(M,N)$ such that 
\begin{equation}
\begin{split}
(j^{3}\sigma )^{\ast }p^{\mu \nu \alpha \beta }& =(j^{2}\sigma )^{\ast }%
\frac{{\small \partial L}}{{\small \partial z}_{\mu \nu \alpha \beta }}, \\
(j^{3}\sigma )^{\ast }p^{\mu \nu \alpha }& =(j^{2}\sigma )^{\ast }\frac{%
{\small \partial L}}{{\small \partial z}_{\mu \nu \alpha }}-\frac{{\small %
\partial }}{{\small \partial x}^{\beta }}\left( (j^{2}\sigma )^{\ast }\frac{%
{\small \partial L}}{{\small \partial z}_{\mu \nu \alpha \beta }}\right) ,
\end{split}
\label{4.2}
\end{equation}%
for every local section $\sigma $ of $\pi $.

Our aim in this section is to study transformation laws of components of $%
\Theta $ with respect to coordinate transformations in $J^{3}(M,N)$ induced
by an orientation preserving coordinate transformation 
\begin{equation}
(x^{\mu ^{\prime }})\rightarrow (x^{\mu })=(x^{\mu }(x^{\mu ^{\prime }})).
\label{gM}
\end{equation}%
on $M$. It induces a local coordinate transformation on $N$ given by%
\begin{equation}
(x^{\mu ^{\prime }},y_{\mu ^{\prime }\nu ^{\prime }})\rightarrow (x^{\mu
},y_{\mu \nu }=y_{\mu ^{\prime }\nu ^{\prime }}x_{,\mu }^{\mu ^{\prime
}}x_{,\nu }^{\nu ^{\prime }}).  \label{gN}
\end{equation}%
Further, we have the local expressions for transformations for the jet
coordinates 
\begin{align}
z_{\mu ^{\prime }\nu ^{\prime }\alpha ^{\prime }} &=z_{\mu \nu \alpha
}x_{,\mu ^{\prime }}^{\mu }x_{,\nu ^{\prime }}^{\nu }x_{,\alpha ^{\prime
}}^{\alpha }+y_{\mu \nu }x_{,\mu ^{\prime }\alpha ^{\prime }}^{\mu }x_{,\nu
^{\prime }}^{\nu }+y_{\mu \nu }x_{,\mu ^{\prime }}^{\mu }x_{,\nu ^{\prime
}\alpha ^{\prime }}^{\nu }  \label{prolonged-g-1} \\
z_{\alpha ^{\prime }\beta ^{\prime }\gamma ^{\prime }\lambda ^{\prime }}
&=z_{\alpha \beta \gamma \lambda }x_{,\alpha ^{\prime }}^{\alpha }x_{,\beta
^{\prime }}^{\beta }x_{,\gamma ^{\prime }}^{\gamma }x_{,\lambda ^{\prime
}}^{\lambda }+z_{\alpha \beta \gamma }x_{,\alpha ^{\prime }\lambda ^{\prime
}}^{\alpha }x_{,\beta ^{\prime }}^{\beta }x_{,\gamma ^{\prime }}^{\gamma
}+z_{\alpha \beta \gamma }x_{,\alpha ^{\prime }}^{\alpha }x_{,\beta ^{\prime
}\lambda ^{\prime }}^{\beta }x_{,\gamma ^{\prime }}^{\gamma }+z_{\alpha
\beta \gamma }x_{,\alpha ^{\prime }}^{\alpha }x_{,\beta ^{\prime }}^{\beta
}x_{,\gamma ^{\prime }\lambda ^{\prime }}^{\gamma }  \notag \\
&+z_{\alpha \beta \lambda }x_{,\lambda ^{\prime }}^{\lambda }x_{,\alpha
^{\prime }\gamma ^{\prime }}^{\alpha }x_{,\beta ^{\prime }}^{\beta
}+z_{\alpha \beta \lambda }x_{,\lambda ^{\prime }}^{\lambda }x_{,\alpha
^{\prime }}^{\alpha }x_{,\beta ^{\prime }\gamma ^{\prime }}^{\beta
}+y_{\alpha \beta }x_{,\alpha ^{\prime }\gamma ^{\prime }\lambda ^{\prime
}}^{\alpha }x_{,\beta ^{\prime }}^{\beta }+y_{\alpha \beta }x_{,\alpha
^{\prime }\gamma ^{\prime }}^{\alpha }x_{,\beta ^{\prime }\lambda ^{\prime
}}^{\beta }  \notag \\
&+y_{\alpha \beta }x_{,\alpha ^{\prime }\lambda ^{\prime }}^{\alpha
}x_{,\beta ^{\prime }\gamma ^{\prime }}^{\beta }+y_{\alpha \beta }x_{,\alpha
^{\prime }}^{\alpha }x_{,\beta ^{\prime }\gamma ^{\prime }\lambda ^{\prime
}}^{\beta }.  \label{prolonged-g}
\end{align}

By assumption the Lagrangian form $\Lambda =Ld_{4}x$ is invariant under the
transformations (\ref{gM}) through (\ref{prolonged-g}). This implies that
under these transformations, $L$ transforms as a scalar density. In other
words, 
\begin{equation}
L\left( x^{\mu ^{\prime }},y_{\mu ^{\prime }\nu ^{\prime }},z_{\mu ^{\prime
}\nu ^{\prime }\alpha ^{\prime },}z_{\mu ^{\prime }\nu ^{\prime }\alpha
^{\prime }\beta ^{\prime }}\right) \det (x_{,\lambda }^{\lambda ^{\prime
}})=L\left( x^{\mu },y_{\mu \nu },z_{\mu \nu \alpha ,}z_{\mu \nu \alpha
\beta }\right) .  \label{inv-Lag}
\end{equation}

\begin{lemma}
\label{Lemma-4.2} Since the boundary form 
\begin{equation}
\Xi =p^{\mu \nu \alpha \beta }(\mathrm{d}z_{\mu \nu \alpha }-z_{\mu \nu
\alpha \gamma }\mathrm{d}x^{\gamma })\wedge (\partial _{\beta }%
%TCIMACRO{%
%\TeXButton{lefthook}{{\mbox{$ \rule {5pt} {.5pt}\rule {.5pt} {6pt} \, $}}}}%
%BeginExpansion
{\mbox{$ \rule {5pt} {.5pt}\rule {.5pt} {6pt} \, $}}%
%EndExpansion
\mathrm{d}_{4}x)+p^{\mu \nu \alpha }(\mathrm{d}y_{\mu \nu }-z_{\mu \nu \beta
}\mathrm{d}x^{\beta })\wedge (\partial _{\alpha }%
%TCIMACRO{%
%\TeXButton{lefthook}{{\mbox{$ \rule {5pt} {.5pt}\rule {.5pt} {6pt} \, $}}}}%
%BeginExpansion
{\mbox{$ \rule {5pt} {.5pt}\rule {.5pt} {6pt} \, $}}%
%EndExpansion
\mathrm{d}_{4}x)  \label{Zeta}
\end{equation}%
is defined to be 4-form on $J^{3}(M,N),$ under the change of coordinates (%
\ref{gM}) through (\ref{prolonged-g}), the coefficients $p^{\mu \nu \alpha }$
and $p^{\mu \nu \alpha \beta }$ transform as follows\footnote{%
Since $\Xi $ depends on the third jet variables only through $p^{\mu \nu
\alpha },$ we need not write explicit transformation rules for the third
jets.}%
\begin{eqnarray}
p^{\mu \nu \alpha } &=&\left( p^{\mu ^{\prime }\nu ^{\prime }\alpha ^{\prime
}}x_{,\mu ^{\prime }}^{\mu }y_{,\nu ^{\prime }}^{\nu }x_{,\alpha ^{\prime
}}^{\alpha }+p^{\mu ^{\prime }\nu ^{\prime }\alpha ^{\prime }\beta ^{\prime
}}x_{,\beta ^{\prime }}^{\beta }x_{,\alpha ^{\prime }}^{\alpha }\left(
x_{,\mu ^{\prime }\beta }^{\mu }x_{,\nu ^{\prime }}^{\nu }+x_{,\mu ^{\prime
}}^{\mu }x_{,\nu ^{\prime }\beta }^{\nu }\right) \right) \det (x_{,\lambda
}^{\lambda ^{\prime }}),  \label{p-1} \\
p^{\mu \nu \alpha \beta } &=&p^{\mu ^{\prime }\nu ^{\prime }\alpha ^{\prime
}\beta ^{\prime }}x_{,\mu ^{\prime }}^{\mu }x_{,\nu ^{\prime }}^{\nu
}x_{,\alpha ^{\prime }}^{\alpha }x_{,\beta ^{\prime }}^{\beta }\det
(x_{,\lambda }^{\lambda ^{\prime }}).  \label{p-2}
\end{eqnarray}
\end{lemma}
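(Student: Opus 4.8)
The plan is to derive the transformation laws \eqref{p-1}--\eqref{p-2} by demanding that the boundary form $\Xi$ in \eqref{Zeta} be a well-defined $4$-form, i.e.\ that its coordinate expression be invariant under \eqref{gM}--\eqref{prolonged-g}. The strategy has two ingredients that must be combined: first, one determines how each of the building blocks of $\Xi$ transforms --- the contact forms $\mathrm{d}z_{\mu\nu\alpha}-z_{\mu\nu\alpha\gamma}\mathrm{d}x^{\gamma}$ and $\mathrm{d}y_{\mu\nu}-z_{\mu\nu\beta}\mathrm{d}x^{\beta}$, and the contracted volume forms $\partial_{\beta}\,{\mbox{$\rule{5pt}{.5pt}\rule{.5pt}{6pt}\,$}}\,\mathrm{d}_4x$; second, one reads off the forced transformation of the coefficients $p^{\mu\nu\alpha}$ and $p^{\mu\nu\alpha\beta}$ so that the products reassemble into the primed expression.

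First I would record the transformation of the contact forms. The key point is that the module of contact forms is preserved by prolonged diffeomorphisms, but not freely: differentiating \eqref{gN} and \eqref{prolonged-g-1} and using that on holonomic sections $\mathrm{d}y_{\mu\nu}=z_{\mu\nu\lambda}\mathrm{d}x^{\lambda}$ etc., one gets that $\mathrm{d}y_{\mu\nu}-z_{\mu\nu\beta}\mathrm{d}x^{\beta} = x^{\mu'}_{,\mu}x^{\nu'}_{,\nu}\,(\mathrm{d}y_{\mu'\nu'}-z_{\mu'\nu'\beta'}\mathrm{d}x^{\beta'})$ --- the first-order contact form transforms tensorially, with no inhomogeneous term, because the inhomogeneous pieces in $z_{\mu'\nu'\alpha'}$ and in $\mathrm{d}y_{\mu\nu}$ cancel. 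For the second-order contact form $\mathrm{d}z_{\mu\nu\alpha}-z_{\mu\nu\alpha\gamma}\mathrm{d}x^{\gamma}$ the same cancellation of the genuinely inhomogeneous (second-jet-independent) terms occurs, but there remains a term proportional to the \emph{first}-order contact form: schematically $\mathrm{d}z_{\mu\nu\alpha}-z_{\mu\nu\alpha\gamma}\mathrm{d}x^{\gamma}$ transforms into $x^{\mu'}_{,\mu}x^{\nu'}_{,\nu}x^{\alpha'}_{,\alpha}(\cdots) + (\text{coefficient})\cdot(\mathrm{d}y_{\mu'\nu'}-z_{\mu'\nu'\beta'}\mathrm{d}x^{\beta'})$, where the coefficient involves the second derivatives $x^{\mu'}_{,\mu\beta}$. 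Next I would record $\partial_{\beta}\,{\mbox{$\rule{5pt}{.5pt}\rule{.5pt}{6pt}\,$}}\,\mathrm{d}_4x = x^{\beta'}_{,\beta}\det(x^{\lambda}_{,\lambda'})\;\partial_{\beta'}\,{\mbox{$\rule{5pt}{.5pt}\rule{.5pt}{6pt}\,$}}\,\mathrm{d}_4x'$, which follows from $\mathrm{d}_4x=\det(x^{\lambda}_{,\lambda'})\mathrm{d}_4x'$ together with the chain rule $\partial_{\beta}=x^{\beta'}_{,\beta}\partial_{\beta'}$ (one should note the contracted form $\partial_\beta\,{\mbox{$\rule{5pt}{.5pt}\rule{.5pt}{6pt}\,$}}\,\mathrm{d}_4x$ picks up only the Jacobian determinant times the single cotangent factor, since $\mathrm{d}x^\mu=x^\mu_{,\mu'}\mathrm{d}x^{\mu'}$ and the extra terms wedge to zero against the remaining three $\mathrm{d}x$'s).

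Then I would substitute all of this into the first term $p^{\mu\nu\alpha\beta}(\mathrm{d}z_{\mu\nu\alpha}-z_{\mu\nu\alpha\gamma}\mathrm{d}x^{\gamma})\wedge(\partial_{\beta}\,{\mbox{$\rule{5pt}{.5pt}\rule{.5pt}{6pt}\,$}}\,\mathrm{d}_4x)$ and the second term $p^{\mu\nu\alpha}(\mathrm{d}y_{\mu\nu}-z_{\mu\nu\beta}\mathrm{d}x^{\beta})\wedge(\partial_{\alpha}\,{\mbox{$\rule{5pt}{.5pt}\rule{.5pt}{6pt}\,$}}\,\mathrm{d}_4x)$ of $\Xi$, and collect terms by the type of contact factor. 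The terms carrying the second-order contact form force \eqref{p-2}: matching $p^{\mu\nu\alpha\beta}\,x^{\mu'}_{,\mu}x^{\nu'}_{,\nu}x^{\alpha'}_{,\alpha}x^{\beta'}_{,\beta}\det(x^\lambda_{,\lambda'})$ against $p^{\mu'\nu'\alpha'\beta'}$ gives exactly the stated tensor-density law. The terms carrying the first-order contact form receive two contributions: the genuine $p^{\mu\nu\alpha}$-term, and the cross term from the inhomogeneous part of the second-order contact form's transformation multiplied by $p^{\mu\nu\alpha\beta}$; setting their sum equal to $p^{\mu'\nu'\alpha'}$ yields \eqref{p-1}, with the extra $p^{\mu'\nu'\alpha'\beta'}x^{\beta'}_{,\beta}x^{\alpha'}_{,\alpha}(x^{\mu}_{,\mu'\beta}x^{\nu}_{,\nu'}+x^{\mu}_{,\mu'}x^{\nu}_{,\nu'\beta})\det(x^\lambda_{,\lambda'})$ piece being precisely the image of that cross term. (One also has to check that no $\mathrm{d}x$-only or higher-contact terms survive --- they don't, by the antisymmetrization against $\mathrm{d}_4x$ and by the tensoriality noted above.)

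The main obstacle is the bookkeeping in the transformation of the second-order contact form: one must carefully differentiate the prolongation formula \eqref{prolonged-g-1}, take its exterior derivative, and verify that after subtracting $z_{\mu\nu\alpha\gamma}\mathrm{d}x^{\gamma}$ (using the transformation \eqref{prolonged-g} of the third-jet-coordinate-free part, or rather the fact that the purely second-jet terms cancel) exactly one inhomogeneous term survives and it is proportional to the lower contact form with the coefficient involving $x^{\mu}_{,\mu'\beta}$ written in the right index configuration. This is the computation that makes \eqref{p-1} have the shape it does, and it is where sign and index-placement errors are easy to make; everything else is straightforward substitution and collection of terms. I would organize this by first proving the two contact-form transformation identities as sub-lemmas, so that the final matching is a short algebraic step.
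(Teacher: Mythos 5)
Your proposal is correct and follows essentially the same route as the paper: split $\Xi$ into its two summands, establish the transformation of $\partial_{\beta}\,{\mbox{$ \rule {5pt} {.5pt}\rule {.5pt} {6pt} \, $}}\,\mathrm{d}_{4}x$ and of the two contact forms (the first-order one purely tensorial, the second-order one tensorial plus an inhomogeneous multiple of the first-order contact form with coefficient $x_{,\mu ^{\prime }\alpha ^{\prime }}^{\mu }x_{,\nu ^{\prime }}^{\nu }+x_{,\mu ^{\prime }}^{\mu }x_{,\nu ^{\prime }\alpha ^{\prime }}^{\nu }$), and then read off (\ref{p-2}) from the second-order contact terms and (\ref{p-1}) from the first-order contact terms together with the cross term coming from $p^{\mu \nu \alpha \beta }$. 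One minor caution: the contact-form transformation identities hold identically on $J^{3}(M,N)$ by direct cancellation after differentiating (\ref{gN}) and (\ref{prolonged-g-1}) and subtracting (\ref{prolonged-g}), so you should not phrase the derivation as holding ``on holonomic sections'' (where the contact forms vanish and the identities would be vacuous).
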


\begin{proof}
Notice that, the boundary term $\Xi $ in (\ref{Zeta}) is the sum of two four
forms, label them as $\Xi _{1}$ and $\Xi _{2}$ in a respected order. In
order to deduce the transformation properties of the coefficients $p^{\mu
\nu \alpha }$ and $p^{\mu \nu \alpha \beta }$, we express $\Xi _{1}$ and $%
\Xi _{2}$ in primed coordinates using the transformations (\ref{gM}) through
(\ref{prolonged-g}) under the assumption that $\Xi _{1}$ and $\Xi _{2}$ are
4-forms on $J^{3}(M,N)$ .

Consider first the term $\Xi _{1}=p^{\mu \nu \alpha \beta }(\mathrm{d}z_{\mu
\nu \alpha }-z_{\mu \nu \alpha \gamma }\mathrm{d}x^{\gamma })\wedge
(\partial _{\beta }%
%TCIMACRO{%
%\TeXButton{lefthook}{{\mbox{$ \rule {5pt} {.5pt}\rule {.5pt} {6pt} \, $}}}}%
%BeginExpansion
{\mbox{$ \rule {5pt} {.5pt}\rule {.5pt} {6pt} \, $}}%
%EndExpansion
\mathrm{d}_{4}x)$. Taking exterior differential of the coordinate
transformation (\ref{prolonged-g-1}), we get 
\begin{eqnarray}
\mathrm{d}z_{\mu ^{\prime }\nu ^{\prime }\alpha ^{\prime }} &=&x_{,\mu
^{\prime }}^{\mu }x_{,\nu ^{\prime }}^{\nu }x_{,\alpha ^{\prime }}^{\alpha }%
\mathrm{d}z_{\mu \nu \alpha }+z_{\mu \nu \alpha }x_{,\nu ^{\prime }}^{\nu
}x_{,\alpha ^{\prime }}^{\alpha }\mathrm{d}x_{,\mu ^{\prime }}^{\mu }+z_{\mu
\nu \alpha }x_{,\mu ^{\prime }}^{\mu }x_{,\alpha ^{\prime }}^{\alpha }%
\mathrm{d}x_{,\nu ^{\prime }}^{\nu }  \notag \\
&&+z_{\mu \nu \alpha }x_{,\mu ^{\prime }}^{\mu }x_{,\nu ^{\prime }}^{\nu }%
\mathrm{d}x_{,\alpha ^{\prime }}^{\alpha }+x_{,\mu ^{\prime }\alpha ^{\prime
}}^{\mu }x_{,\nu ^{\prime }}^{\nu }\mathrm{d}y_{\mu \nu }+y_{\mu \nu
}x_{,\nu ^{\prime }}^{\nu }\mathrm{d}x_{,\mu ^{\prime }\alpha ^{\prime
}}^{\mu }+y_{\mu \nu }x_{,\mu ^{\prime }\alpha ^{\prime }}^{\mu }\mathrm{d}%
x_{,\nu ^{\prime }}^{\nu }  \notag \\
&&+x_{,\mu ^{\prime }}^{\mu }x_{,\nu ^{\prime }\alpha ^{\prime }}^{\nu }%
\mathrm{d}y_{\mu \nu }+y_{\mu \nu }x_{,\mu ^{\prime }}^{\mu }\mathrm{d}%
x_{,\nu ^{\prime }\alpha ^{\prime }}^{\nu }+y_{\mu \nu }x_{,\nu ^{\prime
}\alpha ^{\prime }}^{\nu }\mathrm{d}x_{,\mu ^{\prime }}^{\mu }.  \label{dz}
\end{eqnarray}%
Recalling some basic chain rule operations 
\begin{equation*}
x_{,\gamma ^{\prime }}^{\gamma }\mathrm{d}x^{\gamma ^{\prime }}=\mathrm{d}%
x^{\gamma },\text{ \ \ }\mathrm{d}x_{,\mu ^{\prime }\alpha ^{\prime }}^{\mu
}=x_{,\mu ^{\prime }\alpha ^{\prime }\gamma ^{\prime }}^{\mu }\mathrm{d}%
x^{\gamma ^{\prime }},
\end{equation*}%
and using the transformation rule (\ref{prolonged-g}) of $z_{\mu ^{\prime
}\nu ^{\prime }\alpha ^{\prime }\gamma ^{\prime }}$, one computes%
\begin{eqnarray}
&&z_{\mu ^{\prime }\nu ^{\prime }\alpha ^{\prime }\gamma ^{\prime }}\mathrm{d%
}x^{\gamma ^{\prime }}=z_{\mu \nu \alpha \gamma }x_{,\mu ^{\prime }}^{\mu
}x_{,\nu ^{\prime }}^{\nu }x_{,\alpha ^{\prime }}^{\alpha }\mathrm{d}%
x^{\gamma }+z_{\mu \nu \alpha }x_{,\nu ^{\prime }}^{\nu }x_{,\alpha ^{\prime
}}^{\alpha }\mathrm{d}x_{,\mu ^{\prime }}^{\mu }+z_{\mu \nu \alpha }x_{,\mu
^{\prime }}^{\mu }x_{,\alpha ^{\prime }}^{\alpha }\mathrm{d}x_{,\nu ^{\prime
}}^{\nu }  \notag \\
&&+z_{\mu \nu \alpha }x_{,\mu ^{\prime }}^{\mu }x_{,\nu ^{\prime }}^{\nu }%
\mathrm{d}x_{,\alpha ^{\prime }}^{\alpha }+z_{\mu \nu \gamma }x_{,\mu
^{\prime }\alpha ^{\prime }}^{\mu }x_{,\nu ^{\prime }}^{\nu }\mathrm{d}%
x^{\gamma }+y_{\mu \nu }x_{,\nu ^{\prime }}^{\nu }\mathrm{d}x_{,\mu ^{\prime
}\alpha ^{\prime }}^{\mu }+y_{\mu \nu }x_{,\mu ^{\prime }\alpha ^{\prime
}}^{\mu }\mathrm{d}x_{,\nu ^{\prime }}^{\nu }  \notag \\
&&+z_{\mu \nu \gamma }x_{,\mu ^{\prime }}^{\mu }x_{,\nu ^{\prime }\alpha
^{\prime }}^{\nu }\mathrm{d}x^{\gamma }+y_{\mu \nu }x_{,\nu ^{\prime }\alpha
^{\prime }}^{\nu }\mathrm{d}x_{,\mu ^{\prime }}^{\mu }+y_{\mu \nu }x_{,\mu
^{\prime }}^{\mu }\mathrm{d}x_{,\nu ^{\prime }\alpha ^{\prime }}^{\nu }.
\label{zdx}
\end{eqnarray}%
We subtract the one-form $z_{\mu ^{\prime }\nu ^{\prime }\alpha ^{\prime
}\gamma ^{\prime }}\mathrm{d}x^{\gamma ^{\prime }}$, exhibited in (\ref{zdx}%
), from of the one-from $\mathrm{d}z_{\mu ^{\prime }\nu ^{\prime }\alpha
^{\prime }}$ in (\ref{dz}). While taking the difference, see that the
second, the third, and the fourth terms in the first line of (\ref{dz})
cancel with the second, the third, and the fourth terms in the first line of
(\ref{zdx}), respectively. Notice also that the second and the third terms
both in the the second and the third lines of (\ref{dz})\ cancel with the
second and the third terms of the second and third lines of (\ref{dz}),
respectively. Eventually, we arrive at the following expression%
\begin{eqnarray*}
\mathrm{d}z_{\mu ^{\prime }\nu ^{\prime }\alpha ^{\prime }}-z_{\mu ^{\prime
}\nu ^{\prime }\alpha ^{\prime }\gamma ^{\prime }}\mathrm{d}x^{\gamma
^{\prime }} &=&x_{,\mu ^{\prime }}^{\mu }x_{,\nu ^{\prime }}^{\nu
}x_{,\alpha ^{\prime }}^{\alpha }\left( \mathrm{d}z_{\mu \nu \alpha }-z_{\mu
\nu \alpha \gamma }\mathrm{d}x^{\gamma }\right) \\
&&+\left( x_{,\mu ^{\prime }\alpha ^{\prime }}^{\mu }x_{,\nu ^{\prime
}}^{\nu }+x_{,\mu ^{\prime }}^{\mu }x_{,\nu ^{\prime }\alpha ^{\prime
}}^{\nu }\right) \left( \mathrm{d}y_{\mu \nu }-z_{\mu \nu \gamma }\mathrm{d}%
x^{\gamma }\right) .
\end{eqnarray*}%
On the other hand, it is immediate to see that 
\begin{equation}
\partial _{\beta ^{\prime }}%
%TCIMACRO{%
%\TeXButton{lefthook}{{\mbox{$ \rule {5pt} {.5pt}\rule {.5pt} {6pt} \, $}}}}%
%BeginExpansion
{\mbox{$ \rule {5pt} {.5pt}\rule {.5pt} {6pt} \, $}}%
%EndExpansion
d_{4}x^{\prime }=x_{,\beta ^{\prime }}^{\beta }\det (x_{,\lambda }^{\lambda
^{\prime }})\partial _{\beta }%
%TCIMACRO{%
%\TeXButton{lefthook}{{\mbox{$ \rule {5pt} {.5pt}\rule {.5pt} {6pt} \, $}}}}%
%BeginExpansion
{\mbox{$ \rule {5pt} {.5pt}\rule {.5pt} {6pt} \, $}}%
%EndExpansion
d_{4}x.  \label{Wedge-trf}
\end{equation}%
Hence the first term in the boundary form can be obtained by first taking
the exterior product of the one form $dz_{\mu ^{\prime }\nu ^{\prime }\alpha
^{\prime }}-z_{\mu ^{\prime }\nu ^{\prime }\alpha ^{\prime }\gamma ^{\prime
}}dx^{\gamma ^{\prime }}$ and the three form $\partial _{\beta ^{\prime }}%
%TCIMACRO{%
%\TeXButton{lefthook}{{\mbox{$ \rule {5pt} {.5pt}\rule {.5pt} {6pt} \, $}}}}%
%BeginExpansion
{\mbox{$ \rule {5pt} {.5pt}\rule {.5pt} {6pt} \, $}}%
%EndExpansion
d_{4}x^{\prime }$ then by multiplying the product with $p^{\mu ^{\prime }\nu
^{\prime }\alpha ^{\prime }\beta ^{\prime }}$. This shows that $\Xi _{1}$
expressed in terms of the primed coordinates is 
\begin{eqnarray}
\Xi _{1}^{\prime } &=&p^{\mu ^{\prime }\nu ^{\prime }\alpha ^{\prime }\beta
^{\prime }}(dz_{\mu ^{\prime }\nu ^{\prime }\alpha ^{\prime }}-z_{\mu
^{\prime }\nu ^{\prime }\alpha ^{\prime }\gamma ^{\prime }}dx^{\gamma
^{\prime }})\wedge (\partial _{\beta ^{\prime }}%
%TCIMACRO{%
%\TeXButton{lefthook}{{\mbox{$ \rule {5pt} {.5pt}\rule {.5pt} {6pt} \, $}}}}%
%BeginExpansion
{\mbox{$ \rule {5pt} {.5pt}\rule {.5pt} {6pt} \, $}}%
%EndExpansion
d_{4}x^{\prime })  \notag \\
&=&p^{\mu ^{\prime }\nu ^{\prime }\alpha ^{\prime }\beta ^{\prime }}x_{,\mu
^{\prime }}^{\mu }x_{,\nu ^{\prime }}^{\nu }x_{,\alpha ^{\prime }}^{\alpha
}x_{,\beta ^{\prime }}^{\beta }\det (x_{,\lambda }^{\lambda ^{\prime
}})\left( \mathrm{d}z_{\mu \nu \alpha }-z_{\mu \nu \alpha \gamma }\mathrm{d}%
x^{\gamma }\right) \wedge \partial _{\beta }%
%TCIMACRO{%
%\TeXButton{lefthook}{{\mbox{$ \rule {5pt} {.5pt}\rule {.5pt} {6pt} \, $}}}}%
%BeginExpansion
{\mbox{$ \rule {5pt} {.5pt}\rule {.5pt} {6pt} \, $}}%
%EndExpansion
d_{4}x  \label{Zeta-1} \\
&&+p^{\mu ^{\prime }\nu ^{\prime }\alpha ^{\prime }\beta ^{\prime }}\left(
x_{,\mu ^{\prime }\alpha ^{\prime }}^{\mu }x_{,\nu ^{\prime }}^{\nu
}+x_{,\mu ^{\prime }}^{\mu }x_{,\nu ^{\prime }\alpha ^{\prime }}^{\nu
}\right) x_{,\beta ^{\prime }}^{\beta }\det (x_{,\lambda }^{\lambda ^{\prime
}})\left( \mathrm{d}y_{\mu \nu }-z_{\mu \nu \gamma }\mathrm{d}x^{\gamma
}\right) \wedge \partial _{\beta }%
%TCIMACRO{%
%\TeXButton{lefthook}{{\mbox{$ \rule {5pt} {.5pt}\rule {.5pt} {6pt} \, $}}}}%
%BeginExpansion
{\mbox{$ \rule {5pt} {.5pt}\rule {.5pt} {6pt} \, $}}%
%EndExpansion
d_{4}x.  \notag
\end{eqnarray}%
So that we have derived the first term in the boundary form (\ref{Zeta}) in
terms of the primed coordinates.

As a second step, we write the one-form $\mathrm{d}y_{\mu \nu }-z_{\mu \nu
\beta }\mathrm{d}x^{\beta }$ in terms of the primed coordinates. By
substituting the transformations of $y_{\mu ^{\prime }\nu ^{\prime }}$ in (%
\ref{gN}) and $z_{\mu ^{\prime }\nu ^{\prime }\beta ^{\prime }}$ in (\ref%
{prolonged-g-1}) into this one-form, we have 
\begin{align*}
\mathrm{d}y_{\mu ^{\prime }\nu ^{\prime }}-z_{\mu ^{\prime }\nu ^{\prime
}\beta ^{\prime }}\mathrm{d}x^{\beta ^{\prime }} &=\mathrm{d}(y_{\mu \nu
}x_{,\mu ^{\prime }}^{\mu }x_{,\nu ^{\prime }}^{\nu })-(z_{\mu \nu \beta
}x_{,\mu ^{\prime }}^{\mu }x_{,\nu ^{\prime }}^{\nu }x_{,\beta ^{\prime
}}^{\beta }+y_{\mu \nu }x_{,\mu ^{\prime }\beta ^{\prime }}^{\mu }x_{,\nu
^{\prime }}^{\nu }+y_{\mu \nu }x_{,\mu ^{\prime }}^{\mu }x_{,\nu ^{\prime
}\beta ^{\prime }}^{\nu })\mathrm{d}x^{\beta ^{\prime }} \\
&=x_{,\mu ^{\prime }}^{\mu }x_{,\nu ^{\prime }}^{\nu }\mathrm{d}y_{\mu \nu
}+y_{\mu \nu }x_{,\nu ^{\prime }}^{\nu }\mathrm{d}x_{,\mu ^{\prime }}^{\mu
}+y_{\mu \nu }x_{,\mu ^{\prime }}^{\mu }\mathrm{d}x_{,\nu ^{\prime }}^{\nu }
\\
&-z_{\mu \nu \beta }x_{,\mu ^{\prime }}^{\mu }x_{,\nu ^{\prime }}^{\nu }%
\mathrm{d}x^{\beta }-y_{\mu \nu }x_{,\nu ^{\prime }}^{\nu }\mathrm{d}x_{,\mu
^{\prime }}^{\mu }-y_{\mu \nu }x_{,\nu ^{\prime }}^{\nu }\mathrm{d}x_{,\mu
^{\prime }}^{\mu } \\
&=x_{,\mu ^{\prime }}^{\mu }x_{,\nu ^{\prime }}^{\nu }\left( \mathrm{d}%
y_{\mu \nu }-z_{\mu \nu \beta }\mathrm{d}x^{\beta }\right) .
\end{align*}%
See that the second and the third terms in the second line cancels with the
second and the third terms in the third line, respectively. We take the
exterior product of $\mathrm{d}y_{\mu ^{\prime }\nu ^{\prime }}-z_{\mu
^{\prime }\nu ^{\prime }\beta ^{\prime }}\mathrm{d}x^{\beta ^{\prime }}$ and 
$\partial _{\beta ^{\prime }}%
%TCIMACRO{%
%\TeXButton{lefthook}{{\mbox{$ \rule {5pt} {.5pt}\rule {.5pt} {6pt} \, $}}}}%
%BeginExpansion
{\mbox{$ \rule {5pt} {.5pt}\rule {.5pt} {6pt} \, $}}%
%EndExpansion
d_{4}x^{\prime }$, then multiply this four form by $p^{\mu ^{\prime }\nu
^{\prime }\beta ^{\prime }}$. We obtain the following expression for $\Xi
_{2}$ in terms of primed coordinates, 
\begin{eqnarray}
\Xi _{2}^{\prime } &=&p^{\mu ^{\prime }\nu ^{\prime }\beta ^{\prime
}}(dy_{\mu ^{\prime }\nu ^{\prime }}-z_{\mu ^{\prime }\nu ^{\prime }\beta
^{\prime }}dx^{\beta ^{\prime }})\wedge (\partial _{\beta ^{\prime }}%
%TCIMACRO{%
%\TeXButton{lefthook}{{\mbox{$ \rule {5pt} {.5pt}\rule {.5pt} {6pt} \, $}}}}%
%BeginExpansion
{\mbox{$ \rule {5pt} {.5pt}\rule {.5pt} {6pt} \, $}}%
%EndExpansion
d_{4}x)  \notag \\
&=&p^{\mu ^{\prime }\nu ^{\prime }\beta ^{\prime }}x_{,\mu ^{\prime }}^{\mu
}x_{,\nu ^{\prime }}^{\nu }\left( \mathrm{d}y_{\mu \nu }-z_{\mu \nu \gamma }%
\mathrm{d}x^{\gamma }\right) x_{,\beta ^{\prime }}^{\beta }\det \left(
x_{,\lambda }^{\lambda ^{\prime }}\right) \wedge \partial _{\beta }%
%TCIMACRO{%
%\TeXButton{lefthook}{{\mbox{$ \rule {5pt} {.5pt}\rule {.5pt} {6pt} \, $}}}}%
%BeginExpansion
{\mbox{$ \rule {5pt} {.5pt}\rule {.5pt} {6pt} \, $}}%
%EndExpansion
d_{4}x.  \label{Zeta-2}
\end{eqnarray}

The sum of the four-forms in (\ref{Zeta-1}) and (\ref{Zeta-2}) is the
boundary form in primed coordinates. Explicitly we have that%
\begin{eqnarray*}
\Xi ^{\prime } &=&\Xi _{1}^{\prime }+\Xi _{2}^{\prime }=p^{\mu ^{\prime }\nu
^{\prime }\alpha ^{\prime }\beta ^{\prime }}x_{,\mu ^{\prime }}^{\mu
}x_{,\nu ^{\prime }}^{\nu }x_{,\alpha ^{\prime }}^{\alpha }x_{,\beta
^{\prime }}^{\beta }\det \left( x_{,\lambda }^{\lambda ^{\prime }}\right)
\left( \mathrm{d}z_{\mu \nu \alpha }-z_{\mu \nu \alpha \gamma }\mathrm{d}%
x^{\gamma }\right) \wedge \partial _{\beta }%
%TCIMACRO{%
%\TeXButton{lefthook}{{\mbox{$ \rule {5pt} {.5pt}\rule {.5pt} {6pt} \, $}}}}%
%BeginExpansion
{\mbox{$ \rule {5pt} {.5pt}\rule {.5pt} {6pt} \, $}}%
%EndExpansion
d_{4}x \\
&&+p^{\mu ^{\prime }\nu ^{\prime }\beta ^{\prime }}x_{,\mu ^{\prime }}^{\mu
}x_{,\nu ^{\prime }}^{\nu }x_{,\beta ^{\prime }}^{\beta }\det (x_{,\lambda
}^{\lambda ^{\prime }})\left( \mathrm{d}y_{\mu \nu }-z_{\mu \nu \beta }%
\mathrm{d}x^{\beta }\right) \wedge \partial _{\beta }%
%TCIMACRO{%
%\TeXButton{lefthook}{{\mbox{$ \rule {5pt} {.5pt}\rule {.5pt} {6pt} \, $}}}}%
%BeginExpansion
{\mbox{$ \rule {5pt} {.5pt}\rule {.5pt} {6pt} \, $}}%
%EndExpansion
d_{4}x \\
&&+p^{\mu ^{\prime }\nu ^{\prime }\alpha ^{\prime }\beta ^{\prime }}\left(
x_{,\mu ^{\prime }\alpha ^{\prime }}^{\mu }x_{,\nu ^{\prime }}^{\nu
}x_{,\beta ^{\prime }}^{\beta }+x_{,\mu ^{\prime }}^{\mu }x_{,\nu ^{\prime
}\alpha ^{\prime }}^{\nu }x_{,\beta ^{\prime }}^{\beta }\right) \det
(x_{,\lambda }^{\lambda ^{\prime }})\left( \mathrm{d}y_{\mu \nu }-z_{\mu \nu
\beta }\mathrm{d}x^{\beta }\right) \wedge \partial _{\beta }%
%TCIMACRO{%
%\TeXButton{lefthook}{{\mbox{$ \rule {5pt} {.5pt}\rule {.5pt} {6pt} \, $}}}}%
%BeginExpansion
{\mbox{$ \rule {5pt} {.5pt}\rule {.5pt} {6pt} \, $}}%
%EndExpansion
d_{4}x.
\end{eqnarray*}%
Since $\Xi $ is a 4-form, its expression $\Xi ^{\prime }$ in primed
coordinates gives the same form as the expression in the original
coordinates. That is $\Xi =\Xi ^{\prime },$ which implies that 
\begin{eqnarray*}
p^{\mu \nu \alpha } &=&\left( p^{\mu ^{\prime }\nu ^{\prime }\alpha ^{\prime
}}x_{,\mu ^{\prime }}^{\mu }y_{,\nu ^{\prime }}^{\nu }x_{,\alpha ^{\prime
}}^{\alpha }+p^{\mu ^{\prime }\nu ^{\prime }\alpha ^{\prime }\beta ^{\prime
}}x_{,\beta ^{\prime }}^{\beta }x_{,\alpha ^{\prime }}^{\alpha }\left(
x_{,\mu ^{\prime }\beta }^{\mu }x_{,\nu ^{\prime }}^{\nu }+x_{,\mu ^{\prime
}}^{\mu }x_{,\nu ^{\prime }\beta }^{\nu }\right) \right) \det (x_{,\lambda
}^{\lambda ^{\prime }}), \\
p^{\mu \nu \alpha \beta } &=&p^{\mu ^{\prime }\nu ^{\prime }\alpha ^{\prime
}\beta ^{\prime }}x_{,\mu ^{\prime }}^{\mu }x_{,\nu ^{\prime }}^{\nu
}x_{,\alpha ^{\prime }}^{\alpha }x_{,\beta ^{\prime }}^{\beta }\det
(x_{,\lambda }^{\lambda ^{\prime }}).
\end{eqnarray*}%
This completes the proof of Lemma \ref{Lemma-4.2}$.$
\end{proof}

\smallskip

In Lemma \ref{Lemma-4.2} we showed that arbitrary smooth functions $p^{\mu
\nu \alpha \beta }$ and $p^{\mu \nu \alpha }$ on $J^{3}(M,N)$ define a
4-form 
\begin{equation*}
p^{\mu \nu \alpha \beta }(\mathrm{d}z_{\mu \nu \alpha }-z_{\mu \nu \alpha
\gamma }\mathrm{d}x^{\gamma })\wedge (\partial _{\beta }%
%TCIMACRO{%
%\TeXButton{lefthook}{{\mbox{$ \rule {5pt} {.5pt}\rule {.5pt} {6pt} \, $}}}}%
%BeginExpansion
{\mbox{$ \rule {5pt} {.5pt}\rule {.5pt} {6pt} \, $}}%
%EndExpansion
\mathrm{d}_{4}x)+p^{\mu \nu \alpha }(\mathrm{d}y_{\mu \nu }-z_{\mu \nu \beta
}\mathrm{d}x^{\beta })\wedge (\partial _{\alpha }%
%TCIMACRO{%
%\TeXButton{lefthook}{{\mbox{$ \rule {5pt} {.5pt}\rule {.5pt} {6pt} \, $}}}}%
%BeginExpansion
{\mbox{$ \rule {5pt} {.5pt}\rule {.5pt} {6pt} \, $}}%
%EndExpansion
\mathrm{d}_{4}x)
\end{equation*}%
on $J^{3}(M,N)$ provided that, under coordinate transformations (\ref{gM})
through (\ref{prolonged-g}), they transform according to equations (\ref{p-1}%
) and (\ref{p-2}). In the present case, the coefficients $p^{\mu \nu \alpha
\beta }$ and $p^{\mu \nu \alpha }$ are defined as the Ostrogradski's
momenta, which implies equations (\ref{4.2}) for every section $\sigma $ of $%
\pi :N\rightarrow M$. Note that any function $f$ on $J^{3}(M,N)$ is uniquely
determined by its pull-backs $(j^{3}\sigma )^{\ast }f$ for all sections $%
\sigma $ of $\pi :N\rightarrow M$. Therefore we may write 
\begin{eqnarray}
p^{\mu \nu \alpha \beta } &=&\frac{\partial L}{\partial z_{\mu \nu \alpha
\beta }}\left( x^{\lambda },y_{\mu \nu },z_{\mu \nu \alpha ,}z_{\mu \nu
\alpha \beta }\right) ,  \label{Ostrogradski Momenta} \\
p^{\mu \nu \alpha } &=&\frac{\partial L}{\partial z_{\mu \nu \alpha }}\left(
x^{\lambda },y_{\mu \nu },z_{\mu \nu \alpha ,}z_{\mu \nu \alpha \beta
}\right) -D_{\beta }p^{\mu \nu \alpha \beta },  \notag
\end{eqnarray}%
where $D_{\beta }p^{\mu \nu \alpha \beta }$ is the total divergence given by 
\begin{eqnarray}
D_{\beta }p^{\mu \nu \alpha \beta } &=&D_{\beta }\left[ \frac{\partial L}{%
\partial z_{\mu \nu \alpha \beta }}\left( x^{\lambda },y_{\mu \nu },z_{\mu
\nu \alpha ,}z_{\mu \nu \alpha \beta }\right) \right]
\label{total divergence} \\
&=&\frac{\partial }{\partial x^{\beta }}\frac{\partial L}{\partial z_{\mu
\nu \alpha \beta }}\left( x^{\lambda },y_{\mu \nu },z_{\mu \nu \alpha
,}z_{\mu \nu \alpha \beta }\right) +z_{\sigma \tau \beta }\frac{\partial }{%
\partial y_{\sigma \tau }}\frac{\partial L}{\partial z_{\mu \nu \alpha \beta
}}\left( x^{\lambda },y_{\mu \nu },z_{\mu \nu \alpha ,}z_{\mu \nu \alpha
\beta }\right)  \notag \\
&&+z_{\rho \sigma \tau \beta }\frac{\partial }{\partial z_{\rho \sigma \tau }%
}\frac{\partial L}{\partial z_{\mu \nu \alpha \beta }}\left( x^{\lambda
},y_{\mu \nu },z_{\mu \nu \alpha ,}z_{\mu \nu \alpha \beta }\right)  \notag
\\
&&+z_{\pi \rho \sigma \tau \beta }\frac{\partial }{\partial z_{\pi \rho
\sigma \tau }}\frac{\partial L}{\partial z_{\mu \nu \alpha \beta }}\left(
x^{\lambda },y_{\mu \nu },z_{\mu \nu \alpha ,}z_{\mu \nu \alpha \beta
}\right) .  \notag
\end{eqnarray}%
In order to simplify computations, we use the notation $P^{\mu \nu \alpha
\beta }(x^{\gamma })=(j^{3}\sigma )^{\ast }p^{\mu \nu \alpha \beta
}(x^{\gamma })$ and $P^{\mu \nu \alpha }(x^{\gamma })=(j^{3}\sigma )^{\ast
}p^{\mu \nu \alpha }(x^{\gamma })$ introduced in equation (\ref{3.5x}). With
this notation, 
\begin{equation}
P^{\mu \nu \alpha }(x^{\lambda })=(j^{2}\sigma )^{\ast }\left( \frac{%
\partial L}{\partial z_{\mu \nu \alpha }}\right) (x^{\lambda })-P_{,\beta
}^{\mu \nu \alpha \beta }(x^{\lambda }),  \label{divergence}
\end{equation}%
where $P_{,\beta }^{\mu \nu \alpha \beta }(x^{\lambda })=\frac{\partial }{%
\partial x^{\beta }}P^{\mu \nu \alpha \beta }(x^{\lambda }).$

\begin{lemma}
\label{Lemma-4.3} If $Ld_{4}x$ is a second order Lagrangian form on $%
J^{2}(M,N)$, invariant under the coordinate transformations (\ref{gM})
through (\ref{prolonged-g}), Ostrogradski's momenta given by equation (\ref%
{Ostrogradski Momenta}) satisfy the transformation rules (\ref{p-1}) and (%
\ref{p-2}).
\end{lemma}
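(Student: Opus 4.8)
The plan is to obtain both transformation rules by differentiating the scalar-density identity (\ref{inv-Lag}) with respect to the jet coordinates $z_{\mu\nu\alpha}$ and $z_{\mu\nu\alpha\beta}$, feeding in the explicit prolongation formulas (\ref{prolonged-g-1})--(\ref{prolonged-g}) and the chain rule. I would dispose of (\ref{p-2}) first, since it is the short half. In (\ref{inv-Lag}) the Jacobian $\det(x^{\lambda'}_{,\lambda})$ is a function of $x$ alone, and $L$ evaluated at the primed arguments depends on $z_{\mu\nu\alpha\beta}$ only through $z_{\mu'\nu'\alpha'\beta'}$ (the variables $x^{\mu'}$, $y_{\mu'\nu'}$ and $z_{\mu'\nu'\alpha'}$ carry no second jets). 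Reading off from the $z_{\alpha\beta\gamma\lambda}$-term of (\ref{prolonged-g}) that $\partial z_{\mu'\nu'\alpha'\beta'}/\partial z_{\mu\nu\alpha\beta}=x^{\mu}_{,\mu'}x^{\nu}_{,\nu'}x^{\alpha}_{,\alpha'}x^{\beta}_{,\beta'}$ (with the symmetrizations in $(\mu\nu)$ and $(\alpha\beta)$ understood), and using $\partial L'/\partial z_{\mu'\nu'\alpha'\beta'}=p^{\mu'\nu'\alpha'\beta'}$, differentiation of (\ref{inv-Lag}) gives (\ref{p-2}) at once.

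For (\ref{p-1}) I would start from the defining formula $p^{\mu\nu\alpha}=\partial L/\partial z_{\mu\nu\alpha}-D_\beta p^{\mu\nu\alpha\beta}$ of (\ref{Ostrogradski Momenta}) and express both pieces in primed quantities. Differentiating (\ref{inv-Lag}) with respect to $z_{\mu\nu\alpha}$ produces two kinds of contributions: one through $z_{\mu'\nu'\alpha'}$, with coefficient $x^{\mu}_{,\mu'}x^{\nu}_{,\nu'}x^{\alpha}_{,\alpha'}$ (from (\ref{prolonged-g-1})), and one through $z_{\mu'\nu'\alpha'\beta'}$, with coefficient a sum of five terms each quadratic in the first derivatives and linear in the second derivatives of the coordinate change (from the five $z_{\alpha\beta\gamma}$-terms of (\ref{prolonged-g})). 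Substituting $\partial L'/\partial z_{\mu'\nu'\alpha'}=p^{\mu'\nu'\alpha'}+D_{\beta'}p^{\mu'\nu'\alpha'\beta'}$ and $\partial L'/\partial z_{\mu'\nu'\alpha'\beta'}=p^{\mu'\nu'\alpha'\beta'}$ then writes $\partial L/\partial z_{\mu\nu\alpha}$ entirely in terms of $p^{\mu'\nu'\alpha'}$, $D_{\beta'}p^{\mu'\nu'\alpha'\beta'}$ and $p^{\mu'\nu'\alpha'\beta'}$. Next I would apply the total derivative $D_\beta$ to the rule (\ref{p-2}) already established; using $D_\beta=x^{\beta'}_{,\beta}D_{\beta'}$, the product rule on the four Jacobian-matrix factors together with the determinant, the chain-rule identities $x^{\beta}_{,\beta'}\partial_\beta x^{\mu}_{,\mu'}=x^{\mu}_{,\mu'\beta'}$ and $x^{\beta}_{,\beta'}\partial_\beta\det(x^{\lambda'}_{,\lambda})=\partial_{\beta'}\det(x^{\lambda'}_{,\lambda})$, and Jacobi's formula $\partial_\beta\det(x^{\lambda'}_{,\lambda})=\det(x^{\lambda'}_{,\lambda})\,x^{\lambda}_{,\lambda'}x^{\lambda'}_{,\lambda\beta}$, one finds that the term in which $D_\beta$ falls on $x^{\beta}_{,\beta'}$ and the term in which it falls on the determinant cancel each other; the result is that $D_\beta p^{\mu\nu\alpha\beta}$ equals $D_{\beta'}p^{\mu'\nu'\alpha'\beta'}\,x^{\mu}_{,\mu'}x^{\nu}_{,\nu'}x^{\alpha}_{,\alpha'}\det(x^{\lambda'}_{,\lambda})$ plus $p^{\mu'\nu'\alpha'\beta'}$ times precisely three of the five second-derivative terms that appeared in $\partial L/\partial z_{\mu\nu\alpha}$. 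Forming the difference $\partial L/\partial z_{\mu\nu\alpha}-D_\beta p^{\mu\nu\alpha\beta}$, the $D_{\beta'}p^{\mu'\nu'\alpha'\beta'}$ contributions and those three second-derivative terms cancel, leaving $p^{\mu'\nu'\alpha'}x^{\mu}_{,\mu'}x^{\nu}_{,\nu'}x^{\alpha}_{,\alpha'}$ plus $p^{\mu'\nu'\alpha'\beta'}$ times the surviving two terms $x^{\alpha}_{,\beta'}(x^{\mu}_{,\mu'\alpha'}x^{\nu}_{,\nu'}+x^{\mu}_{,\mu'}x^{\nu}_{,\nu'\alpha'})$, all multiplied by $\det(x^{\lambda'}_{,\lambda})$. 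Interchanging $\alpha'$ and $\beta'$ — legitimate because $p^{\mu'\nu'\alpha'\beta'}$ is symmetric in its last two indices — and rewriting via $x^{\beta}_{,\beta'}x^{\mu}_{,\mu'\beta}=x^{\mu}_{,\mu'\beta'}$ turns this into exactly the anomalous term of (\ref{p-1}).

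The main obstacle is not any single identity but the bookkeeping. One must carry the symmetrizations forced by the fact that $z_{\mu\nu\alpha}$ is symmetric in $(\mu\nu)$ and $z_{\mu\nu\alpha\beta}$ in both $(\mu\nu)$ and $(\alpha\beta)$, so that the formal partial derivatives with respect to these coordinates come with the corresponding symmetrization projectors; one must apply the total-derivative and determinant chain-rule identities uniformly across the many terms generated by the product rule; and one must notice, at the very end, that two expressions of rather different appearance coincide once the $(\alpha'\beta')$-symmetry of the top Ostrogradski momentum $p^{\mu'\nu'\alpha'\beta'}$ is invoked. Once (\ref{p-1}) and (\ref{p-2}) are in hand, Lemma \ref{Lemma-4.2} shows that the boundary form $\Xi$ is globally defined on $J^{3}(M,N)$; since invariance of $L$ makes $\pi_2^{3\ast}Ld_4x$ globally defined as well, their sum $\Theta$ is a global De Donder form, and the assignment $Ld_4x\mapsto\Theta$ is by construction a natural differential operator — which is Theorem \ref{Theorem 3.1}.
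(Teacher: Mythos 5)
Your proposal is correct and follows essentially the same route as the paper: (\ref{p-2}) by the chain rule applied to the density identity (\ref{inv-Lag}), and (\ref{p-1}) by combining the six chain-rule contributions to $\partial L/\partial z_{\mu\nu\alpha}$ with the divergence of the already-transformed $p^{\mu\nu\alpha\beta}$, where the determinant-derivative term cancels against the term from differentiating $x^{\beta}_{,\beta'}$ and three of the five second-derivative terms cancel against the product-rule terms of the divergence. The only cosmetic difference is that the paper carries out the divergence computation on the pull-backs $P^{\mu\nu\alpha\beta}=(j^{3}\sigma)^{\ast}p^{\mu\nu\alpha\beta}$ rather than with the total derivative $D_{\beta}$ directly, and your explicit remark that the final identification requires the $(\alpha'\beta')$-symmetry of $p^{\mu'\nu'\alpha'\beta'}$ is a point the paper leaves implicit.
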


\begin{proof}
We start with the second momentum $p^{\mu \nu \alpha \beta }$ and obtain the
transformation law (\ref{p-2}) as follows, 
\begin{eqnarray}
p^{\mu \nu \alpha \beta } &=&\frac{\partial }{\partial z_{\mu \nu \alpha
\beta }}L\left( x^{\mu },y_{\mu \nu },z_{\mu \nu \alpha ,}z_{\mu \nu \alpha
\beta }\right)  \label{second momentum} \\
&=&\frac{\partial }{\partial z_{\mu ^{\prime }\nu ^{\prime }\alpha ^{\prime
}\beta ^{\prime }}}L\left( x^{\mu ^{\prime }},y_{\mu ^{\prime }\nu ^{\prime
}},z_{\mu ^{\prime }\nu ^{\prime }\alpha ^{\prime },}z_{\mu ^{\prime }\nu
^{\prime }\alpha ^{\prime }\beta ^{\prime }}\right) \det (x_{,\lambda
}^{\lambda ^{\prime }})\frac{\partial z_{\mu ^{\prime }\nu ^{\prime }\alpha
^{\prime }\beta ^{\prime }}}{\partial z_{\mu \nu \alpha \beta }}  \notag \\
&=&p^{\mu ^{\prime }\nu ^{\prime }\alpha ^{\prime }\beta ^{\prime }}\det
(x_{,\lambda }^{\lambda ^{\prime }})x_{,\mu ^{\prime }}^{\mu }x_{,\nu
^{\prime }}^{\nu }x_{,\alpha ^{\prime }}^{\alpha }x_{,\beta ^{\prime
}}^{\beta },  \notag
\end{eqnarray}%
where we have used the chain rule, the prolonged coordinate transformation
for $z_{\mu ^{\prime }\nu ^{\prime }\alpha ^{\prime }\beta ^{\prime }}$
given in (\ref{prolonged-g}), and equation (\ref{inv-Lag}).

In order to show that the first momentum $p^{\mu \nu \alpha }$ satisfies
transformation law (\ref{p-1}), start first with the term $\partial
L/\partial z_{\mu \nu \alpha }.$ As in equation (\ref{second momentum}), 
\begin{eqnarray}
&&\frac{\partial L}{\partial z_{\mu \nu \alpha }}=\frac{\partial z_{\mu
^{\prime }\nu ^{\prime }\alpha ^{\prime }}}{\partial z_{\mu \nu \alpha }}%
\frac{\partial L}{\partial z_{\mu ^{\prime }\nu ^{\prime }\alpha ^{\prime }}}%
\det (x_{,\lambda }^{\lambda ^{\prime }})+\frac{\partial z_{\mu ^{\prime
}\nu ^{\prime }\alpha ^{\prime }\beta ^{\prime }}}{\partial z_{\mu \nu
\alpha }}\frac{\partial L}{\partial z_{\mu ^{\prime }\nu ^{\prime }\alpha
^{\prime }\beta ^{\prime }}}\det (x_{,\lambda }^{\lambda ^{\prime }})  \notag
\\
&=&x_{,\mu ^{\prime }}^{\mu }x_{,\nu ^{\prime }}^{\nu }x_{,\alpha ^{\prime
}}^{\alpha }\frac{\partial L}{\partial z_{\mu ^{\prime }\nu ^{\prime }\alpha
^{\prime }}}\det (x_{,\lambda }^{\lambda ^{\prime }})  \notag \\
&&+\left( (x_{,\mu ^{\prime }}^{\mu }x_{,\nu ^{\prime }}^{\nu }x_{,\alpha
^{\prime }}^{\alpha })_{,\beta ^{\prime }}+x_{,\beta ^{\prime }}^{\alpha
}x_{,\mu ^{\prime }\alpha ^{\prime }}^{\mu }x_{,\nu ^{\prime }}^{\nu
}+x_{,\beta ^{\prime }}^{\alpha }x_{,\mu ^{\prime }}^{\mu }x_{,\nu ^{\prime
}\alpha ^{\prime }}^{\nu }\right) \frac{\partial L}{\partial z_{\mu ^{\prime
}\nu ^{\prime }\alpha ^{\prime }\beta ^{\prime }}}\det (x_{,\lambda
}^{\lambda ^{\prime }}).  \label{Ost-Mom-1-1}
\end{eqnarray}%
In order to compute the divergence term, we work with pull-backs $P^{\mu \nu
\alpha }=(j^{3}\sigma )^{\ast }p^{\mu \nu \alpha }$ and $P^{\mu \nu \alpha
\beta }=(j^{3}\sigma )^{\ast }p^{\mu \nu \alpha \beta }$, which allows
replacing total derivative by partial derivative, see equation (\ref%
{divergence}). We obtain 
\begin{eqnarray}
j^{3}\sigma ^{\ast }(D_{\beta }p^{\mu \nu \alpha \beta }) &=&\left(
j^{3}\sigma ^{\ast }p^{\mu \nu \alpha \beta }\right) _{,\beta }=P_{,\beta
}^{\mu \nu \alpha \beta }=\left( P^{\mu ^{\prime }\nu ^{\prime }\alpha
^{\prime }\beta ^{\prime }}\det (x_{,\lambda }^{\lambda ^{\prime }})x_{,\mu
^{\prime }}^{\mu }x_{,\nu ^{\prime }}^{\nu }x_{,\alpha ^{\prime }}^{\alpha
}x_{,\beta ^{\prime }}^{\beta }\right) _{,\beta }  \notag \\
&=&\left( P^{\mu ^{\prime }\nu ^{\prime }\alpha ^{\prime }\beta ^{\prime
}}\det (x_{,\lambda }^{\lambda ^{\prime }})x_{,\mu ^{\prime }}^{\mu }x_{,\nu
^{\prime }}^{\nu }x_{,\alpha ^{\prime }}^{\alpha }x_{,\beta ^{\prime
}}^{\beta }\right) _{,\gamma ^{\prime }}x_{,\beta }^{\gamma ^{\prime }} 
\notag \\
&=&P_{,\beta ^{\prime }}^{\mu ^{\prime }\nu ^{\prime }\alpha ^{\prime }\beta
^{\prime }}\det (x_{,\lambda }^{\lambda ^{\prime }})x_{,\mu ^{\prime }}^{\mu
}x_{,\nu ^{\prime }}^{\nu }x_{,\alpha ^{\prime }}^{\alpha }+P^{\mu ^{\prime
}\nu ^{\prime }\alpha ^{\prime }\beta ^{\prime }}\det (x_{,\lambda
}^{\lambda ^{\prime }})_{,\beta ^{\prime }}(x_{,\mu ^{\prime }}^{\mu
}x_{,\nu ^{\prime }}^{\nu }x_{,\alpha ^{\prime }}^{\alpha })\notag
\\
&&+P^{\mu ^{\prime }\nu ^{\prime }\alpha ^{\prime }\beta ^{\prime }}\det
(x_{,\lambda }^{\lambda ^{\prime }})(x_{,\mu ^{\prime }}^{\mu }x_{,\nu
^{\prime }}^{\nu }x_{,\alpha ^{\prime }}^{\alpha })_{,\beta ^{\prime
}} \notag \\ &&+P^{\mu ^{\prime }\nu ^{\prime }\alpha ^{\prime }\beta ^{\prime }}\det
(x_{,\lambda }^{\lambda ^{\prime }})x_{,\mu ^{\prime }}^{\mu }x_{,\nu
^{\prime }}^{\nu }x_{,\alpha ^{\prime }}^{\alpha }x_{,\beta ^{\prime }\gamma
^{\prime }}^{\beta }x_{,\beta }^{\gamma ^{\prime }}.  \label{Ost-Mom-1-2} 
\end{eqnarray}%
Note that 
\begin{equation}
\det (x_{,\lambda }^{\lambda ^{\prime }})_{,\beta ^{\prime }}=\frac{\partial
\det (x_{,\lambda }^{\lambda ^{\prime }})}{\partial x_{,\beta }^{\gamma
^{\prime }}}\frac{\partial x_{,\beta }^{\gamma ^{\prime }}}{\partial
x^{\beta ^{\prime }}}=\det (x_{,\lambda }^{\lambda ^{\prime }})x_{,\gamma
^{\prime }}^{\beta }\frac{\partial x_{,\beta }^{\gamma ^{\prime }}}{\partial
x^{\beta ^{\prime }}},  \label{detprop}
\end{equation}%
and%
\begin{equation*}
\frac{\partial }{\partial x^{\beta ^{\prime }}}\left( x_{,\beta }^{\gamma
^{\prime }}x_{,\gamma ^{\prime }}^{\beta }\right) =\frac{\partial }{\partial
x^{\beta ^{\prime }}}\delta _{\gamma ^{\prime }}^{\gamma ^{\prime }}=0,
\end{equation*}%
which implies 
\begin{equation}
x_{,\gamma ^{\prime }}^{\beta }\frac{\partial x_{,\beta }^{\gamma ^{\prime }}%
}{\partial x^{\beta ^{\prime }}}=-x_{,\beta }^{\gamma ^{\prime }}\frac{%
\partial x_{,\gamma ^{\prime }}^{\beta }}{\partial x^{\beta ^{\prime }}}.
\label{ident}
\end{equation}%
Therefore, the second term on the right hand side of equation (\ref%
{Ost-Mom-1-2}) reads%
\begin{eqnarray*}
P^{\mu ^{\prime }\nu ^{\prime }\alpha ^{\prime }\beta ^{\prime }}\det
(x_{,\lambda }^{\lambda ^{\prime }})_{,\beta ^{\prime }}(x_{,\mu ^{\prime
}}^{\mu }x_{,\nu ^{\prime }}^{\nu }x_{,\alpha ^{\prime }}^{\alpha })
&=&P^{\mu ^{\prime }\nu ^{\prime }\alpha ^{\prime }\beta ^{\prime }}\det
(x_{,\lambda }^{\lambda ^{\prime }})x_{,\gamma ^{\prime }}^{\beta }\frac{%
\partial x_{,\beta }^{\gamma ^{\prime }}}{\partial x^{\beta ^{\prime }}}%
x_{,\mu ^{\prime }}^{\mu }x_{,\nu ^{\prime }}^{\nu }x_{,\alpha ^{\prime
}}^{\alpha } \\
&=&-P^{\mu ^{\prime }\nu ^{\prime }\alpha ^{\prime }\beta ^{\prime }}\det
(x_{,\lambda }^{\lambda ^{\prime }})x_{,\beta }^{\gamma ^{\prime
}}x_{,\gamma ^{\prime }\beta ^{\prime }}^{\beta }x_{,\mu ^{\prime }}^{\mu
}x_{,\nu ^{\prime }}^{\nu }x_{,\alpha ^{\prime }}^{\alpha },
\end{eqnarray*}%
In the light of this, we can rewrite $P_{,\beta }^{\mu \nu \alpha \beta }$
in (\ref{Ost-Mom-1-2}) as%
\begin{eqnarray}
P_{,\beta }^{\mu \nu \alpha \beta } &=&P_{,\beta ^{\prime }}^{\mu ^{\prime
}\nu ^{\prime }\alpha ^{\prime }\beta ^{\prime }}\det (x_{,\lambda
}^{\lambda ^{\prime }})x_{,\mu ^{\prime }}^{\mu }x_{,\nu ^{\prime }}^{\nu
}x_{,\alpha ^{\prime }}^{\alpha }-P^{\mu ^{\prime }\nu ^{\prime }\alpha
^{\prime }\beta ^{\prime }}\det (x_{,\lambda }^{\lambda ^{\prime
}})x_{,\beta }^{\gamma ^{\prime }}x_{,\gamma ^{\prime }\beta ^{\prime
}}^{\beta }x_{,\mu ^{\prime }}^{\mu }x_{,\nu ^{\prime }}^{\nu }x_{,\alpha
^{\prime }}^{\alpha }  \notag \\
&&+P^{\mu ^{\prime }\nu ^{\prime }\alpha ^{\prime }\beta ^{\prime }}\det
(x_{,\lambda }^{\lambda ^{\prime }})(x_{,\mu ^{\prime }}^{\mu }x_{,\nu
^{\prime }}^{\nu }x_{,\alpha ^{\prime }}^{\alpha })_{,\beta ^{\prime
}}+P^{\mu ^{\prime }\nu ^{\prime }\alpha ^{\prime }\beta ^{\prime }}\det
(x_{,\lambda }^{\lambda ^{\prime }})x_{,\mu ^{\prime }}^{\mu }x_{,\nu
^{\prime }}^{\nu }x_{,\alpha ^{\prime }}^{\alpha }x_{,\beta ^{\prime }\gamma
^{\prime }}^{\beta }x_{,\beta }^{\gamma ^{\prime }}  \notag \\
&=&P_{,\beta ^{\prime }}^{\mu ^{\prime }\nu ^{\prime }\alpha ^{\prime }\beta
^{\prime }}\det (x_{,\lambda }^{\lambda ^{\prime }})x_{,\mu ^{\prime }}^{\mu
}x_{,\nu ^{\prime }}^{\nu }x_{,\alpha ^{\prime }}^{\alpha }+P^{\mu ^{\prime
}\nu ^{\prime }\alpha ^{\prime }\beta ^{\prime }}\det (x_{,\lambda
}^{\lambda ^{\prime }})(x_{,\mu ^{\prime }}^{\mu }x_{,\nu ^{\prime }}^{\nu
}x_{,\alpha ^{\prime }}^{\alpha })_{,\beta ^{\prime }}  \notag \\
&&+P^{\mu ^{\prime }\nu ^{\prime }\alpha ^{\prime }\beta ^{\prime }}\det
(x_{,\lambda }^{\lambda ^{\prime }})x_{,\mu ^{\prime }}^{\mu }x_{,\nu
^{\prime }}^{\nu }x_{,\alpha ^{\prime }}^{\alpha }(x_{,\beta ^{\prime
}\gamma ^{\prime }}^{\beta }x_{,\beta }^{\gamma ^{\prime }}-x_{,\beta
}^{\gamma ^{\prime }}x_{,\gamma ^{\prime }\beta ^{\prime }}^{\beta })  \notag
\\
&=&P_{,\beta ^{\prime }}^{\mu ^{\prime }\nu ^{\prime }\alpha ^{\prime }\beta
^{\prime }}\det (x_{,\lambda }^{\lambda ^{\prime }})x_{,\mu ^{\prime }}^{\mu
}x_{,\nu ^{\prime }}^{\nu }x_{,\alpha ^{\prime }}^{\alpha }+P^{\mu ^{\prime
}\nu ^{\prime }\alpha ^{\prime }\beta ^{\prime }}\det (x_{,\lambda
}^{\lambda ^{\prime }})(x_{,\mu ^{\prime }}^{\mu }x_{,\nu ^{\prime }}^{\nu
}x_{,\alpha ^{\prime }}^{\alpha })_{,\beta ^{\prime }}.
\label{Ost-Mom-1-2-2}
\end{eqnarray}%
In order to arrive at the coordinate transformation for Ostrogradski's
momentum $P^{\mu \nu \alpha },$ we simply take the difference of $%
(j^{2}\sigma )^{\ast }\left( \partial L/\partial z_{\mu \nu \alpha }\right) $
in (\ref{Ost-Mom-1-1}) and $P_{,\beta }^{\mu \nu \alpha \beta }$ in (\ref%
{Ost-Mom-1-2-2}). So that, 
\begin{eqnarray}
P^{\mu \nu \alpha } &=&(j^{2}\sigma )^{\ast }\left( \frac{\partial L}{%
\partial z_{\mu \nu \alpha }}\right) \left( x^{\mu },y_{\mu \nu },z_{\mu \nu
\alpha ,}z_{\mu \nu \alpha \beta }\right) -P_{,\beta }^{\mu \nu \alpha \beta
} 
 \notag \\
&=&x_{,\mu ^{\prime }}^{\mu }x_{,\nu ^{\prime }}^{\nu }x_{,\alpha ^{\prime
}}^{\alpha }(j^{2}\sigma )^{\ast }\left( \frac{\partial L}{\partial z_{\mu
^{\prime }\nu ^{\prime }\alpha ^{\prime }}}\right) \det (x_{,\lambda
}^{\lambda ^{\prime }}) \notag \\
&& +(x_{,\mu ^{\prime }}^{\mu }x_{,\nu ^{\prime }}^{\nu
}x_{,\alpha ^{\prime }}^{\alpha })_{,\beta ^{\prime }}(j^{2}\sigma )^{\ast
}\left( \frac{\partial L}{\partial z_{\mu ^{\prime }\nu ^{\prime }\alpha
^{\prime }\beta ^{\prime }}}\right) \det (x_{,\lambda }^{\lambda ^{\prime
}})
 \notag \\
&&+\left( x_{,\beta ^{\prime }}^{\alpha }x_{,\mu ^{\prime }\alpha ^{\prime
}}^{\mu }x_{,\nu ^{\prime }}^{\nu }+x_{,\beta ^{\prime }}^{\alpha }x_{,\mu
^{\prime }}^{\mu }x_{,\nu ^{\prime }\alpha ^{\prime }}^{\nu }\right)
(j^{2}\sigma )^{\ast }\left( \frac{\partial L}{\partial z_{\mu ^{\prime }\nu
^{\prime }\alpha ^{\prime }\beta ^{\prime }}}\right) \det (x_{,\lambda
}^{\lambda ^{\prime }})  \notag \\
&&-P_{,\beta ^{\prime }}^{\mu ^{\prime }\nu ^{\prime }\alpha ^{\prime }\beta
^{\prime }}\det (x_{,\lambda }^{\lambda ^{\prime }})x_{,\mu ^{\prime }}^{\mu
}x_{,\nu ^{\prime }}^{\nu }x_{,\alpha ^{\prime }}^{\alpha }-P^{\mu ^{\prime
}\nu ^{\prime }\alpha ^{\prime }\beta ^{\prime }}\det (x_{,\lambda
}^{\lambda ^{\prime }})(x_{,\mu ^{\prime }}^{\mu }x_{,\nu ^{\prime }}^{\nu
}x_{,\alpha ^{\prime }}^{\alpha })_{,\beta ^{\prime }}  \notag \\
&=&\left( x_{,\mu ^{\prime }}^{\mu }x_{,\nu ^{\prime }}^{\nu }x_{,\alpha
^{\prime }}^{\alpha }\right) \det (x_{,\lambda }^{\lambda ^{\prime }})\left(
(j^{2}\sigma )^{\ast }\left( \frac{\partial L}{\partial z_{\mu ^{\prime }\nu
^{\prime }\alpha ^{\prime }}}\right) -P_{,\beta ^{\prime }}^{\mu ^{\prime
}\nu ^{\prime }\alpha ^{\prime }\beta ^{\prime }}\det (x_{,\lambda ^{\prime
}}^{\lambda })\right)  \notag \\
&&+\det (x_{,\lambda }^{\lambda ^{\prime }})(x_{,\mu ^{\prime }}^{\mu
}x_{,\nu ^{\prime }}^{\nu }x_{,\alpha ^{\prime }}^{\alpha })_{,\beta
^{\prime }}\left( (j^{2}\sigma )^{\ast }\left( \frac{\partial L}{\partial
z_{\mu ^{\prime }\nu ^{\prime }\alpha ^{\prime }\beta ^{\prime }}}\right)
-P^{\mu ^{\prime }\nu ^{\prime }\alpha ^{\prime }\beta ^{\prime }}\right) 
\notag \\
&&+\left( x_{,\beta ^{\prime }}^{\alpha }x_{,\mu ^{\prime }\alpha ^{\prime
}}^{\mu }x_{,\nu ^{\prime }}^{\nu }+x_{,\beta ^{\prime }}^{\alpha }x_{,\mu
^{\prime }}^{\mu }x_{,\nu ^{\prime }\alpha ^{\prime }}^{\nu }\right)
(j^{2}\sigma )^{\ast }\left( \frac{\partial L}{\partial z_{\mu ^{\prime }\nu
^{\prime }\alpha ^{\prime }\beta ^{\prime }}}\right) \det (x_{,\lambda
}^{\lambda ^{\prime }})  \label{Ost-Mom-1-2-3} \\
&=&P^{\mu ^{\prime }\nu ^{\prime }\alpha ^{\prime }}x_{,\mu ^{\prime }}^{\mu
}x_{,\nu ^{\prime }}^{\nu }x_{,\alpha ^{\prime }}^{\alpha }\det (x_{,\lambda
}^{\lambda ^{\prime }})+\left( x_{,\beta ^{\prime }}^{\alpha }x_{,\mu
^{\prime }\alpha ^{\prime }}^{\mu }x_{,\nu ^{\prime }}^{\nu }+x_{,\beta
^{\prime }}^{\alpha }x_{,\mu ^{\prime }}^{\mu }x_{,\nu ^{\prime }\alpha
^{\prime }}^{\nu }\right) P^{\mu ^{\prime }\nu ^{\prime }\alpha ^{\prime
}\beta ^{\prime }}\det (x_{,\lambda }^{\lambda ^{\prime }}).  \notag
\end{eqnarray}%
Here, we used notation (\ref{3.5x}) in the primed coordinates, which yields 
\begin{eqnarray*}
P^{\mu ^{\prime }\nu ^{\prime }\alpha ^{\prime }\beta ^{\prime }}
&=&(j^{3}\sigma )^{\ast }p^{\mu ^{\prime }\nu ^{\prime }\alpha ^{\prime
}\beta \prime }=(j^{2}\sigma )^{\ast }\left( \frac{\partial L}{\partial
z_{\mu ^{\prime }\nu ^{\prime }\alpha ^{\prime }\beta ^{\prime }}}\right) ,
\\
P^{\mu ^{\prime }\nu ^{\prime }\alpha ^{\prime }} &=&(j^{3}\sigma )^{\ast
}p^{\mu ^{\prime }\nu ^{\prime }\alpha ^{\prime }}=(j^{2}\sigma )^{\ast }%
\frac{\partial L}{\partial z_{\mu ^{\prime }\nu ^{\prime }\alpha ^{\prime }}}%
-(j^{3}\sigma )^{\ast }\left( D_{\beta ^{\prime }}\frac{\partial L}{\partial
z_{\mu ^{\prime }\nu ^{\prime }\alpha ^{\prime }\beta ^{\prime }}}\right) \\
&=&(j^{2}\sigma )^{\ast }\left( \frac{\partial L}{\partial z_{\mu ^{\prime
}\nu ^{\prime }\alpha ^{\prime }}}\right) -P_{,\beta ^{\prime }}^{\mu
^{\prime }\nu ^{\prime }\alpha ^{\prime }\beta ^{\prime }}.
\end{eqnarray*}%
Equation (\ref{Ost-Mom-1-2-3}) may be rewritten as 
\begin{equation*}
P^{\mu \nu \alpha }=P^{\mu ^{\prime }\nu ^{\prime }\alpha ^{\prime }}x_{,\mu
^{\prime }}^{\mu }x_{,\nu ^{\prime }}^{\nu }x_{,\alpha ^{\prime }}^{\alpha
}\det (x_{,\lambda }^{\lambda ^{\prime }})+\left( x_{,\beta ^{\prime
}}^{\alpha }x_{,\mu ^{\prime }\alpha ^{\prime }}^{\mu }x_{,\nu ^{\prime
}}^{\nu }+x_{,\beta ^{\prime }}^{\alpha }x_{,\mu ^{\prime }}^{\mu }x_{,\nu
^{\prime }\alpha ^{\prime }}^{\nu }\right) P^{\mu ^{\prime }\nu ^{\prime
}\alpha ^{\prime }\beta ^{\prime }}\det (x_{,\lambda }^{\lambda ^{\prime }}).
\end{equation*}%
Since this equation is valid for every section $\sigma $ of $\pi
:N\rightarrow M$, it follows that 
\begin{equation*}
p^{\mu \nu \alpha }=p^{\mu ^{\prime }\nu ^{\prime }\alpha ^{\prime }}x_{,\mu
^{\prime }}^{\mu }x_{,\nu ^{\prime }}^{\nu }x_{,\alpha ^{\prime }}^{\alpha
}\det (x_{,\lambda }^{\lambda ^{\prime }})+\left( x_{,\beta ^{\prime
}}^{\alpha }x_{,\mu ^{\prime }\alpha ^{\prime }}^{\mu }x_{,\nu ^{\prime
}}^{\nu }+x_{,\beta ^{\prime }}^{\alpha }x_{,\mu ^{\prime }}^{\mu }x_{,\nu
^{\prime }\alpha ^{\prime }}^{\nu }\right) p^{\mu ^{\prime }\nu ^{\prime
}\alpha ^{\prime }\beta ^{\prime }}\det (x_{,\lambda }^{\lambda ^{\prime }}),
\end{equation*}%
where $p^{\mu \nu \alpha }$ and $p^{\mu ^{\prime }\nu ^{\prime }\alpha
^{\prime }\beta ^{\prime }}$ are Ostrogradski's momenta corresponding to the
Lagrangian form $Ld_{4}x$, (\ref{Ostrogradski Momenta}). This completes
proof of Lemma \ref{Lemma-4.3}. \smallskip
\end{proof}

It follows from Lemma \ref{Lemma-4.2} and Lemma \ref{Lemma-4.3} that for an
invariant Lagrangian form $Ld_{4}x$, the corresponding boundary form has the
same expression in the class of coordinate system on $M$, which differ by
orientation preserving transformations. Therefore, the boundary form $\Xi $
is globally defined and is given by a natural differential operator applied
to the to the Lagrangian form $Ld_{4}x$. Since $\Theta =\pi _{2}^{3\ast }L%
\mathrm{d}_{4}x+\Xi $, it follows that the De Donder form $\Theta $ is
globally defined and is also given by a natural differential operator
applied to the to the Lagrangian form $Ld_{4}x$. This completes proof of
Theorem 2.

\subsection{Proof of Proposition 5}

The outline of the proof is as follows. First, we will write the Hilbert
Lagrangian (\ref{3.9a}) in terms of the metric tensor and its partial
derivatives. Such kind of a local presentation of the Hilbert Lagrangian
will enable us to prove the Lemma \ref{Lemma-5.1} where we shall exhibit the
induced Ostrogradski's momenta. Then, we will be ready for the calculation
of the De Donder form (\ref{3.9b}) in an explicit form.

Recall that the Christoffel symbols of the first kind $\Gamma _{\lambda \mu
\nu }$ and the Christoffel symbols of the second kind $\Gamma _{\mu \nu
}^{\rho }$ are defined and related as%
\begin{equation}
\Gamma _{\mu \nu }^{\rho }=g^{\rho \lambda }\Gamma _{\lambda \mu \nu }=\frac{%
1}{2}g^{\rho \lambda }(g_{\mu \lambda ,\nu }+g_{\nu \lambda ,\mu }-g_{\mu
\nu ,\lambda }),  \label{Christoffel}
\end{equation}%
where $g^{\rho \lambda }$ is the dual of the metric tensor $g_{\rho \lambda
} $ whereas $g_{\mu \lambda ,\nu }$ denotes the partial derivative of $%
g_{\mu \lambda }$ with respect to $x^{\nu }$. It is possible to write the
Christoffel symbols in a pure contravariant form%
\begin{equation}
\Gamma ^{\lambda \mu \nu }=g^{\lambda \alpha }g^{\mu \beta }g^{\nu \gamma
}\Gamma _{\alpha \beta \gamma }.  \label{Gamma-contravariant}
\end{equation}%
For future reference, we define here some symbols by contacting the
Christoffel symbols 
\begin{eqnarray}
\Gamma ^{\lambda } &:&=g^{\mu \nu }\Gamma _{\mu \nu }^{\lambda }=g_{\mu \nu
}\Gamma ^{\lambda \mu \nu },  \label{G-T} \\
\Delta ^{\nu } &:&=g^{\nu \mu }\Delta _{\mu }=g_{\lambda \mu }\Gamma
^{\lambda \nu \mu },  \label{D-T} \\
\Gamma _{\rho } &:&=g_{\rho \lambda }\Gamma ^{\lambda }=g^{\mu \nu }\Gamma
_{\rho \mu \nu },  \label{G-D} \\
\Delta _{\mu } &:&=\Gamma _{\mu \lambda }^{\lambda }=g^{\lambda \nu }\Gamma
_{\lambda \mu \nu }.  \label{D-D}
\end{eqnarray}%
Taking the derivative of the identity $g^{\rho \lambda }g_{\lambda \mu
}=\delta _{\mu }^{\rho }$, one arrives at the relation between $g_{,\gamma
}^{\rho \lambda }$ and $\Gamma _{\lambda \mu \nu }$ as follows 
\begin{equation}
g_{,\gamma }^{\alpha \delta }=-g^{\alpha \mu }g^{\delta \nu }g_{\mu \nu
,\gamma }=-g^{\alpha \mu }g^{\delta \nu }\left( \Gamma _{\mu \nu \gamma
}+\Gamma _{\nu \mu \gamma }\right) ,  \label{simpl-1}
\end{equation}%
whereas the contraction of this yields%
\begin{equation}
g_{,\alpha }^{\alpha \delta }=-g^{\alpha \mu }g^{\delta \nu }\left( \Gamma
_{\mu \nu \alpha }+\Gamma _{\nu \mu \alpha }\right) .  \label{simpl-2}
\end{equation}

Recall also that, the Riemann and the Ricci tensors are 
\begin{eqnarray}
R_{\beta \gamma \delta }^{\alpha } &=&\Gamma _{\beta \delta ,\gamma
}^{\alpha }-\Gamma _{\beta \gamma ,\delta }^{\alpha }+\Gamma _{\mu \gamma
}^{\alpha }\Gamma _{\beta \delta }^{\mu }-\Gamma _{\mu \delta }^{\alpha
}\Gamma _{\beta \gamma }^{\mu }.  \label{Riemann} \\
R_{\beta \delta } &=&R_{\beta \alpha \delta }^{\alpha }=\Gamma _{\beta
\delta ,\alpha }^{\alpha }-\Gamma _{\beta \alpha ,\delta }^{\alpha }+\Gamma
_{\mu \alpha }^{\alpha }\Gamma _{\beta \delta }^{\mu }-\Gamma _{\mu \delta
}^{\alpha }\Gamma _{\beta \alpha }^{\mu },  \label{Ricci}
\end{eqnarray}%
respectively. Here, $\Gamma _{\beta \delta ,\gamma }^{\alpha }$ denotes the
partial derivative of $\Gamma _{\beta \delta }^{\alpha }$ with respect to $%
x^{\gamma }$. In this local representation, the scalar curvature is defined
to be 
\begin{equation}
R=g^{\beta \gamma }R_{\beta \gamma }=g^{\beta \gamma }\left( \Gamma _{\beta
\gamma ,\alpha }^{\alpha }-\Gamma _{\beta \alpha ,\gamma }^{\alpha }+\Gamma
_{\mu \alpha }^{\alpha }\Gamma _{\beta \gamma }^{\mu }-\Gamma _{\mu \gamma
}^{\alpha }\Gamma _{\beta \alpha }^{\mu }\right) .  \label{Scalar}
\end{equation}%
Note that the presentation (\ref{Scalar}) is in terms of the Christoffel
symbols of the second kind. It is possible to write $R$ in terms of the
Christoffel symbols of the first kind and its partial derivative as well.
Simply, by substituting the definition in (\ref{Christoffel}), we compute%
\begin{eqnarray*}
R &=&g^{\beta \gamma }\left( g^{\alpha \delta }\Gamma _{\delta \beta \gamma
}\right) _{,\alpha }-g^{\beta \gamma }\left( g^{\alpha \delta }\Gamma
_{\delta \beta \alpha }\right) _{,\gamma }+g^{\beta \gamma }g^{\alpha \rho
}g^{\mu \sigma }(\Gamma _{\rho \mu \alpha }\Gamma _{\sigma \beta \gamma
}-\Gamma _{\rho \mu \gamma }\Gamma _{\sigma \beta \alpha }) \\
&=&g^{\beta \gamma }g^{\alpha \delta }\left( \Gamma _{\delta \beta \gamma
,\alpha }-\Gamma _{\delta \beta \alpha ,\gamma }\right) +g^{\beta \gamma
}\left( g_{,\alpha }^{\alpha \delta }\Gamma _{\delta \beta \gamma
}-g_{,\gamma }^{\alpha \delta }\Gamma _{\delta \beta \alpha }\right) \\
&&+g^{\beta \gamma }g^{\alpha \rho }g^{\mu \sigma }\left( \Gamma _{\rho \mu
\alpha }\Gamma _{\sigma \beta \gamma }-\Gamma _{\rho \mu \gamma }\Gamma
_{\sigma \beta \alpha }\right) .
\end{eqnarray*}%
Notice that, the symbols $\Gamma _{\rho \mu \alpha }$ contain the first
derivative $g_{\mu \nu ,\lambda }$ of the metric $g_{\mu \nu }$, so that the
partial derivative $\Gamma _{\delta \beta \gamma ,\alpha }$ of the symbols
are containing the second partial derivative $g_{\mu \nu ,\lambda \gamma }$
of $g_{\mu \nu }$. In accordance with this, we understand the scalar
curvature $R$ as the sum of two terms, say $R_{1}$ and $R_{2}$ by putting
all the first order terms that is those involving $\Gamma _{\rho \mu \alpha
} $ into $R_{1}$, and by putting all the second order terms that is those
involving $\Gamma _{\delta \beta \gamma ,\alpha }$ into $R_{2}$, that is $%
R=R_{1}+R_{2}$ and 
\begin{eqnarray*}
R_{1} &=&g^{\beta \gamma }\left( g_{,\alpha }^{\alpha \delta }\Gamma
_{\delta \beta \gamma }-g_{,\gamma }^{\alpha \delta }\Gamma _{\delta \beta
\alpha }\right) +g^{\beta \gamma }g^{\alpha \rho }g^{\mu \sigma }\left(
\Gamma _{\rho \mu \alpha }\Gamma _{\sigma \beta \gamma }-\Gamma _{\rho \mu
\gamma }\Gamma _{\sigma \beta \alpha }\right) \\
R_{2} &=&g^{\beta \gamma }g^{\alpha \delta }\left( \Gamma _{\delta \beta
\gamma ,\alpha }-\Gamma _{\delta \beta \alpha ,\gamma }\right) .
\end{eqnarray*}%
Therefore, we write the Lagrangian as%
\begin{equation}
L=R_{1}\sqrt{-\det g}+R_{2}\sqrt{-\det g},  \label{L2}
\end{equation}%
where $R_{1}$ depends linearly on $g_{\mu \nu ,\alpha }$ linearly while $%
R_{2}$ depends quadratically on $g_{\mu \nu ,\alpha \beta }$. A
simplification is possible for $R_{1}$. See that, 
\begin{eqnarray}
R_{1} &=&g^{\beta \gamma }\left( -g^{\alpha \mu }g^{\delta \nu }\left(
\Gamma _{\mu \nu \alpha }+\Gamma _{\nu \mu \alpha }\right) \Gamma _{\delta
\beta \gamma }+g^{\alpha \mu }g^{\delta \nu }\left( \Gamma _{\mu \nu \gamma
}+\Gamma _{\nu \mu \gamma }\right) \Gamma _{\delta \beta \alpha }\right) 
\notag \\
&&+g^{\beta \gamma }g^{\alpha \rho }g^{\mu \sigma }\left( \Gamma _{\rho \mu
\alpha }\Gamma _{\sigma \beta \gamma }-\Gamma _{\rho \mu \gamma }\Gamma
_{\sigma \beta \alpha }\right)  \notag \\
&=&\Gamma _{\mu \nu \lambda }\Gamma _{\alpha \beta \gamma }\left( -g^{\mu
\lambda }g^{\nu \alpha }g^{\beta \gamma }-g^{\mu \alpha }g^{\nu \lambda
}g^{\beta \gamma }+g^{\mu \gamma }g^{\nu \alpha }g^{\lambda \beta }+g^{\mu
\alpha }g^{\nu \gamma }g^{\lambda \beta }\right)  \notag \\
&&+\Gamma _{\mu \nu \lambda }\Gamma _{\alpha \beta \gamma }\left( g^{\mu
\lambda }g^{\nu \alpha }g^{\beta \gamma }-g^{\mu \gamma }g^{\nu \alpha
}g^{\lambda \beta }\right)  \notag \\
&=&\Gamma _{\mu \nu \lambda }\Gamma _{\alpha \beta \gamma }\left( -g^{\mu
\alpha }g^{\nu \lambda }g^{\beta \gamma }+g^{\mu \alpha }g^{\nu \gamma
}g^{\lambda \beta }\right) ,  \label{R1}
\end{eqnarray}%
where we have employed the identities in (\ref{simpl-1}) and (\ref{simpl-2})
in the first line. In the third line, the first term in the parentheses is
canceling with the fifth term, and the third term in the parentheses is
canceling with the sixth term. We write $R_{2}$ in terms of the metric
tensor 
\begin{eqnarray}
R_{2} &=&g^{\beta \gamma }g^{\alpha \delta }\left( \Gamma _{\delta \beta
\gamma ,\alpha }-\Gamma _{\delta \beta \alpha ,\gamma }\right)  \notag \\
&=&\frac{1}{2}g^{\beta \gamma }g^{\alpha \delta }\left( g_{\beta \delta
,\gamma \alpha }+g_{\gamma \delta ,\beta \alpha }-g_{\beta \gamma ,\delta
\alpha }-g_{\beta \delta ,\alpha \gamma }-g_{\alpha \delta ,\beta \gamma
}+g_{\beta \alpha ,\delta \gamma }\right)  \notag \\
&=&\frac{1}{2}g_{\mu \nu ,\alpha \beta }\left( g^{\mu \alpha }g^{\nu \beta
}+g^{\mu \beta }g^{\nu \alpha }-2g^{\mu \nu }g^{\alpha \beta }\right) .
\label{R2}
\end{eqnarray}%
In the following Lemma, we are stating the conjugate momenta induced by the
Hilbert Lagrangian.

\begin{lemma}
\label{Lemma-5.1}The Ostrogradski's momenta induced by the Hilbert
Lagrangian (\ref{L2}) are 
\begin{eqnarray}
P^{\mu \nu \alpha } &=&\frac{1}{2}\left( -g^{\mu \alpha }\Gamma ^{\nu
}-g^{\nu \alpha }\Gamma ^{\mu }+\Gamma ^{\nu \mu \alpha }+\Gamma ^{\mu \nu
\alpha }\right) \sqrt{-\det g}  \label{P-1-EH-L} \\
P^{\mu \nu \alpha \beta } &=&\frac{1}{2}\left( g^{\mu \alpha }g^{\nu \beta
}+g^{\mu \beta }g^{\nu \alpha }-2g^{\mu \nu }g^{\alpha \beta }\right) \sqrt{%
-\det g},  \label{P-2-EH-L-}
\end{eqnarray}%
where the symbol $\Gamma ^{\nu }$ is the one defined in (\ref{G-T}).
\end{lemma}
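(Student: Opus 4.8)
The plan is to compute both Ostrogradski momenta directly from the decomposition $L=R_{1}\sqrt{-\det g}+R_{2}\sqrt{-\det g}$ obtained in (\ref{R1})--(\ref{R2}). First I would record three structural facts: $\sqrt{-\det g}$ depends only on the fibre coordinates $y_{\mu\nu}$; $R_{2}$ depends on the jet coordinates $z_{\mu\nu\alpha\beta}$ linearly and not at all on $z_{\mu\nu\alpha}$; and $R_{1}$ depends on $z_{\mu\nu\alpha}$ quadratically through the Christoffel symbols of the first kind $\Gamma_{\lambda\mu\nu}=\tfrac{1}{2}(g_{\mu\lambda,\nu}+g_{\nu\lambda,\mu}-g_{\mu\nu,\lambda})$ and not at all on $z_{\mu\nu\alpha\beta}$. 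Consequently $\partial L/\partial z_{\mu\nu\alpha\beta}=(\partial R_{2}/\partial z_{\mu\nu\alpha\beta})\sqrt{-\det g}$ and $\partial L/\partial z_{\mu\nu\alpha}=(\partial R_{1}/\partial z_{\mu\nu\alpha})\sqrt{-\det g}$.

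For the top momentum, I would read $\partial R_{2}/\partial z_{\mu\nu\alpha\beta}$ off the coefficient of $g_{\mu\nu,\alpha\beta}$ in (\ref{R2}); since $\tfrac{1}{2}(g^{\mu\alpha}g^{\nu\beta}+g^{\mu\beta}g^{\nu\alpha}-2g^{\mu\nu}g^{\alpha\beta})$ is already symmetric under $\mu\leftrightarrow\nu$, under $\alpha\leftrightarrow\beta$, and under the pair exchange $(\mu\nu)\leftrightarrow(\alpha\beta)$, the symmetrization forced by the symmetry of the jet coordinate introduces no extra factor, and pulling back along $j^{2}\sigma$ gives (\ref{P-2-EH-L-}). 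For the lower momentum I would first compute $(j^{2}\sigma)^{\ast}(\partial L/\partial z_{\mu\nu\alpha})$ by the chain rule through the $\Gamma_{\lambda\rho\sigma}$'s, symmetrizing $\partial\Gamma_{\lambda\rho\sigma}/\partial g_{\mu\nu,\alpha}$ consistently with the $\mu\leftrightarrow\nu$ symmetry of $z_{\mu\nu\alpha}$; since $R_{1}$ is quadratic in the Christoffel symbols this produces a sum of $\Gamma$'s with cubic metric coefficients, which I would put into contravariant form via (\ref{Gamma-contravariant}) and reduce using the contraction identities (\ref{G-T})--(\ref{D-D}).

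Next I would compute the total divergence $P^{\mu\nu\alpha\beta}_{,\beta}$ by differentiating (\ref{P-2-EH-L-}), using the identity (\ref{simpl-1}) for $g^{\alpha\delta}_{,\gamma}$ together with $(\sqrt{-\det g})_{,\beta}=\Gamma^{\lambda}_{\lambda\beta}\sqrt{-\det g}=\Delta_{\beta}\sqrt{-\det g}$, the latter following from $\partial_{\beta}\ln\sqrt{-\det g}=\tfrac{1}{2}g^{\mu\nu}g_{\mu\nu,\beta}$ and the cancellation $g^{\lambda\rho}(g_{\beta\rho,\lambda}-g_{\lambda\beta,\rho})=0$. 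This again yields an expression linear in the Christoffel symbols with metric coefficients. Finally, following the definition (\ref{Ostrogradski Momenta}), I would form $P^{\mu\nu\alpha}=(j^{2}\sigma)^{\ast}(\partial L/\partial z_{\mu\nu\alpha})-P^{\mu\nu\alpha\beta}_{,\beta}$ and collapse the resulting sum, by repeated relabelling of dummy indices and use of (\ref{Gamma-contravariant}) and (\ref{G-T})--(\ref{D-D}), to the four-term expression $\tfrac{1}{2}(-g^{\mu\alpha}\Gamma^{\nu}-g^{\nu\alpha}\Gamma^{\mu}+\Gamma^{\nu\mu\alpha}+\Gamma^{\mu\nu\alpha})\sqrt{-\det g}$ of (\ref{P-1-EH-L}).

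The main obstacle is precisely this last simplification. Neither $(j^{2}\sigma)^{\ast}(\partial L/\partial z_{\mu\nu\alpha})$ nor $P^{\mu\nu\alpha\beta}_{,\beta}$ is separately symmetric in $\mu\leftrightarrow\nu$, and each is a sum of six to eight terms with cubic metric factors that do not individually fit the compact $\Gamma^{\mu\nu\alpha}$, $\Gamma^{\mu}$ notation; one must keep careful track of which index pairs are symmetrized and exploit the identities relating the distinct contractions $\Gamma^{\lambda}$, $\Delta^{\nu}$, $\Gamma_{\rho}$, $\Delta_{\mu}$ of the Christoffel symbols introduced in (\ref{G-T})--(\ref{D-D}) to see that the bulk of the terms cancel, leaving the stated answer.
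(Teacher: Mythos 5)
Your proposal is correct and follows essentially the same route as the paper: split $L$ as $(R_{1}+R_{2})\sqrt{-\det g}$ with $R_{2}$ linear in the second jets and $R_{1}$ quadratic in the Christoffel symbols, read off $P^{\mu\nu\alpha\beta}$ from the (already suitably symmetric) coefficient of $g_{\mu\nu,\alpha\beta}$, compute $\partial R_{1}/\partial g_{\mu\nu,\alpha}$ by the chain rule through $\Gamma_{\lambda\mu\nu}$ with the appropriate symmetrization, evaluate $P^{\mu\nu\alpha\beta}_{,\beta}$ using $(\sqrt{-\det g})_{,\beta}=\Delta_{\beta}\sqrt{-\det g}$ and the identity for $g^{\alpha\delta}_{,\gamma}$, and take the difference. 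The simplification you flag as the main obstacle is indeed where the paper spends its effort, and the cancellations of the $\Delta$-terms work out exactly as you anticipate.
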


\begin{proof}
First recall the definition of the Ostrogradski's momenta%
\begin{equation*}
P^{\mu \nu \alpha }=\frac{\partial L}{\partial g_{\mu \nu ,\alpha }}%
-P_{,\beta }^{\mu \nu \alpha \beta },\text{ \ \ \ }P^{\mu \nu \alpha \beta }=%
\frac{\partial L}{\partial g_{\mu \nu ,\alpha \beta }}.
\end{equation*}%
By substituting the exhibition of the Hilbert Lagrangian given in (\ref{L2}%
), we can rewrite the momenta as%
\begin{equation}
P^{\mu \nu \alpha }=\frac{\partial R_{1}}{\partial g_{\mu \nu ,\alpha }}%
\sqrt{-\det g}-P_{,\beta }^{\mu \nu \alpha \beta },\text{ \ \ \ }P^{\mu \nu
\alpha \beta }=\frac{\partial R_{2}}{\partial g_{\mu \nu ,\alpha \beta }}%
\sqrt{-\det g},  \label{P-1-2}
\end{equation}%
where $R_{1}$ and $R_{2}$ as the ones in (\ref{R1}) and (\ref{R2}),
respectively. It is immediate to observe that the second momenta is 
\begin{equation}
P^{\mu \nu \alpha \beta }=\frac{1}{2}\left( g^{\mu \alpha }g^{\nu \beta
}+g^{\mu \beta }g^{\nu \alpha }-2g^{\mu \nu }g^{\alpha \beta }\right) \sqrt{%
-\det g}.  \label{P-2-EH}
\end{equation}%
Notice from (\ref{P-1-2}) that, in order to determine the first momenta $%
P^{\mu \nu \alpha }$, we need to take the divergence of the second momenta $%
P^{\mu \nu \alpha \beta }$, given in (\ref{P-2-EH}), with respect to $%
x^{\beta }$. For this, we start with taking the partial derivative of $\sqrt{%
-\det g}$ with respect to $x^{\beta }$ as follows%
\begin{eqnarray}
\left( \sqrt{-\det g}\right) _{,\beta } &=&\frac{1}{2\sqrt{-\det g}}(-1)%
\frac{\partial }{\partial x^{\beta }}(\det g)=\frac{-1}{2\sqrt{-\det g}}%
\left( \det g\right) g^{\sigma \tau }g_{\sigma \tau ,\beta }  \notag \\
&=&\frac{1}{2}\sqrt{-\det g}g^{\sigma \tau }g_{\sigma \tau ,\beta }=\frac{1}{%
2}\sqrt{-\det g}g^{\sigma \tau }\left( \Gamma _{\sigma \tau \beta }+\Gamma
_{\tau \sigma \beta }\right)  \notag \\
&=&\sqrt{-\det g}g^{\sigma \tau }\Gamma _{\sigma \tau \beta }=\sqrt{-\det g}%
\Delta _{\beta },  \label{grad-det}
\end{eqnarray}%
where the symbol $\Delta _{\beta }$, in (\ref{D-D}), has been substituted in
the last line of the calculation. On the other hand, we take the divergence 
\begin{eqnarray}
&&\left( g^{\mu \alpha }g^{\nu \beta }+g^{\mu \beta }g^{\nu \alpha }-2g^{\mu
\nu }g^{\alpha \beta }\right) _{,\beta }  \notag \\
\text{ \ \ \ } &=&g_{,\beta }^{\mu \alpha }g^{\nu \beta }+g^{\mu \alpha
}g_{,\beta }^{\nu \beta }+g_{,\beta }^{\mu \beta }g^{\nu \alpha }+g^{\mu
\beta }g_{,\beta }^{\nu \alpha }-2g^{\mu \nu }g_{,\beta }^{\alpha \beta
}-2g_{,\beta }^{\mu \nu }g^{\alpha \beta } \\
&=&-g^{\nu \beta }g_{\beta \rho }\left( \Gamma ^{\mu \alpha \rho }+\Gamma
^{\alpha \mu \rho }\right) -g^{\mu \alpha }g_{\beta \rho }\left( \Gamma
^{\nu \beta \rho }+\Gamma ^{\beta \nu \rho }\right)  \notag \\
&&-g^{\nu \alpha }g_{\beta \rho }\left( \Gamma ^{\mu \beta \rho }+\Gamma
^{\beta \mu \rho }\right) -g^{\mu \beta }g_{\beta \rho }\left( \Gamma ^{\nu
\alpha \rho }+\Gamma ^{\alpha \nu \rho }\right)  \notag \\
&&+2g^{\mu \nu }g_{\beta \rho }\left( \Gamma ^{\alpha \beta \rho }+\Gamma
^{\beta \alpha \rho }\right) +2g^{\alpha \beta }g_{\beta \rho }(\Gamma ^{\mu
\nu \rho }+\Gamma ^{\nu \mu \rho })  \notag \\
&=&-\Gamma ^{\mu \alpha \nu }-\Gamma ^{\alpha \mu \nu }-g^{\mu \alpha
}(\Gamma ^{\nu }+\Delta ^{\nu })-g^{\nu \alpha }(\Gamma ^{\mu }+\Delta ^{\mu
})  \notag \\
&&-\Gamma ^{\nu \alpha \mu }-\Gamma ^{\alpha \nu \mu }+2g^{\mu \nu }\left(
\Gamma ^{\alpha }+\Delta ^{\alpha }\right) +2(\Gamma ^{\mu \nu \alpha
}+\Gamma ^{\nu \mu \alpha })  \notag \\
&=&2g^{\mu \nu }\left( \Gamma ^{\alpha }+\Delta ^{\alpha }\right) -g^{\mu
\alpha }(\Gamma ^{\nu }+\Delta ^{\nu })-g^{\nu \alpha }(\Gamma ^{\mu
}+\Delta ^{\mu })+  \notag \\
&&+2(\Gamma ^{\mu \nu \alpha }+\Gamma ^{\nu \mu \alpha })-(\Gamma ^{\mu
\alpha \nu }+\Gamma ^{\nu \alpha \mu })-(\Gamma ^{\alpha \mu \nu }+\Gamma
^{\alpha \nu \mu }),  \label{div-}
\end{eqnarray}%
where, the identities (\ref{simpl-1}) and (\ref{simpl-2}) have been used in
the second line, and the symbols $\Gamma ^{\alpha }$ and $\Delta ^{\alpha }$%
, defined in (\ref{G-T}) and (\ref{D-T}) have been substituted in the third
line. In the light of the calculations in (\ref{grad-det}) and (\ref{div-}),
the divergence of the second momenta $P^{\mu \nu \alpha \beta }$ turns out
to be 
\begin{eqnarray}
P_{,\beta }^{\mu \nu \alpha \beta } &=&\frac{1}{2}\left( g^{\mu \alpha
}g^{\nu \beta }+g^{\mu \beta }g^{\nu \alpha }-2g^{\mu \nu }g^{\alpha \beta
}\right) \left( \sqrt{-\det g}\right) _{,\beta }  \notag \\
&&+\frac{1}{2}\left( g^{\mu \alpha }g^{\nu \beta }+g^{\mu \beta }g^{\nu
\alpha }-2g^{\mu \nu }g^{\alpha \beta }\right) _{,\beta }\sqrt{-\det g} 
\notag \\
&=&\frac{1}{2}\left( g^{\mu \alpha }g^{\nu \beta }+g^{\mu \beta }g^{\nu
\alpha }-2g^{\mu \nu }g^{\alpha \beta }\right) \sqrt{-\det g}\Delta _{\beta }
\notag \\
&&+\frac{1}{2}(2g^{\mu \nu }\left( \Gamma ^{\alpha }+\Delta ^{\alpha
}\right) -g^{\mu \alpha }(\Gamma ^{\nu }+\Delta ^{\nu })-g^{\nu \alpha
}(\Gamma ^{\mu }+\Delta ^{\mu }))\sqrt{-\det g}  \notag \\
&&+\frac{1}{2}(\Gamma ^{\mu \nu \alpha }+\Gamma ^{\nu \mu \alpha }-2\Gamma
^{\alpha \mu \nu })\sqrt{-\det g}  \notag \\
&=&\frac{1}{2}\left( 2g^{\mu \nu }\Gamma ^{\alpha }-g^{\mu \alpha }\Gamma
^{\nu }-g^{\nu \alpha }\Gamma ^{\mu }+\Gamma ^{\mu \nu \alpha }+\Gamma ^{\nu
\mu \alpha }-2\Gamma ^{\alpha \mu \nu }\right) \sqrt{-\det g},  \label{divP}
\end{eqnarray}%
where the identity $g^{\nu \beta }\Delta _{\beta }=\Delta ^{\nu }$ has been
used. Notice that all the terms involving $\Delta ^{\nu }$ canceling each
other in the calculation. Let us now concentrate on the first term $\partial
R_{1}/\partial g_{\mu \nu ,\alpha }\sqrt{-\det g}$ in the momenta $P^{\mu
\nu \alpha }$, applying the chain rule, we have that%
\begin{equation*}
\frac{\partial R_{1}}{\partial g_{\alpha \beta ,\gamma }}\sqrt{-\det g}=%
\frac{\partial R_{1}}{\partial \Gamma _{\lambda \mu \nu }}\frac{\partial
\Gamma _{\lambda \mu \nu }}{\partial g_{\alpha \beta ,\gamma }}\sqrt{-\det g}%
.
\end{equation*}%
Notice that, the partial derivative of $R_{1}$ with respect to the
Christoffel symbol of the first kind $\Gamma _{\lambda \mu \nu }$ is
computed to be%
\begin{equation}
\frac{\partial R_{1}}{\partial \Gamma _{\lambda \mu \nu }}=2\Gamma _{\alpha
\beta \gamma }\left( -g^{\lambda \alpha }g^{\mu \nu }g^{\beta \gamma
}+g^{\lambda \alpha }g^{\mu \gamma }g^{\nu \beta }\right) ,  \label{R-1-der}
\end{equation}%
whereas the partial derivative of $\Gamma _{\lambda \mu \nu }$ with respect
to $g_{\alpha \beta ,\gamma }$ is 
\begin{equation}
\frac{\partial \Gamma _{\lambda \mu \nu }}{\partial g_{\alpha \beta ,\gamma }%
}=\frac{1}{4}\left( \left( \delta _{\mu }^{\alpha }\delta _{\lambda }^{\beta
}\delta _{\nu }^{\gamma }+\delta _{\nu }^{\alpha }\delta _{\lambda }^{\beta
}\delta _{\mu }^{\gamma }-\delta _{\mu }^{\alpha }\delta _{\nu }^{\beta
}\delta _{\lambda }^{\gamma }\right) +\left( \delta _{\mu }^{\beta }\delta
_{\lambda }^{\alpha }\delta _{\nu }^{\gamma }+\delta _{\nu }^{\beta }\delta
_{\lambda }^{\alpha }\delta _{\mu }^{\gamma }-\delta _{\mu }^{\beta }\delta
_{\nu }^{\alpha }\delta _{\lambda }^{\gamma }\right) \right) .  \label{G-G}
\end{equation}%
Here, the factor $1/4$ is the manifestation of the symmetry of the metric
tensor. We multiply the expressions (\ref{R-1-der}) and (\ref{G-G}) and
arrange the terms, so that we arrive at%
\begin{equation}
\frac{\partial R_{2}}{\partial g_{\mu \nu ,\alpha }}\sqrt{-\det g}=g^{\mu
\nu }\Gamma ^{\alpha }-g^{\mu \alpha }\Gamma ^{\nu }-g^{\nu \alpha }\Gamma
^{\mu }-\Gamma ^{\alpha \mu \nu }+\Gamma ^{\nu \mu \alpha }+\Gamma ^{\mu \nu
\alpha }\sqrt{-\det g}.  \label{R-G}
\end{equation}%
Now we are ready to write the first momenta $P^{\mu \nu \alpha }$, for this
simply take the difference of (\ref{R-G}) and (\ref{divP}), this gives 
\begin{eqnarray}
P^{\mu \nu \alpha } &=&\frac{\partial R_{2}}{\partial g_{\mu \nu ,\alpha }}%
\sqrt{-\det g}-P_{,\beta }^{\mu \nu \alpha \beta }  \notag \\
&=&(g^{\mu \nu }\Gamma ^{\alpha }-g^{\mu \alpha }\Gamma ^{\nu }-g^{\nu
\alpha }\Gamma ^{\mu }-\Gamma ^{\alpha \mu \nu }+\Gamma ^{\nu \mu \alpha
}+\Gamma ^{\mu \nu \alpha })\sqrt{-\det g}  \notag \\
&&-\frac{1}{2}\left( 2g^{\mu \nu }\Gamma ^{\alpha }-g^{\mu \alpha }\Gamma
^{\nu }-g^{\nu \alpha }\Gamma ^{\mu }-2\Gamma ^{\alpha \mu \nu }+\Gamma
^{\mu \nu \alpha }+\Gamma ^{\nu \mu \alpha }\right) \sqrt{-\det g}  \notag \\
&=&\frac{1}{2}\left( -g^{\mu \alpha }\Gamma ^{\nu }-g^{\nu \alpha }\Gamma
^{\mu }+\Gamma ^{\nu \mu \alpha }+\Gamma ^{\mu \nu \alpha }\right) \sqrt{%
-\det g}  \label{P-1-EH}
\end{eqnarray}%
where the first and fourth terms in the second and the third lines are
canceling each other, respectively.
\end{proof}

We are now ready to prove the Proposition $5$. In the present framework, the
De Donder form turns out to be 
\begin{equation*}
\Theta _{\mathrm{Hilbert}}=R_{1}\sqrt{-\det g}\mathrm{d}_{4}x+R_{2}\sqrt{%
-\det g}\mathrm{d}_{4}x+\Xi _{\mathrm{Hilbert}},
\end{equation*}%
where $\Xi _{\mathrm{Hilbert}}$ is the boundary form induced by the Hilbert
Lagrangian. Explicitly, the boundary form is 
\begin{equation}
\Xi _{\mathrm{Hilbert}}=(P^{\alpha \beta \mu }\mathrm{d}g_{\alpha \beta
}+P^{\alpha \beta \mu \nu }\mathrm{d}g_{\alpha \beta ,\nu })\wedge (\frac{%
{\small \partial }}{{\small \partial x}^{\mu }}%
%TCIMACRO{%
%\TeXButton{lefthook}{{\mbox{$ \rule {5pt} {.5pt}\rule {.5pt} {6pt} \, $}}}}%
%BeginExpansion
{\mbox{$ \rule {5pt} {.5pt}\rule {.5pt} {6pt} \, $}}%
%EndExpansion
\mathrm{d}_{4}x)-(P^{\alpha \beta \mu }g_{\alpha \beta ,\mu }+P^{\alpha
\beta \mu \nu }g_{\alpha \beta ,\nu \mu })\mathrm{d}_{4}x.  \notag
\end{equation}%
By substituting the conjugate momenta $P^{\mu \nu \alpha }$ and $P^{\mu \nu
\alpha \beta }$, respectively given in (\ref{P-1-EH-L}) and (\ref{P-2-EH-L-}%
), one has 
\begin{eqnarray}
\Xi _{\mathrm{Hilbert}} &=&\frac{1}{2}\left( -g^{\alpha \mu }\Gamma ^{\beta
}-g^{\beta \mu }\Gamma ^{\alpha }+\Gamma ^{\beta \alpha \mu }+\Gamma
^{\alpha \beta \mu }\right) \sqrt{-\det g}\mathrm{d}g_{\alpha \beta }\wedge (%
\frac{{\small \partial }}{{\small \partial x}^{\mu }}%
%TCIMACRO{%
%\TeXButton{lefthook}{{\mbox{$ \rule {5pt} {.5pt}\rule {.5pt} {6pt} \, $}}}}%
%BeginExpansion
{\mbox{$ \rule {5pt} {.5pt}\rule {.5pt} {6pt} \, $}}%
%EndExpansion
\mathrm{d}_{4}x)  \notag \\
&&+\frac{1}{2}\left( g^{\alpha \mu }g^{\beta \nu }+g^{\alpha \nu }g^{\beta
\mu }-2g^{\alpha \beta }g^{\mu \nu }\right) \sqrt{-\det g}\mathrm{d}%
g_{\alpha \beta ,\nu }\wedge (\frac{{\small \partial }}{{\small \partial x}%
^{\mu }}%
%TCIMACRO{%
%\TeXButton{lefthook}{{\mbox{$ \rule {5pt} {.5pt}\rule {.5pt} {6pt} \, $}}}}%
%BeginExpansion
{\mbox{$ \rule {5pt} {.5pt}\rule {.5pt} {6pt} \, $}}%
%EndExpansion
\mathrm{d}_{4}x)  \notag \\
&&-\frac{1}{2}\left( -g^{\alpha \mu }\Gamma ^{\beta }-g^{\beta \mu }\Gamma
^{\alpha }+\Gamma ^{\beta \alpha \mu }+\Gamma ^{\alpha \beta \mu }\right) 
\sqrt{-\det g}g_{\alpha \beta ,\mu }\mathrm{d}_{4}x  \notag \\
&&-\frac{1}{2}\left( g^{\alpha \mu }g^{\beta \nu }+g^{\alpha \nu }g^{\beta
\mu }-2g^{\alpha \beta }g^{\mu \nu }\right) \sqrt{-\det g}g_{\alpha \beta
,\nu \mu }\mathrm{d}_{4}x.
\end{eqnarray}%
Substitution of the boundary form $\Xi _{\mathrm{Hilbert}}$ and the terms $%
R_{1}$ and $R_{2}$ in (\ref{R1}) and (\ref{R2}) leads to the following
expression of the De Donder form 
\begin{eqnarray*}
\Theta _{\mathrm{Hilbert}} &=&\frac{1}{2}g_{\mu \nu ,\alpha \beta }\left(
g^{\mu \alpha }g^{\nu \beta }+g^{\mu \beta }g^{\nu \alpha }-2g^{\mu \nu
}g^{\alpha \beta }\right) \sqrt{-\det g}\mathrm{d}_{4}x \\
&&+\Gamma _{\lambda \mu \nu }\Gamma _{\alpha \beta \gamma }\left(
-g^{\lambda \alpha }g^{\mu \nu }g^{\beta \gamma }+g^{\lambda \alpha }g^{\mu
\gamma }g^{\nu \beta }\right) \sqrt{-\det g}\mathrm{d}_{4}x \\
&&+\frac{1}{2}\left( -g^{\alpha \mu }\Gamma ^{\beta }-g^{\beta \mu }\Gamma
^{\alpha }+\Gamma ^{\beta \alpha \mu }+\Gamma ^{\alpha \beta \mu }\right) 
\sqrt{-\det g}\mathrm{d}g_{\alpha \beta }\wedge (\frac{{\small \partial }}{%
{\small \partial x}^{\mu }}%
%TCIMACRO{%
%\TeXButton{lefthook}{{\mbox{$ \rule {5pt} {.5pt}\rule {.5pt} {6pt} \, $}}}}%
%BeginExpansion
{\mbox{$ \rule {5pt} {.5pt}\rule {.5pt} {6pt} \, $}}%
%EndExpansion
\mathrm{d}_{4}x) \\
&&+\frac{1}{2}\left( g^{\alpha \mu }g^{\beta \nu }+g^{\alpha \nu }g^{\beta
\mu }-2g^{\alpha \beta }g^{\mu \nu }\right) \sqrt{-\det g}\mathrm{d}%
g_{\alpha \beta ,\nu }\wedge (\frac{{\small \partial }}{{\small \partial x}%
^{\mu }}%
%TCIMACRO{%
%\TeXButton{lefthook}{{\mbox{$ \rule {5pt} {.5pt}\rule {.5pt} {6pt} \, $}}}}%
%BeginExpansion
{\mbox{$ \rule {5pt} {.5pt}\rule {.5pt} {6pt} \, $}}%
%EndExpansion
\mathrm{d}_{4}x) \\
&&-\frac{1}{2}\left( -g^{\alpha \mu }\Gamma ^{\beta }-g^{\beta \mu }\Gamma
^{\alpha }+\Gamma ^{\beta \alpha \mu }+\Gamma ^{\alpha \beta \mu }\right)
g_{\alpha \beta ,\mu }\mathrm{d}_{4}x \\
&&-\frac{1}{2}g_{\alpha \beta ,\nu \mu }\left( g^{\alpha \mu }g^{\beta \nu
}+g^{\alpha \nu }g^{\beta \mu }-2g^{\alpha \beta }g^{\mu \nu }\right) \sqrt{%
-\det g}\mathrm{d}_{4}x.
\end{eqnarray*}%
Notice that, the first and the last terms are canceling since they are minus
of the each other. So that there remain 
\begin{eqnarray*}
\Theta _{\mathrm{Hilbert}} &=&\left( \Gamma _{\lambda \mu \nu }\Gamma
_{\alpha \beta \gamma }\left( -g^{\lambda \alpha }g^{\mu \nu }g^{\beta
\gamma }+g^{\lambda \alpha }g^{\mu \gamma }g^{\nu \beta }\right) \right) 
\sqrt{-\det g}\mathrm{d}_{4}x \\
&&-\frac{1}{2}\left( -g^{\alpha \mu }\Gamma ^{\beta }-g^{\beta \mu }\Gamma
^{\alpha }+\Gamma ^{\beta \alpha \mu }+\Gamma ^{\alpha \beta \mu }\right)
g_{\alpha \beta ,\mu }\sqrt{-\det g}\mathrm{d}_{4}x \\
&&+\frac{1}{2}\left( -g^{\alpha \mu }\Gamma ^{\beta }-g^{\beta \mu }\Gamma
^{\alpha }+\Gamma ^{\beta \alpha \mu }+\Gamma ^{\alpha \beta \mu }\right) 
\sqrt{-\det g}\mathrm{d}g_{\alpha \beta }\wedge (\frac{{\small \partial }}{%
{\small \partial x}^{\mu }}%
%TCIMACRO{%
%\TeXButton{lefthook}{{\mbox{$ \rule {5pt} {.5pt}\rule {.5pt} {6pt} \, $}}}}%
%BeginExpansion
{\mbox{$ \rule {5pt} {.5pt}\rule {.5pt} {6pt} \, $}}%
%EndExpansion
\mathrm{d}_{4}x) \\
&&+\frac{1}{2}\left( g^{\alpha \mu }g^{\beta \nu }+g^{\alpha \nu }g^{\beta
\mu }-2g^{\alpha \beta }g^{\mu \nu }\right) \sqrt{-\det g}\mathrm{d}%
g_{\alpha \beta ,\nu }\wedge (\frac{{\small \partial }}{{\small \partial x}%
^{\mu }}%
%TCIMACRO{%
%\TeXButton{lefthook}{{\mbox{$ \rule {5pt} {.5pt}\rule {.5pt} {6pt} \, $}}}}%
%BeginExpansion
{\mbox{$ \rule {5pt} {.5pt}\rule {.5pt} {6pt} \, $}}%
%EndExpansion
\mathrm{d}_{4}x).
\end{eqnarray*}%
Let us simplify concentrate on the second line of this expression. A simple
calculations give 
\begin{eqnarray*}
&&\frac{1}{2}\left( -g^{\alpha \mu }\Gamma ^{\beta }-g^{\beta \mu }\Gamma
^{\alpha }+\Gamma ^{\beta \alpha \mu }+\Gamma ^{\alpha \beta \mu }\right)
g_{\alpha \beta ,\mu } \\
&=&\frac{1}{2}\left( -g^{\alpha \mu }\Gamma ^{\beta }-g^{\beta \mu }\Gamma
^{\alpha }+\Gamma ^{\beta \alpha \mu }+\Gamma ^{\alpha \beta \mu }\right)
\left( \Gamma _{\alpha \beta \mu }+\Gamma _{\beta \alpha \mu }\right) \\
&=&-\frac{1}{2}g^{\alpha \mu }g^{uw}g^{\beta r}\Gamma _{ruw}\Gamma _{\alpha
\beta \mu }-\frac{1}{2}g^{\alpha \mu }g^{uw}g^{\beta r}\Gamma _{ruw}\Gamma
_{\beta \alpha \mu }-\frac{1}{2}g^{\beta \mu }g^{uw}g^{\alpha r}\Gamma
_{ruw}\Gamma _{\alpha \beta \mu } \\
&&-\frac{1}{2}g^{\beta \mu }g^{uw}g^{\alpha r}\Gamma _{ruw}\Gamma _{\beta
\alpha \mu }+\frac{1}{2}g^{\beta a}g^{\alpha b}g^{\mu c}\Gamma _{abc}\Gamma
_{\alpha \beta \mu }+\frac{1}{2}g^{\beta a}g^{\alpha b}g^{\mu c}\Gamma
_{abc}\Gamma _{\beta \alpha \mu } \\
&&+\frac{1}{2}g^{\alpha a}g^{\beta b}g^{\mu c}\Gamma _{abc}\Gamma _{\alpha
\beta \mu }+\frac{1}{2}g^{\alpha a}g^{\beta b}g^{\mu c}\Gamma _{abc}\Gamma
_{\beta \alpha \mu } \\
&=&\frac{1}{2}\Gamma _{\lambda \mu \nu }\Gamma _{\alpha \beta \gamma
}(-g^{\alpha \gamma }g^{\mu \nu }g^{\beta \lambda }-g^{\beta \gamma }g^{\mu
\nu }g^{\alpha \lambda }-g^{\beta \gamma }g^{\mu \nu }g^{\alpha \lambda
}-g^{\alpha \gamma }g^{\mu \nu }g^{\beta \lambda }) \\
&&+\frac{1}{2}\Gamma _{\lambda \mu \nu }\Gamma _{\alpha \beta \gamma }\left(
g^{\beta \lambda }g^{\alpha \mu }g^{\gamma \nu }+g^{\alpha \lambda }g^{\beta
\mu }g^{\gamma \nu }+g^{\alpha \lambda }g^{\beta \mu }g^{\gamma \nu
}+g^{\beta \lambda }g^{\alpha \mu }g^{\gamma \nu }\right) \\
&=&\Gamma _{\lambda \mu \nu }\Gamma _{\alpha \beta \gamma }(g^{\beta \lambda
}g^{\alpha \mu }g^{\gamma \nu }+g^{\alpha \lambda }g^{\beta \mu }g^{\gamma
\nu }-g^{\alpha \gamma }g^{\mu \nu }g^{\beta \lambda }-g^{\beta \gamma
}g^{\mu \nu }g^{\alpha \lambda }),
\end{eqnarray*}%
where we used the identity (\ref{simpl-1}) in the first line, and the
identity (\ref{Gamma-contravariant}) in the third line, and in the fourth
line we sum up the similar terms. This simplification reads that the
coefficient of the basis $\sqrt{-\det g}\mathrm{d}_{4}x$ can be written as 
\begin{align*}
&\Gamma _{\lambda \mu \nu }\Gamma _{\alpha \beta \gamma }\left( -g^{\lambda
\alpha }g^{\mu \nu }g^{\beta \gamma }+g^{\lambda \alpha }g^{\mu \gamma
}g^{\nu \beta }\right) \\
&-\Gamma _{\lambda \mu \nu }\Gamma _{\alpha \beta \gamma }(g^{\beta \lambda
}g^{\alpha \mu }g^{\gamma \nu }+g^{\alpha \lambda }g^{\beta \mu }g^{\gamma
\nu }-g^{\alpha \gamma }g^{\mu \nu }g^{\beta \lambda }-g^{\beta \gamma
}g^{\mu \nu }g^{\alpha \lambda }) \\
&=\Gamma _{\lambda \mu \nu }\Gamma _{\alpha \beta \gamma }g^{\lambda \alpha
}g^{\mu \gamma }g^{\nu \beta }+\Gamma _{\lambda \mu \nu }\Gamma _{\alpha
\beta \gamma }g^{\alpha \gamma }g^{\mu \nu }g^{\beta \lambda }-\Gamma
_{\lambda \mu \nu }\Gamma _{\alpha \beta \gamma }g^{\beta \lambda }g^{\alpha
\mu }g^{\gamma \nu }-\Gamma _{\lambda \mu \nu }\Gamma _{\alpha \beta \gamma
}g^{\alpha \lambda }g^{\beta \mu }g^{\gamma \nu } \\
&=\Gamma _{\lambda \mu \nu }\Gamma _{\alpha \beta \gamma }g^{\lambda \alpha
}g^{\mu \gamma }g^{\nu \beta }+\Gamma _{\lambda \mu \nu }\Gamma _{\alpha
\beta \gamma }g^{\alpha \gamma }g^{\mu \nu }g^{\beta \lambda }-\Gamma
_{\lambda \mu \nu }\Gamma _{\alpha \beta \gamma }g^{\beta \lambda }g^{\alpha
\mu }g^{\gamma \nu }-\Gamma _{\lambda \mu \nu }\Gamma _{\alpha \beta \gamma
}g^{\alpha \lambda }g^{\beta \nu }g^{\gamma \mu } \\
&=\Gamma _{\lambda \mu \nu }\Gamma _{\alpha \beta \gamma }g^{\alpha \gamma
}g^{\mu \nu }g^{\beta \lambda }-\Gamma _{\lambda \mu \nu }\Gamma _{\alpha
\beta \gamma }g^{\beta \lambda }g^{\alpha \mu }g^{\gamma \nu },
\end{align*}%
where we have canceled the first and last terms in the first line, and used
the symmetry of the Christoffel symbol in the third line. Eventually, the De
Donder form for Hilbert Lagrangian becomes 
\begin{eqnarray*}
\Theta _{\mathrm{Hilbert}} &=&\Gamma _{\lambda \mu \nu }\Gamma _{\alpha
\beta \gamma }\left( g^{\alpha \gamma }g^{\mu \nu }g^{\beta \lambda
}-g^{\beta \lambda }g^{\alpha \mu }g^{\gamma \nu }\right) \sqrt{-\det g}%
\mathrm{d}_{4}x \\
&&+\frac{1}{2}\left( -g^{\alpha \mu }\Gamma ^{\beta }-g^{\beta \mu }\Gamma
^{\alpha }+\Gamma ^{\beta \alpha \mu }+\Gamma ^{\alpha \beta \mu }\right) 
\sqrt{-\det g}\mathrm{d}g_{\alpha \beta }\wedge (\frac{{\small \partial }}{%
{\small \partial x}^{\mu }}%
%TCIMACRO{%
%\TeXButton{lefthook}{{\mbox{$ \rule {5pt} {.5pt}\rule {.5pt} {6pt} \, $}}}}%
%BeginExpansion
{\mbox{$ \rule {5pt} {.5pt}\rule {.5pt} {6pt} \, $}}%
%EndExpansion
\mathrm{d}_{4}x) \\
&&+\frac{1}{2}\left( g^{\alpha \mu }g^{\beta \nu }+g^{\alpha \nu }g^{\beta
\mu }-2g^{\alpha \beta }g^{\mu \nu }\right) \sqrt{-\det g}\mathrm{d}%
g_{\alpha \beta ,\nu }\wedge (\frac{{\small \partial }}{{\small \partial x}%
^{\mu }}%
%TCIMACRO{%
%\TeXButton{lefthook}{{\mbox{$ \rule {5pt} {.5pt}\rule {.5pt} {6pt} \, $}}}}%
%BeginExpansion
{\mbox{$ \rule {5pt} {.5pt}\rule {.5pt} {6pt} \, $}}%
%EndExpansion
\mathrm{d}_{4}x).
\end{eqnarray*}%
A comparison of this form and the one in (\ref{3.9b}) shows that the proof
of the Proposition 5 in Section 3.2 is achieved.

\subsection{Proof of Proposition 6}

Referring to the Proposition \ref{Proposition 3.3}, to prove the Proposition %
\ref{Proposition 3.2} we need to just focus on the De Donder form (\ref{3.12}%
) for the matter. See that it is composed of a Lagrangian term and the
boundary term. We label the boundary term as 
\begin{equation*}
\Xi _{3}=q^{\mu }(\mathrm{d}t-z_{\nu }\mathrm{d}x^{\nu })\wedge (\partial
_{\mu }%
%TCIMACRO{%
%\TeXButton{lefthook}{{\mbox{$ \rule {5pt} {.5pt}\rule {.5pt} {6pt} \, $}}}}%
%BeginExpansion
{\mbox{$ \rule {5pt} {.5pt}\rule {.5pt} {6pt} \, $}}%
%EndExpansion
\mathrm{d}_{4}x),
\end{equation*}%
where the coefficient function $q^{\mu }$ reads (\ref{gM}) for a local
section. Let us first show that, $\Xi _{3}$ is invariant under a coordinate
transformation on the base manifold $M$ given in (\ref{gM}). See that 
\begin{eqnarray*}
q^{\mu } &=&g^{\mu \nu }\phi _{,\nu }\sqrt{-\det \left( g_{\mu \nu }\right) }%
=g^{\mu ^{\prime }\nu ^{\prime }}x_{,\mu ^{\prime }}^{\mu }x_{,\nu ^{\prime
}}^{\nu }\phi _{,\alpha ^{\prime }}x_{,\nu }^{\alpha ^{\prime }}\sqrt{-\det
(g_{\mu ^{\prime }\nu ^{\prime }}x_{,\mu }^{\mu ^{\prime }}x_{,\nu }^{\nu
^{\prime }})} \\
&=&\det (x_{,\lambda }^{\lambda ^{\prime }})x_{,\mu ^{\prime }}^{\mu }g^{\mu
^{\prime }\nu ^{\prime }}\phi _{,\nu ^{\prime }}\sqrt{-\det (g_{\mu ^{\prime
}\nu ^{\prime }})} \\
&=&q^{\mu ^{\prime }}\det (x_{,\lambda }^{\lambda ^{\prime }})x_{,\mu
^{\prime }}^{\mu }.
\end{eqnarray*}%
For the basis we recall the transformation in (\ref{Wedge-trf}), and compute 
\begin{equation*}
(\mathrm{d}t-z_{\nu }\mathrm{d}x^{\nu })=(\mathrm{d}t-z_{\nu ^{\prime
}}x_{,\nu }^{\nu ^{\prime }}x_{,\alpha }^{\nu }\mathrm{d}x^{\alpha ^{\prime
}})=\mathrm{d}t-z_{\nu ^{\prime }}\mathrm{d}x^{\nu ^{\prime }}.
\end{equation*}%
Collecting all these, one sees that formulation of the boundary form remains
the same under coordinate transformation%
\begin{eqnarray*}
&&q^{\mu }\sqrt{-\det g}(\mathrm{d}t-z_{\nu }\mathrm{d}x^{\nu })\wedge
(\partial _{\mu }%
%TCIMACRO{%
%\TeXButton{lefthook}{{\mbox{$ \rule {5pt} {.5pt}\rule {.5pt} {6pt} \, $}}}}%
%BeginExpansion
{\mbox{$ \rule {5pt} {.5pt}\rule {.5pt} {6pt} \, $}}%
%EndExpansion
\mathrm{d}_{4}x) \\
&=&q^{\mu ^{\prime }}\det (x_{,\lambda }^{\lambda ^{\prime }})x_{,\mu
^{\prime }}^{\mu }(\mathrm{d}t-z_{\nu ^{\prime }}\mathrm{d}x^{\nu ^{\prime
}})\wedge (\det (x_{,\lambda ^{\prime }}^{\lambda })x_{,\mu }^{\alpha
^{\prime }}\partial _{\alpha ^{\prime }}%
%TCIMACRO{%
%\TeXButton{lefthook}{{\mbox{$ \rule {5pt} {.5pt}\rule {.5pt} {6pt} \, $}}}}%
%BeginExpansion
{\mbox{$ \rule {5pt} {.5pt}\rule {.5pt} {6pt} \, $}}%
%EndExpansion
\mathrm{d}_{4}x^{\prime }) \\
&=&q^{\mu ^{\prime }}x_{,\mu ^{\prime }}^{\mu }x_{,\mu }^{\alpha ^{\prime }}%
\sqrt{-\det (g^{\prime })}(\mathrm{d}t-z_{\nu ^{\prime }}\mathrm{d}x^{\nu
^{\prime }})\wedge (\partial _{\alpha ^{\prime }}%
%TCIMACRO{%
%\TeXButton{lefthook}{{\mbox{$ \rule {5pt} {.5pt}\rule {.5pt} {6pt} \, $}}}}%
%BeginExpansion
{\mbox{$ \rule {5pt} {.5pt}\rule {.5pt} {6pt} \, $}}%
%EndExpansion
\mathrm{d}_{4}x^{\prime }) \\
&=&q^{\mu ^{\prime }}(\mathrm{d}t-z_{\nu ^{\prime }}\mathrm{d}x^{\nu
^{\prime }})\wedge (\partial _{\mu ^{\prime }}%
%TCIMACRO{%
%\TeXButton{lefthook}{{\mbox{$ \rule {5pt} {.5pt}\rule {.5pt} {6pt} \, $}}}}%
%BeginExpansion
{\mbox{$ \rule {5pt} {.5pt}\rule {.5pt} {6pt} \, $}}%
%EndExpansion
\mathrm{d}_{4}x^{\prime }).
\end{eqnarray*}%
On the other hand, the Lagrangian term is%
\begin{eqnarray*}
L_{\mathrm{matter}}^{g}\mathrm{d}_{4}x &=&(\frac{{\small 1}}{{\small 2}}%
g^{\mu \nu }z_{\mu }z_{\nu }+V(z))\sqrt{-\det g}\mathrm{d}_{4}x \\
&=&(\frac{{\small 1}}{{\small 2}}g^{\mu ^{\prime }\nu ^{\prime }}x_{,\mu
^{\prime }}^{\mu }x_{,\nu ^{\prime }}^{\nu }z_{\alpha ^{\prime }}z_{\beta
^{\prime }}x_{,\mu }^{\alpha ^{\prime }}x_{,\nu }^{\beta ^{\prime
}}+V(z^{\prime }))\sqrt{-\det g^{\prime }}\mathrm{d}_{4}x^{\prime } \\
&=&(\frac{{\small 1}}{{\small 2}}g^{\mu ^{\prime }\nu ^{\prime }}z_{\mu
^{\prime }}z_{\nu ^{\prime }}+V(z^{\prime }))\sqrt{-\det g^{\prime }}\mathrm{%
d}_{4}x^{\prime },
\end{eqnarray*}%
where we have employed the fact that $\sqrt{-\det g}\mathrm{d}_{4}x$ is
invariant under the coordinate transformation. Notice that, assumption of
the invariance of the function $V$ leads to the invariance of $L_{\mathrm{%
matter}}^{g}\mathrm{d}_{4}x$.\smallskip

\section{Acknowledgement}

The authors are grateful to Department of Mathematics and Statistics,
University of Victoria, for hosting a Workshop on Geometry and Mechanics, 16
-20 July, 2018, where they began collaboration on this paper.

The visit of (OE) was supported by T\"{U}B\.{I}TAK (the Scientific and
Technological Research Council of Turkey) under the project title "Matched
pairs of Lagrangian and Hamiltonian Systems" with the project number 117F426.

\bigskip

\end{document}